\tikzstyle{every picture}+=[>=stealth,initial text=, node distance=2cm]
\newcommand{\safety}[1]{\ensuremath{\mathsf{SAFE}_{#1}}\xspace}
\newcommand{\reachability}[1]{\ensuremath{\mathsf{REACH}_{#1}}\xspace}
\newcommand{\denote}[1]{\ensuremath{\left[\!\left[#1\right]\!\right]}}
\newcommand{\revert}[1]{\overleftarrow{#1}}
\newcommand{\asynctrans}{\ensuremath{\rightarrow_{\textit{as}}}}
\newcommand{\ViewSet}[2]{\ensuremath{\mathcal{V}_{#2,#1}}}
\newcommand{\nat}{\mathbb{N}}
\newcommand{\zed}{\mathbb{Z}}
\newcommand{\tuple}[1]{\langle #1 \rangle}
\newcommand{\set}[1]{\{#1\}}
\newcommand{\config}{\ensuremath{\mathbf{p}}}
\newcommand{\ViewR}[2]{\ensuremath{\mathbf{V}_{#1}[#2\rightarrow]}}
\newcommand{\ViewL}[2]{\ensuremath{\mathbf{V}_{#1}[\leftarrow#2]}}
\newcommand{\aterm}{\mathtt{t}}
\newcommand{\mov}{\mathit{move}}
\newcommand{\modulo}{\odot}
\newcommand{\semtrans}{\hookrightarrow}
\newcommand{\synchtrans}{\Rightarrow}
\newcommand{\asynchtrans}{\leadsto}
\newcommand{\runs}[2]{\ensuremath{\mathbf{Runs}_{#2}(#1)}}
\newcommand{\lo}{\mathbf{L}}
\newcommand{\mo}{\mathbf{M}}
\newcommand{\internalState}{\mathbf{s}}
\newcommand{\Views}{\mathbf{V}}
\newcommand{\ssync}{\mathrm{ss}}
\newcommand{\async}{\mathrm{as}}
\newcommand{\sync}{\mathrm{s}}
\newcommand{\post}{\textbf{Post}}
\newcommand{\poststar}[1]{\mathbf{Post}_{#1}^*}
\newcommand{\actif}[1]{\ensuremath{\textrm{Act}(#1)}}
\newcommand{\free}{\ensuremath{\mathsf{F}}\xspace}
\newcommand{\robot}{\ensuremath{\mathsf{R}}\xspace}
\newcommand{\block}[1]{\ensuremath{\mathsf{B}_{#1}}\xspace}
\newcommand{\stable}{\ensuremath{\mathsf{stable}}\xspace}
\newcommand{\moving}{\ensuremath{\mathsf{moving1}}\xspace}
\newcommand{\movingtwo}{\ensuremath{\mathsf{moving2}}\xspace}
\newcommand{\movingthree}{\ensuremath{\mathsf{moving3}}\xspace}
\newcommand{\stabili}{\ensuremath{\mathsf{stabilizing1}}\xspace}
\newcommand{\stabilitwo}{\ensuremath{\mathsf{stabilizing2}}\xspace}
\newcommand{\robots}{\ensuremath{\mathcal{R}}}
\newcommand{\halt}{\ensuremath{\mathsf{HALT}}}
\newcommand{\configview}[1]{\mathtt{ConfigView}_{#1}}
\newcommand{\viewsym}{\mathtt{ViewSym}}
\newcommand{\presburgmove}{\mathtt{Move}}
\newcommand{\semisyncpost}{\mathtt{SemiSyncPost}}
\newcommand{\syncpost}{\mathtt{SyncPost}}
\newcommand{\asyncpost}{\mathtt{AsyncPost}}
  \newenvironment{proofsketch}{%
  \noindent {\it Sketch of proof.}\hspace*{.3cm}
}{%
   
 }
 \newtheorem{corollary}{Corollary}{\bfseries}{\itshape}
\newtheorem{definition}{Definition}{\bfseries}{\itshape}
\newtheorem{lemma}{Lemma}{\bfseries}{\itshape}
{\bfseries}{\itshape}
\newtheorem{theorem}{Theorem}{\bfseries}{\itshape}
{\itshape}{\rmfamily}
\newenvironment{proof}{\noindent \textit{Proof:}}{\newline$\Box$}
\newtheorem{myclaim}{Claim}{\bfseries}{\itshape}
\newcommand{\qed}{\hfill$\Box$}
\newif\iflong
\begin{document}
 \title{Parameterized Verification of Algorithms for Oblivious Robots on a Ring}
 \author{Arnaud Sangnier$^*$ \and Nathalie Sznajder$^\star$ \and Maria Potop-Butucaru$^\star$ \and  S\'ebastien Tixeuil$^\star$}
 \date{$^*$ IRIF - Univ Paris Diderot - Paris, France \\
$^\star$ Sorbonne Universit\'es, UPMC Univ Paris 06, UMR 7606, LIP6, F-75005, Paris, France}

\declareauthor{ts}{Tali}{red}
\declareauthor{as}{Arnaud}{blue}

\pagestyle{plain}

\maketitle

\begin{abstract}
We study verification problems for autonomous swarms of mobile robots that
self-organize and cooperate to solve global objectives. In particular, we focus 
in this paper on the model proposed by Suzuki and Yamashita of anonymous robots
evolving in a discrete space with a finite number of locations (here, a ring). A large number
of algorithms have been proposed working for rings whose size is not a priori fixed and can be hence considered as a parameter. Handmade correctness proofs of these algorithms have been shown to be error-prone,
and recent attention had been given to the application of formal methods to automatically
prove those. Our work is the first to study the verification problem of such algorithms in the
parameterized case. We show that safety and reachability problems are 
undecidable for robots evolving asynchronously. On the positive side, we show that safety 
properties are decidable in the synchronous case, as well as in the asynchronous case for a
particular class of algorithms. Several properties on the protocol can be decided as well. Decision
procedures rely on an encoding in Presburger arithmetics formulae that can be verified by an
SMT-solver. Feasibility of our approach is demonstrated by the encoding of several case studies.
\end{abstract}

%
\section{Introduction}

We consider sets of mobile oblivious robots evolving in a discrete space (modeled as a ring shaped graph). For our purpose, rings are seen as discrete graphs whose vertices represent the different positions available to host a robot, and edges model the possibility for a robot to move from one position to another.
Robots follow the seminal model by Suzuki and Yamashita~\cite{SuzukiYamashita99}: they do not remember their past actions, they cannot communicate explicitly, and are disoriented.

However, they can sense their environment and detect the positions of the other robots on the ring. If several robots share the same position on the ring (forming a \emph{tower}, or multiplicity point), other robots may or may not detect the tower. If robots have \emph{weak} multiplicity detection, they are assumed to sense a tower on a position, but are not able to count the actual number of robots in this tower. With \emph{strong} multiplicity detection, they are able to count the exact number of robots on a given position. In case they have no multiplicity detection, robots simply detect occupied positions. In this paper, we assume strong multiplicity detection is available to all robots. 

Robots are anonymous and execute the same deterministic algorithm to achieve together a given objective. Different objectives for ring shaped discrete spaces have been studied in the literature~\cite{flocchini12book}: gathering -- starting from any initial configuration, all the robots must gather on the same node, not known beforehand, and then stop~\cite{KranakisKrizancMarkou10}, exploration with stop -- starting from any initial configuration, the robots reach a configuration where they all are idle and, in the meanwhile, all the positions of the ring have been visited by a robot~\cite{FlocchiniIPS13}, exclusive perpetual exploration -- starting from any tower-free configuration, each position of the ring is visited infinitely often and no multiplicity point ever appears~\cite{BlinMPT10,DAngeloSNNS13}. 

Each robot behaves according to the following cycle: it takes a snapshot of its environment, then it computes its next move (either stay idle or move to an adjacent node in the ring), and at the end of the cycle, it moves according to its computation. Such a cycle is called a look-compute-move cycle.

Since robots cannot rely on a common sense of direction, directions that are computed in the compute phase are only \emph{relative} to the robot. To tell apart its two sides, a robot relies on a description of the ring in both clockwise and counter-clockwise direction, which gives it two views of the configuration. There are two consequences to this fact. First, if its two views are identical, meaning that the robot is on an axis of symmetry, it cannot distinguish the two directions and thus either decides to stay idle, or to move. In the latter case, the robot moves becomes a non-deterministic choice between the two available directions. Second, when two robots have the same
two views of the ring, the protocol commands them to move in the same relative direction, but this might result in moves in actual opposite directions for the two robots. Such a symmetrical situation  is pictured in Figure~\ref{fig:symmetry}.

\begin{figure}
\centering
\begin{subfigure}[b]{0.30\textwidth}
\captionsetup{justification=centering}
\centering
\begin{tikzpicture}[node distance =10em,>=stealth,->, scale=0.6, transform shape, thick]
  \node[circle,minimum size =8.5em, draw, thick] (C0){};

  \foreach \i in {0,...,6}
    \node[circle,minimum size =1.6em,draw,fill=white,thick]at(\i*51:1.5) (\i){};

\node[circle,minimum size =1.5em,fill=gray!100]at(2*51:1.5) (4){};
\draw node[above left of=4, node distance=1.5em] {$\robot_1$};
\node[circle,minimum size =1.5em,fill=gray!100]at(5*51:1.5) (6){};
\draw node[below left of=6, node distance=1.5em] {$\robot_2$};
\node[circle,minimum size =1.5em,fill=gray!100]at(6*51:1.5) (9){};
\draw node[below right of=9, node distance=1.75em] {$\robot_3$};

\end{tikzpicture}
\caption{\small {A disoriented robot $\robot_1$}}
\label{fig:views1}
\end{subfigure}

\begin{subfigure}[b]{0.30\textwidth}
\captionsetup{justification=centering}
\centering
\begin{tikzpicture}[node distance =10em,>=stealth,->, scale=0.6, transform shape, thick]
  \node[circle,minimum size =8.5em, draw, thick] (C0){};

  \foreach \i in {0,...,9}
    \node[circle,minimum size =1.6em,draw,fill=white,thick]at(\i*36:1.5) (\i){};

\node[circle,minimum size =1.5em,fill=gray!100]at(5*36:1.5) (5){};
\node[circle,minimum size =1.5em,fill=gray!100]at(4*36:1.5) (4){};
\node[circle,minimum size =1.5em,fill=gray!100]at(2*36:1.5) (2){};
 \node[circle,minimum size =1.6em,draw,fill=white,thick]at(2*36:2) (2){};
\node[circle,minimum size =1.5em,fill=gray!100]at(2*36:2) (2){};
\draw node[above left of=4, node distance=1.5em] {$\robot_5$};
\draw node[above of=2, node distance=1.5em] {$\robot_1,\robot_2$};
\draw node[above left of=5, node distance=1.5em] {$\robot_4$};
\node[circle,minimum size =1.5em,fill=gray!100]at(9*36:1.5) (9){};
\draw node[below right of=9, node distance=2.25em] {$\robot_3$};

\end{tikzpicture}
\caption{A configuration with a tower}
\label{fig:views3}
  \end{subfigure}

\caption{}\label{fig:symmetry}
\end{figure}

Existing execution models consider different types of synchronization for the robots: in the fully synchronous model (FSYNC), all robots evolve simultaneously and complete a full look-compute-move cycle. The semi-synchronous model (SSYNC) consider runs that evolve in phases: at each phase, an arbitrary subset of the robots is scheduled for a full look-compute-move cycle, which is executed simultaneously by all robots of the subset. Finally, in the asynchronous model (ASYNC), robots evolve freely at their own pace: In particular, a robot can move according to a computation based on an obsolete observation of its environment, as others robots may have moved in between. Algorithms in the literature are typically parameterized by the number of robots and/or number of positions in the ring. In this work we focus on formally verifying algorithms parameterized by the number of ring positions only, assuming a a fixed number of robots.

\subsection{Related work}

Designing and proving mobile robot protocols is notoriously difficult. Formal methods encompass a long-lasting path of research that is meant to overcome errors of human origin. Unsurprisingly, this mechanized approach to protocol correctness was successively used in the context of mobile robots~\cite{bonnet14wssr,devismes12sss,berard16dc,auger13sss,MPST14c,courtieu15ipl,berard15infsoc,AminofMuranoRubinZuleger15,balabonski16sss}.

When robots are \emph{not} constrained to evolve on a particular topology (but instead are allowed to move freely in a bidimensional Euclidian space), the Pactole (\texttt{http://pactole.lri.fr}) framework has been proven useful. Developed for the Coq proof assistant, Pactole enabled the use of high-order logic to certify impossibility results~\cite{auger13sss} for the problem of convergence: for any positive $\epsilon$, robots are required to reach locations that are at most $\epsilon$ apart. Another classical impossibility result that was certified with Pactole is the impossibility of gathering starting from a bivalent configuration~\cite{courtieu15ipl}. Recently, positive certified results for SSYNC gathering with multiplicity detection~\cite{courtieu16disc}, and for FSYNC gathering without multiplicity detection~\cite{balabonski16sss} were provided. However, as of now, no Pactole library is dedicated to robots that evolve on discrete spaces.

In the discrete setting that we consider in this paper, model-checking proved useful to find bugs in existing literature~\cite{berard16dc,doan16sofl} and assess formally published algorithms~\cite{devismes12sss,berard16dc,AminofMuranoRubinZuleger15}. Automatic program synthesis (for the problem of perpetual exclusive exploration in a ring-shaped discrete space) is due to Bonnet \emph{et al.}~\cite{bonnet14wssr}, and can be used to obtain automatically algorithms that are ``correct-by-design''. The approach was refined by Millet \emph{et al.}~\cite{MPST14c} for the problem of gathering in a discrete ring network. As all aforementioned approaches are designed for a bounded setting where both the number of locations and the number of robots are known, they cannot permit to establish results that are valid for any number of locations. 

Recently, Aminof \emph{et al.} \cite{AminofMuranoRubinZuleger15} presented a general framework for verifying properties about mobile robots evolving on graphs, where the graphs are a parameter of the problem. 
While our model could be encoded in their framework, their undecidability proof relies on persistent memory used by the robots, hence is not applicable to the case of oblivious robots we consider here. Also, they obtain decidability in a subcase that is not relevant for robot protocols like those we consider. Moreover, their decision procedure relies on MSO satisfiability, which does not enjoy good complexity properties and cannot be implemented efficiently for the time being. 

\subsection{Contributions}
In this work, we tackle the more general problem of verifying protocols for swarms of robots for any number of locations. 

We provide a formal definition of the problem, where the protocol can be described as a quantifier free Presburger formula. This logic, weak enough to be
decidable, is however powerful enough to express existing algorithms in the literature. Objectives of the robots are also described by Presburger formulae and
we consider two problems: when the objective of the robots is a safety objective -- robots have to avoid the configurations described by the formula (\safety{}), 
and when it is a reachability objective (\reachability{}). We show that if \reachability{} is undecidable in any semantics, \safety{} is decidable in FSYNC and SSYNC.
We also show that when the protocol is \emph{uniquely-sequentializable}, safety properties become decidable even in the asynchronous case. 

Finally, we show practical applicability of this approach by using an SMT-solver to verify safety properties for some algorithms from the literature. 

Hence, we
advocate that our formalism should be used when establishing such protocols, as a formal and non-ambiguous description, instead of the very informal and
sometimes unclear definitions found in the literature. Moreover, if totally automated verification in the parameterized setting seems unfeasible, our method could be used as a ``sanity check'' of the protocol, and to automatically prove intermediate lemmas, that can then be used as formally proved building blocks of a handmade 
correction proof.

Due to lack of space, some proofs are given in the appendices.

\section{Model of Robots Evolving on a Ring}

\subsection{Formal model}
In this section we present the formal language to describe mobile robots protocols as well as the way it is interpreted.

\subsubsection{Preliminaries}

For $a,b \in \mathbb{Z}$ such that $a \leq b$, we denote by $[a,b]$
the set $\set{c \in \mathbb{Z} \mid a \leq c \leq b}$. For $a \in
\zed$ and $b \in
\nat$, we write $a \modulo b$ the natural $d \in [0,(b-1)]$
such that there exists $j \in \zed$ and $a=b.j+d$ (for instance $-1
\modulo 3=2$). Note that $\modulo$ corresponds to the modulo operator, but for sake of clarity we recall its definition when $a$ is negative.

We recall the definition of Existential Presburger  (EP) formulae. Let $Y$ be a countable set of variables.  First we define the
grammar for terms $\aterm ::= x ~\mid~ \aterm + \aterm ~\mid~ a\cdot \aterm ~\mid~ \aterm \mod a$, where $a\in
\nat$ and $x \in Y$ and then the
grammar for formulae is given by $\phi ::= \aterm \bowtie b ~\mid~
\phi \wedge \phi ~\mid~ \phi \vee \phi ~\mid~ \exists x. \phi$ where ${\bowtie} \in
\set{=,\leq,\geq,<,>}$, $x \in Y$ and $b\in\nat$. We sometimes write a formula $\phi$ as $\phi(x_1,\ldots,x_k)$ to underline that $x_1,\ldots,x_k$ are the free variables of $\phi$. The set of Quantifier Free Presburger (QFP) formulae is obtained by the same grammar deleting the elements $\exists x. \phi$.
 Note that when dealing with QFP formulae, we allow as well negations of formulae.

We say that a vector
$V=\tuple{d_1,\dots,d_k}$ satisfies  an EP formula $\phi(x_1,\ldots,x_k)$, denoted
by $V \models \phi$, if the formula obtained by replacing each
$x_i$ by $d_i$ holds.
Given a formula $\phi$ with free variables ${x_1,\dots, x_k}$, we write
$\phi(d_1,\dots,d_k)$ the formula where each $x_i$ is replaced by $d_i$.
We let $\denote{\phi(x_1,\ldots,x_k)}=\{\tuple{d_1,\dots,d_k}\in \mathbb{N}^k\mid 
 \phi(x_1,\ldots,x_k)\models\phi\}$ be the set of models of the formula.
In the sequel, we use Presburger formulae to describe configurations
 of the robots, as well as protocols.

\subsubsection{Configurations and robot views}
In this paper, we consider a fixed number $k >0$ of robots  and, except when stated otherwise, we assume the identities
of the robots are $\robots=\set{R_1,\ldots,R_k}$. We may sometimes identify $\robots$ with the set of indices $\set{1,\ldots,k}$.
On a ring of size $n\geq k$, a \emph{(k,n)-configuration} of the robots 
(or simply a configuration if $n$ and $k$ are clear from the context) is
given by a vector $\config \in [0,n-1]^k$  associating
to each robot $R_i$ its position $\config(i)$ on the ring. 
We assume w.l.o.g. that positions
are numbered in the clockwise direction. 

A \emph{view} of a robot on this
configuration gives the distances between the robots, starting from
its neighbor, i.e. the robot positioned on the next occupied node (a distance equals to $0$ meaning that two robots are 
on the same node).  A \emph{view} $\mathbf{V}=\tuple{d_1,\dots, d_k}\in [0,n]^k$ is a 
$k$-tuple such that $\sum_{i=1}^k d_i=n$ and $d_1 \neq 0$. We let $\ViewSet{k}{n}$ be
the set of possible views for $k$ robots on a ring of size $n$.
Notice that all the robots
sharing the same position should have the same view. For instance, 
suppose that,
on a ring of size 10, 2 robots $R_1$, and $R_2$ 
are on the same position of the ring (say position 1), $R_3$ is at position 4, $R_4$ is  
at position 8, and $R_5$ is at position 9 (see Figure~\ref{fig:views3}). Then,
the view of $R_1$ and $R_2$ is $\tuple{3,4,1,2,0}$. It is interpreted by the fact that there is a robot at a distance $3$ (it is $R_3$), a robot a at distance  $3+4$ (it is $R_4$) and so on. 
We point out that all the robots at the same position share the same view. We as well suppose that in a view, the first distance is not $0$ (this is possible by putting $0$ at the `end' of the view instead). As a matter of fact in the example of Figure~\ref{fig:views3}, there is a robot at distance $3+4+1+2=10$ from $R_1$ (resp. $R_2$), which is $R_2$ (resp. $R_1$). The sum of the $d_i$ corresponds always to the size of the ring and here the fact that in the view of $R_1$ we have as last element $0$ signifies that there is a distance $0$ between the last robot (here $R_2$) and $R_1$. When looking in the opposite direction, their view becomes: $\tuple{2,1,4,3,0}$. 
Formally, for a view $\mathbf{V} = \tuple{d_1, \dots, d_k}\in [0,n]^k$, we
note $\revert{\mathbf{V}}= \tuple{d_j, \dots, d_1, d_k, \dots, d_{j-1}}$ the corresponding view
 when looking at the ring in the opposite direction, where $j$ is the greatest index such
 that $d_j\neq 0$. 
 
Given a configuration $\config \in [0,n-1]^k$
and a robot $R_i\in \robots$, the view of robot $R_i$ when looking in the clockwise direction,
 is given by $\ViewR{\config}{i}=\tuple{d_i(i_1),d_i(i_2)-d_i(i_1), \dots, n-d_i(i_{k-1})}$,
where, for all $j\neq i$, $d_i(j) \in [1,n]$ is such that
$(\config(i) + d_i(j)) \modulo n = \config(j)$ and $i_1,\dots, i_k$ are indexes pairwise
different such that
$0<d_i(i_1)\leq d_i(i_2)\leq \dots \leq d_i(i_{k-1})$.
When robot $R_i$ looks in the opposite direction, its view according to the
configuration $\config$ is $\ViewL{\config}{i}=\revert{\ViewR{\config}{i}}$.

\subsubsection{Protocols}
In our context, a protocol for networks of $k$
robots is given by a QFP formula respecting some specific
constraints.  
\begin{definition}[Protocol]\label{def:protocol}
A protocol is a QFP formula $\phi(x_1,\ldots,x_k)$ such that  for all
views $\mathbf{V}$ the following holds: if $\mathbf{V}
\models \phi$ and $\mathbf{V} \neq \revert{\mathbf{V}}$ then
$\revert{\mathbf{V}} \not\models\phi$
\end{definition}

A robot uses the protocol to know in which direction it should
move according to the following rules. As we have already stressed, all the robots that share the same position have the same view
of the ring.
Given a configuration $\config$ and a robot $R_i \in\robots$, if
$\ViewR{\config}{i} \models \phi$, then the robot $R_i$ moves in the clockwise
direction, if $\ViewL{\config}{i} \models \phi$ then it moves in the
opposite direction, if none of $\ViewR{\config}{i}$ and
$\ViewL{\config}{i}$ satisfies $\phi$ then the robot should not
move. The conditions expressed in Definition~\ref{def:protocol} imposes hence a direction when
$\ViewR{\config}{i} \neq \ViewL{\config}{i}$. In case
$\ViewR{\config}{i} = \ViewL{\config}{i}$, the robot is disoriented and it
can hence move in one direction or the other. For instance, consider the configuration
$\config$ pictured on Figure~\ref{fig:views1}. Here, $\ViewR{\config}{1}=\tuple{3,1,3}
=\ViewL{\config}{1}$.
Note that such a
semantics enforces that the behavior of a robot is not influenced by its
direction. In fact consider two symmetrical configurations $\config$
and $\config'$ such that $\ViewR{\config}{i}=\revert{\ViewR{\config'}{i}}$ for each robot $R_i$. 
If $\ViewR{\config}{i} \models \phi$ (resp. $\ViewL{\config}{i} \models
\phi$), then necessarily $\ViewL{\config'}{i} \models \phi$
(resp. $\ViewR{\config'}{i} \models \phi$), and the robot
in $\config'$ moves in the opposite direction than in $\config$
(and the symmetry of the two configurations is maintained).

We now formalize the way movement is decided. Given a protocol $\phi$ and a view $\mathbf{V}$, the moves of any
robot whose clockwise direction view is $\mathbf{V}$ are given by:
$$
\mov(\phi,V)= \left\{
	\begin{array}{ll}
		\set{+1}  & \mbox{if } \mathbf{V} \models \phi
                            \mbox{ and } \mathbf{V}
                      \neq \revert{\mathbf{V}} \\
		\set{-1} & \mbox{if } \revert{\mathbf{V}} \models \phi \mbox{ and } {\mathbf{V}}
                      \neq \revert{\mathbf{V}} \\
                \set{-1,+1} & \mbox{if } {\mathbf{V}} \models \phi \mbox{ and } {\mathbf{V}}
                      = \revert{\mathbf{V}} \\
            \set{0} & \mbox{otherwise}
	\end{array}
\right.
$$

Here $+1$ (resp. $-1$) stands for a movement of the robot in the clockwise (resp. anticlockwise) direction.

\subsection{Different possible semantics}

We now describe different transition relations between
configurations. Robots have a two-phase
behavior : (1) look at the ring and
(2) according to their view, compute and perform a movement. In this context, we
consider three different modes. In the \emph{semi-synchronous mode}, in one step,
some of the robots look at the ring and move. In the  \emph{synchronous
  mode}, in one step, all the robots look at the ring and
move. In the \emph{asynchronous mode}, in one step a single robot can
either choose to look at the ring, if the last thing it did was a
movement, or to move, if the last thing it did was to look at the ring. As a consequence, 
 its movement decision is a consequence of the view of
the ring it
has in its memory. In the remainder of the paper, we fix a protocol $\phi$ and we 
consider a set $\robots$
of $k$ robots.

\subsubsection{Semi-synchronous mode} We begin by providing the
semantics in the semi-synchronous case.
For this matter we define
the transition relation $\semtrans_\phi \subseteq [0,n-1]^k \times [0,n-1]^k$ (simply noted $\semtrans$
when $\phi$ is clear from the context) between configurations.
We have $\config \semtrans \config'$ if there exists a subset 
$I\subseteq \robots$ of robots such that, for all $i\in I$, $\config'(i)=(\config(i)+ m) \modulo n$, where 
$m\in \mov(\phi, \ViewR{\config}{i})$, 
and for all $i\in \robots\setminus I$, $\config'(i)=\config(i)$.

\subsubsection{Synchronous mode} 

The transition relation $\synchtrans_\phi
\subseteq  [0,n-1]^k \times [0,n-1]^k$ (simply noted $\synchtrans$ when $\phi$
is clear from the context) describing synchronous movements is
very similar to the semi-synchronous case, except that all
the robots have to move. Then $\config \synchtrans \config'$
if $\config'(i)=(\config(i)+m) \modulo n$ with
$m \in \mov(\phi, \ViewR{\config}{i})$
for all $i \in \robots$.

\subsubsection{Asynchronous mode} 

The definition of transition relation
for the asynchronous mode is a bit more involved, for two reasons: first, the move
of each robot does not depend on the current configuration, but on the last view of
the robot. 
Second, in one step a robot either look or move.
As a consequence, an \emph{asynchronous configuration} is a tuple 
$(\config,\internalState, \Views)$ where $\config \in [0,n-1]^k$ gives the current configuration,
$\internalState \in  \set{\lo,\mo}^k$ gives, for each robot, its internal state ($\lo$ stands for ready to look and 
 and $\mo$ stands for compute and move) and $\Views \in \ViewSet{k}{n}^k$ stores, for each robot, the view (in the clockwise direction) 
it had the last time it looked at the ring.

The
transition relation for asynchronous mode is hence defined by a binary
relation $\asynchtrans_\phi$ (or simply $\asynchtrans$) working on $[0,n-1]^k \times \set{\lo,\mo}^k \times \ViewSet{k}{n}^k$ and defined as follows:
$\tuple{\config, \internalState, \Views} \asynchtrans \tuple{\config',\internalState',\Views'}$
  iff there exist $R_i  \in \robots$  such that the following conditions
  are satisfied:
\begin{itemize}
\item for all $R_j\in\robots$ such that $j\neq i$, $\config'(j)=\config(j)$, $\internalState'(j)=\internalState(j)$ and $\Views'(j)=\Views(j)$,
\item if $\internalState(i)=\lo$ then $\internalState'(i)=\mo$, $\Views'(i)=\ViewR{\config}{i}$ and $\config'(i)=\config(i)$, i.e. if the robot
that has been scheduled was about to look, then the configuration of the robots won't change, and this robot updates its view of the ring
according to the current configuration and change its internal state,
\item if $\internalState(i)=\mo$ then $\internalState'(i)=\lo$, $\Views'(i)=\Views(i)$ and $\config'(i)=(\config(i) + m)\modulo n$, with $m\in\mov(\phi, \Views(i))$,
i.e. if the robot was about to move, then it changes its internal state and moves according to the protocol, and \emph{its last view of the ring}.
\end{itemize}

\subsubsection{Runs}
A semi-synchronous (resp. synchronous) $\phi$-run (or a run according to a protocol $\phi$) is a (finite or infinite) sequence of 
configurations $\rho=\config_0 \config_1\dots$ where, for all $0\leq i < |\rho|$, $\config_i\semtrans_\phi \config_{i+1}$ (resp. $\config_i\synchtrans_\phi \config_{i+1}$). Moreover, if $\rho=\config_0\cdots\config_n$ is finite, then there is no $\config$ such that
$\config_n\semtrans_\phi\config$ (respectively $\config_n\synchtrans_\phi\config$). \iflong \comment[ts]{Notion of initial asynchronous run or non initial asynchronous run would
be convenient in the proof of undecidability of \safety{\async}.}\fi An asynchronous $\phi$-run is a (finite or infinite) sequence of asynchronous
configurations $\rho = \tuple{\config_0,\internalState_0,\Views_0}\tuple{\config_1,\internalState_1,\Views_1}\cdots$ where, for all $0\leq i < |\rho|$, 
$\tuple{\config_i,\internalState_i,\Views_i}\asynchtrans_\phi \tuple{\config_{i+1},\internalState_{i+1},\Views_{i+1}}$ and such that $\internalState_0(i)=\lo$ for
all $i\in[1,k]$. Observe that the value of $\Views_0$ has no influence on the actual asynchronous run. 

We let $\post_\ssync(\phi,\config)=\{\config'\mid \config\semtrans_\phi\config'\}$, $\post_\sync(\phi,\config)=\{\config'\mid \config\synchtrans_\phi \config'\}$ 
and $\post_\async(\phi,\config)=\{\config'\mid \textrm{ there exist }\Views, \internalState', \Views'\textrm{ s.t. }
\tuple{\config, \internalState_0,\Views}\asynchtrans_\phi \tuple{\config',\internalState',\Views'}\}$, with $\internalState_0(i)=\lo$ for all $i\in [1,k]$.  Note that in the asynchronous case we impose all the robots to be ready to look.
We respectively write $\semtrans_\phi^*$, $\synchtrans_\phi^*$ and $\asynchtrans_\phi^*$ for the reflexive and transitive closure of the relations $\semtrans_\phi$,
$\synchtrans_\phi$ and $\asynchtrans_\phi$ and we define $\poststar{\ssync}(\phi,\config)$, 
$\poststar{\sync}(\phi,\config)$ and $\poststar{\async}(\phi,\config)$ by replacing in the definition $\post_{\ssync}(\phi,\config)$,$\post_{\sync}(\phi,\config)$ and $\post_{\async}(\phi,\config)$ the relations $\semtrans_\phi$, $\synchtrans_\phi$ and $\asynchtrans_\phi$ by their reflexive and transitive closure accordingly.
\iflong \comment[ts]{manque definition $\asynchtrans^n$ etc. utilisee dans preuve theoreme 1}\fi

We now come to our first result that shows that when the protocols have a special shape, the three semantics are identical.

\begin{definition}
A protocol $\phi$ is said to be \emph{uniquely-sequentializable} if, for all configuration $\config$, there is at most one robot $R_i\in\robots$ such that 
$\mov(\phi, \ViewR{\config}{i})\neq \{0\}$.
\end{definition}

When $\phi$ is uniquely-sequentializable at any moment at most one robot moves. Consequently, in that specific case, the three semantics are equivalent as stated by the following theorem.

\begin{theorem}\label{th:runs-synch-asynch}
If a protocol $\phi$ is uniquely-sequentializable, then for
all configuration $\config$,
$\poststar{\sync}(\phi,\config)=\poststar{\ssync}(\phi,\config)=\poststar{\async}(\phi,\config)$.
\end{theorem}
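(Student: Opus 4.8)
The plan is to prove the two equalities by a chain of inclusions, exploiting the fact that under a uniquely-sequentializable protocol at most one robot is ever "active" (i.e. has a non-trivial move set) in any configuration. First I would record the key structural observation: if $\phi$ is uniquely-sequentializable, then for every configuration $\config$ there is at most one robot $R_i$ with $\mov(\phi,\ViewR{\config}{i})\neq\{0\}$, so a synchronous step $\config\synchtrans\config'$ moves exactly this one robot (if it exists and actually moves), a semi-synchronous step moves a subset of an at-most-singleton active set, and an asynchronous step, when a robot finally performs a $\mo$ action, moves the one robot whose stored view is non-trivial. Thus in all three semantics, a single "effective" step changes the configuration by moving at most one robot according to $\mov(\phi,\ViewR{\config}{i})$ for the unique active $R_i$.

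The easy inclusions are $\poststar{\sync}(\phi,\config)\subseteq\poststar{\ssync}(\phi,\config)$, which holds for \emph{every} protocol since a synchronous step is a semi-synchronous step with $I=\robots$; and $\poststar{\ssync}(\phi,\config)\subseteq\poststar{\sync}(\phi,\config)$, which I would get from unique-sequentializability: a semi-synchronous step either leaves $\config$ unchanged (take zero steps) or moves the single active robot, which is exactly what the synchronous step does from $\config$ (since all other robots have move set $\{0\}$). So $\poststar{\sync}=\poststar{\ssync}$. For the asynchronous side, I would show $\poststar{\ssync}(\phi,\config)\subseteq\poststar{\async}(\phi,\config)$ by simulating each semi-synchronous step that moves $R_i$ by the two asynchronous micro-steps of $R_i$ (a $\lo$ step, which records $\ViewR{\config}{i}$ and does not change $\config$ or any other robot's state, followed by a $\mo$ step applying some $m\in\mov(\phi,\ViewR{\config}{i})$); since after this pair all robots are again in state $\lo$, the simulation composes and we land in $\post_\async$. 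The reverse inclusion $\poststar{\async}(\phi,\config)\subseteq\poststar{\ssync}(\phi,\config)$ is the delicate one: starting from all robots in $\lo$, I would track an asynchronous run and argue that the \emph{order} in which robots look is irrelevant, because under unique-sequentializability the configuration only changes when the unique active robot moves, and no look by an idle robot can create a second active robot or invalidate a stored view in a way that matters — the only robot whose stored view can be non-$\{0\}$ is the currently active one, and whenever it moves we reach a configuration reachable by one semi-synchronous step. Formally I would do an induction on the length of the asynchronous run, maintaining the invariant that the current configuration $\config_j$ satisfies $\config\semtrans^*\config_j$ and that any robot in state $\mo$ has a stored view equal to its current clockwise view (so stale views never cause a divergence), using that between the look and move of the active robot no other robot can have moved (there is no other active robot).

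The main obstacle is this last invariant: I must rule out the scenario where robot $R_i$ looks, then \emph{later} $R_i$'s view has become stale because some other robot moved in between — under unique-sequentializability this cannot happen, since for any intermediate configuration the only robot that can move is $R_i$ itself, and $R_i$ is in state $\mo$ and hence cannot move until it completes its current move; so no configuration change occurs between $R_i$'s look and its move, and the stored view is fresh when used. Making this precise requires a careful case analysis on which robot is scheduled at each asynchronous micro-step and tracking the pair $(\internalState,\Views)$; once that is in place the three inclusions chain together to give $\poststar{\sync}(\phi,\config)=\poststar{\ssync}(\phi,\config)=\poststar{\async}(\phi,\config)$.
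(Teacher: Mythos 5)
Your overall strategy coincides with the paper's: the inclusions $\poststar{\sync}(\phi,\config)\subseteq\poststar{\ssync}(\phi,\config)\subseteq\poststar{\async}(\phi,\config)$ are immediate, and the substance lies in showing that asynchronous (and semi-synchronous) steps cannot produce anything beyond synchronous reachability, by arguing that a stale view can never trigger an effective move. The collapse $\poststar{\ssync}=\poststar{\sync}$ and the simulation of a semi-synchronous step by look/move micro-steps are fine.

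There is, however, one step that would fail as written: the invariant you propose for the direction $\poststar{\async}\subseteq\poststar{\ssync}$, namely that \emph{any} robot in state $\mo$ has a stored view equal to its current clockwise view, is false and is not preserved by the induction. Take a configuration $\config$ whose unique active robot is $R_i$, let an idle robot $R_j$ look (it enters state $\mo$ storing $\ViewR{\config}{j}$), then let $R_i$ look and move, yielding $\config'$. Now $R_j$ is still in state $\mo$ but $\ViewR{\config'}{j}$ generally differs from its stored view, since $R_i$'s move changes the inter-robot distances; so stale views held by robots in state $\mo$ do occur and cannot be ruled out. The invariant must be weakened to the form the paper uses (its property $(P)$): if a robot is in state $\mo$ and its stored view differs from its current clockwise view, then the move prescribed by the \emph{stored} view is $\{0\}$. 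Your surrounding remarks (``the only robot whose stored view can be non-$\{0\}$ is the currently active one'', ``stale views never cause a divergence'') show you have the right idea, but note also that your justification that ``no other robot can move between $R_i$'s look and its move'' is itself what the induction must establish --- a robot moves according to its stored view, not the current one, so one must inductively exclude that some other robot carries an old stored view with a nontrivial move; with the weakened invariant proved by induction on the length of the run, exactly as in the paper, the argument goes through.
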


\iflong
\begin{proof}
Let $\phi$ be a protocol and $\config$ a configuration. From the definitions, it is obvious that $\poststar{\sync}(\phi,\config)\subseteq\poststar{\ssync}(\phi,\config)
\subseteq\poststar{\async}(\phi,\config)$. 

We show that, in case $\phi$ is uniquely-sequentializable, we also have $\poststar{\async}(\phi,\config)\subseteq 
\poststar{\sync}(\phi,\config)$. We first prove the following property $(P)$ of uniquely-sequentializable runs: for all uniquely-sequentializable protocol $\phi$, for all configuration $\config$, for all run
$\rho=\tuple{\config_0,\internalState_0,\Views_0}\tuple{\config_1,\internalState_1,\Views_1}\cdots\in \runs{\phi}{\async}$, for all $0\leq k<|\rho|$, for all robot $i$, if $\Views_k(i)\neq \ViewR{\config_k}{i}$ and $\internalState_k(i)=\mo$ then $\mov(\phi,\Views_k(i))=\{0\}$. 
To prove $(P)$, let $\rho\in \runs{\phi}{\async}$. We show $(P)$ by induction on $k$. For $k=0$, it is obvious since $\internalState_0(i)=\lo$ for all robot $i$.
Let $\rho= \tuple{\config_0,\internalState_0,\Views_0}\cdots\tuple{\config_k,\internalState_k,\Views_k}\tuple{\config_{k+1}, \internalState_{k+1},\Views_{k+1}}$
a prefix of a run in $\runs{\phi}{\async}$. Let $i$ be a robot such that $\internalState_{k+1}(i)=\mo$ and $\Views_{k+1}(i)\neq\ViewR{\config_{k+1}}{i}$. If 
$\internalState_k(i)=\lo$, then $\Views_{k+1}(i)=\ViewR{\config_k}{i}$ and $\config_{k}=\config_{k+1}$, which is impossible.
Hence, $\internalState_k(i)=\mo$ and $\Views_k(i)=\Views_{k+1}(i)$. Moreover, either $\Views_k(i)\neq\ViewR{\config_k}{i}$ and by induction hypothesis, 
$\mov(\phi,\Views_k(i))=\mov(\phi,\Views_{k+1}(i))=\{0\}$. Either $\Views_k(i)=\ViewR{\config_k}{i}\neq\ViewR{\config_{k+1}}{i}$. Hence there exists another
robot $j\neq i$ such that $\internalState_k(j)=\mo$ and $\internalState_{k+1}(j)=\lo$ and $\mov(\phi,\Views_k(j))\neq\{0\}$. Since $\phi$ is uniquely-sequentializable, 
$\mov(\phi,\Views_k(i))=\mov(\phi,\Views_{k+1}(i))=\{0\}$.

We show now by induction on $n$ that if $\tuple{\config,\internalState_0,\Views}\asynchtrans^n \tuple{\config',\internalState',\Views'}$ then 
$p'\in\poststar{\sync}(\phi)$. For $n=0$, it is obvious. Let now $n\in\nat$ and let $\config', \internalState',\Views', \config'', \internalState''$, and $\Views''$
such that $\tuple{\config,\internalState_0,\Views}\asynchtrans^n \tuple{\config',\internalState',\Views'}\asynchtrans\tuple{\config'',\internalState'',\Views''}$. By 
induction hypothesis, $\config'\in\poststar{\sync}(\config)$. Let $i$ be such that $\internalState''(i)\neq\internalState'(i)$. If $s'(i)=L$ then by definition, 
$\config''=\config'$ and $\config''\in\poststar{\sync}(\config)$. Otherwise, if $\ViewR{\config'}{i}\neq\Views'(i)$, then by $(P)$, $\mov(\phi,\Views'(i))=\{0\}$ and then
$\config'=\config''$. If $\ViewR{\config'}{i}=\Views'(i)$, either $\mov(\phi, \Views'(i))=\{0\}$ and $\config''=\config'$ or, $\mov(\phi,\Views'(i))\neq\{0\}$ and, since
$\phi$ is uniquely-sequentializable, for all $j\neq i$, $\mov(\phi,\ViewR{\config'}{j})=\{0\}$ and $\config'\synchtrans\config''$. Hence $\config''\in\poststar{\sync}(\config)$.
\end{proof}
\fi

\subsection{Problems under study}

In this work, we aim at verifying properties on protocols where we
assume that the number of robots is fixed (equals to $k>0$) but the
size of the rings is parameterized and satisfies a given
property. Note that when executing a protocol the size of the ring
never changes. For our problems, we consider a ring property that is given by a QFP formula  $\texttt{Ring}(y)$,  a set of bad configurations given by a QFP formula $\texttt{Bad}(x_1,\ldots,x_k)$ and a set of good configurations given by a QFP formula $\texttt{Goal}(x_1,\ldots,x_k)$. 
We then define two general 
problems to address the verification of such algorithms: the \safety{m} problem, and the \reachability{m} problem, with $m\in\{\ssync,\sync,
\async\}$.

The \safety{m} problem is to decide, given a protocol $\phi$ and two formulae $\texttt{Ring}$ and  $\texttt{Bad}$ whether
there exists a size $n\in\nat$ with $n\in \denote{\texttt{Ring}}$, and a $(k,n)$-configuration $\config$ with $\config\notin\denote{\texttt{Bad}}$, such that 
$\poststar{m}(\phi,\config)\cap\denote{\texttt{Bad}} \neq\emptyset$. 

The \reachability{m} problem is to decide given a protocol $\phi$ and two formulae $\texttt{Ring}$ and  $\texttt{Goal}$ whether
there exists a size $n\in\nat$ with $n\in \denote{\texttt{Ring}}$ and a $(k,n)$-configuration $\config$, such that $\poststar{m}(\phi,\config)\cap\denote{\texttt{Goal}}=\emptyset$. Note
that the two problems are not dual due to the quantifiers.

As an example, we can state in our context the \safety{m} problem that consists in checking  that a protocol $\phi$ working with three robots never leads to collision (i.e. to a configuration where two robots are on the same position on the ring) for rings of size strictly bigger than $6$. In that case we have $\texttt{Ring}:= y >6$ and $\texttt{Bad}:= x_1 = x_2 ~\vee~ x_2 = x_3 ~\vee~ x_1 = x_3$.

\section{Undecidability results}\label{sec:undecidability}
In this section, we present undecidability results for the two aforementioned problems. The proofs rely on the encoding of a deterministic
$k$-counter machine run. A deterministic $k$-counter machine consists of $k$ integer-valued registers (or counters) called
$c_1$, \dots, $c_k$, and a finite list of labelled instructions $L$. Each instruction is either of the form $\ell: c_i=c_i+1; \texttt{goto }\ell'$, 
or $\ell : \texttt{if }c_i>0 \texttt{ then } c_i=c_i-1;  \texttt{goto } \ell'; \texttt{else goto }\ell''$, where $i\in [1,k]$. We also assume the existence of a special instruction
$\ell_h: \texttt{halt}$. Configurations of a $k$-counter machine are elements of $L\times\nat^k$, giving the current instruction and the 
current values of the registers. The initial configuration is $(\ell_0,0,\dots,0)$, and the set of halting configurations is $\halt=\{\ell_h\}\times\nat^k$. Given a configuration $(\ell,n_1,\dots,n_k)$, the successor configuration 
$(\ell',n'_1,\dots,n'_k)$ is defined in the usual way
and we note $(\ell,n_1,\dots,n_k)\vdash (\ell',n'_1,\dots,n'_k)$.
A run of a $k$-counter machine is a (finite or infinite) sequence of configurations $(\ell_0,n^0_1,\dots,n^0_k), (\ell_1,n^1_1,\dots,n^1_k)\cdots $, where $(\ell_0,n^0_1,\dots,n^0_k)$ is
the initial configuration, and, for all $i\geq 0$, $(\ell_i, n^i_1, \dots,n^i_k)\vdash(\ell_{i+1}, n^{i+1}_1, \dots, n^{i+1}_k)$. The run is finite if and only if
it ends in
a halting configuration, i.e. in a configuration in \halt.

\begin{theorem}\label{th:undecidable-safe}
\safety{\async} is undecidable.
\end{theorem}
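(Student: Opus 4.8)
The standard route to proving \safety{\async} undecidable is to reduce the halting problem for deterministic $k$-counter (Minsky) machines to it, exactly as the paper's preamble about counter machines suggests. The plan is: given a deterministic $k$-counter machine $M$, construct (in an effective, uniform way) a protocol $\phi_M$, a ring-size constraint $\texttt{Ring}$, and a bad-configuration formula $\texttt{Bad}$, together with a designated ``initial'' robot configuration $\config_0$ (which will have $\config_0 \notin \denote{\texttt{Bad}}$), such that $M$ halts if and only if $\poststar{\async}(\phi_M, \config_0) \cap \denote{\texttt{Bad}} \neq \emptyset$ for some admissible ring size. Since the number of robots is fixed, the only unbounded resource available to simulate the counters is the size of the ring, so counter values must be encoded as distances between robots along the ring.

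**The encoding.**

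The key idea is to use a bounded number of robots to carry a finite control state and a read/write head, while the ring's length (a free parameter) stores the counter contents as gap-lengths. Concretely, I would reserve a small fixed group of ``control'' robots whose relative positions (the tuple of small distances among them, which a robot can read off from its view) encode the current instruction label $\ell$ of $M$, and one or more ``counter'' robots whose distance along the ring to a fixed anchor encodes a counter value $c_i$. An increment $c_i := c_i + 1$ is realized by the relevant robot taking one step in the appropriate direction; a decrement is realized symmetrically, and the zero-test is realized by checking whether a view distance equals $1$ (the minimal non-collision gap) versus being larger. Because the protocol is a quantifier-free Presburger formula over the view vector $\mathbf V = \tuple{d_1,\dots,d_k}$, all of these local decisions — ``if my view matches the pattern for label $\ell$ and the gap encoding $c_i$ is positive, move to produce the pattern for $\ell'$ with that gap decreased'' — are expressible: they are conjunctions of equalities and inequalities among the $d_j$ together with modular reasoning, which QFP supports. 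The halting instruction $\ell_h$ is encoded by a distinguished control pattern, and $\texttt{Bad}$ is chosen to be precisely the set of configurations matching that halt pattern (or, alternatively, $\texttt{Bad}$ is a collision predicate and the machine is rigged so that it creates a collision exactly upon halting). One must also respect Definition~\ref{def:protocol}: whenever a view $\mathbf V$ with $\mathbf V \neq \revert{\mathbf V}$ satisfies $\phi_M$, its reverse must not; this is arranged by making the control patterns asymmetric, so that the ``intended'' reading direction is forced and the simulation is deterministic.

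**Why asynchrony is the right setting, and the main obstacle.**

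The reason the construction targets the \async\ semantics specifically is twofold. First, in \async\ a single robot acts per step and goes through a genuine look-then-move pair of steps, which lets me sequentialize the simulation: at any reachable configuration at most the unique ``active'' control/head robot has a non-$\set 0$ move, so the run faithfully tracks $M$'s run step by step with no interleaving noise. (Indeed, if I can make $\phi_M$ uniquely-sequentializable, Theorem~\ref{th:runs-synch-asynch} would collapse the three semantics — but note the theorem only concerns $\poststar{}$ reachability sets, and the undecidability target is exactly such a reachability-of-$\poststar{}$ question, so I should be careful whether I actually want unique-sequentializability here or whether I genuinely exploit stale views; the staleness of $\Views(i)$ versus $\ViewR{\config}{i}$ is a feature I can use to force a robot to ``commit'' to a move computed in a previous configuration). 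Second, and more importantly, the $\texttt{Bad}$/$\texttt{Goal}$ formulas quantify existentially over ring sizes, so I must ensure that for an \emph{arbitrary} admissible $n$ the robots starting from $\config_0(n)$ either simulate $M$ correctly or get stuck harmlessly — i.e., the simulation should work for all sufficiently large $n$ and never spuriously reach $\texttt{Bad}$.

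The main obstacle, as in all such robot-simulation arguments, is **bootstrapping the initial configuration and handling symmetry/disorientation**. With only $k$ robots I need to lay out, on a ring of parametric size $n$, an initial pattern that (a) is not in $\texttt{Bad}$, (b) unambiguously encodes $(\ell_0, 0, \dots, 0)$, and (c) is not accidentally symmetric in a way that makes the first active robot disoriented and able to move the ``wrong'' way. Managing the modular arithmetic so that increments/decrements never make a counter robot wrap around past the anchor or collide with a control robot (which would either falsely trigger $\texttt{Bad}$ or destroy a pattern), and ensuring every control pattern is recognizable by a QFP view predicate while its mirror image is not, is the delicate bookkeeping part. I expect the cleanest packaging is to devote one subsection to the gadget patterns (one per instruction type: increment, decrement-with-zero-test, halt), prove a simulation lemma ``the unique \async-run from $\config_0(n)$ mirrors the run of $M$ as long as counters stay below roughly $n$'', and then observe that since $M$'s halting run (if it exists) uses only finitely many counter increments, choosing $n$ large enough makes the simulation complete, giving $M$ halts $\iff$ $\texttt{Bad}$ reachable; conversely if $M$ runs forever the unique run never reaches the halt pattern and $\texttt{Bad}$ is never hit for any $n$. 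This yields the reduction and hence Theorem~\ref{th:undecidable-safe}.
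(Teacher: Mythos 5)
There is a genuine gap, and it lies in what the reduction must actually establish. You reduce to the statement ``$M$ halts iff $\poststar{\async}(\phi_M,\config_0)\cap\denote{\texttt{Bad}}\neq\emptyset$ for a \emph{designated} initial configuration $\config_0$,'' but \safety{\async} quantifies existentially over \emph{all} ring sizes and \emph{all} starting configurations outside $\denote{\texttt{Bad}}$. With either of your choices of $\texttt{Bad}$ --- the halt pattern itself, or a collision that the protocol creates upon reaching the halt pattern --- the adversary may simply start the run one step before a $\texttt{Bad}$ configuration (e.g.\ at a configuration encoding any instruction that jumps to $\ell_h$, whether or not it is reachable in $M$'s actual run from $(\ell_0,0,\dots,0)$), and $\texttt{Bad}$ is reached immediately, independently of whether $M$ halts. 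This is not a technicality: it is exactly the collapse formalized in Lemma~\ref{lem:poststar-to-post}, which is how the paper proves that \safety{\sync} and \safety{\ssync} are \emph{decidable}. Any construction in which the bad event is triggerable from the current snapshot alone is therefore doomed; your argument as written proves undecidability of reachability from a \emph{given} initial configuration (a different, easier statement, mentioned only in the paper's conclusion), not of \safety{\async}.

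The missing idea is the one you flag but leave unresolved (``whether I actually want unique-sequentializability here or whether I genuinely exploit stale views''): exploiting stale views is not optional, it is the entire content of the theorem. The paper keeps the simulation subprotocol uniquely-sequentializable, but adds two witness robots $\robot_g$ and $\robot_d$ at distance $2$: $\robot_g$'s view satisfies $\phi$ only at the encoding of $(\ell_0,0,0)$ and $\robot_d$'s only at a halting encoding, and a collision requires \emph{both} to move toward each other. Since $\poststar{\async}$ starts every robot in state $\lo$ with no pending move, $\robot_g$ must look while the initial encoding is current, then hold that stale view in $\Views$ throughout the whole simulation, and complete its move only after $\robot_d$ has looked at a halting encoding --- so a collision certifies that the run actually traversed from the initial to a halting $M$-configuration, and faithfulness of the simulation between \stable configurations then yields that $M$ halts. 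No single-snapshot predicate can encode ``a full computation happened''; only the asynchronous view memory can, which is precisely why the theorem holds for \async but fails for \sync and \ssync. Your write-up needs this double-witness, stale-view gadget (or an equivalent mechanism) and the corresponding converse argument over arbitrary starting configurations to be a proof of the stated theorem.
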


\iflong
\begin{proof}
\else
\begin{proofsketch}
\fi
The proof relies on a reduction from the halting problem of a deterministic two-counter machine $M$ to \safety{\async} with $k=42$ robots. It is likely
that an encoding using less robots might be used for the proof, but for the sake of clarity, we do not seek the smallest possible amount of robots. The halting
problem is to decide whether the run of a given deterministic two-counter machine is finite; this problem is undecidable~\cite{Minsky67}. The idea is to simulate the run of $M$ in a way that ensures that a 
collision occurs if and only if $M$ halts. Positions of robots on the
ring are used to encode values of
counters and the current instruction of the machine. The $k$-protocol makes sure that movements of the robots simulate correctly the run of $M$. Moreover,
one special robot moves only when the initial configuration is encoded, and another only when the final configuration is encoded. The collision is ensured in 
the following sequence of actions of the robots: when the initial configuration is encoded, the first robot computes its action but does not move immediately. When
the halting configuration is reached, the second robot computes its action and moves, then the first robot finally completes its move, entailing the collision. Note that if the ring is not big enough to simulate the counter values then the halting configuration is never reached and there is no collision.

Instead of describing configurations of the robots by applications giving positions of the robots on the ring, we use a sequence of letters $\free$ or $\robot$,
representing respectively a free node and a node occupied by a robot. When a letter $\textsf{A}\in\{\free,\robot\}$ is repeated $i$ times, we use
the notation $\textsf{A}^i$, when it is repeated an arbitrary number or times (including 0), we use $\textsf{A}^*$. To distinguish between the two representations
of the configurations, we use respectively the terms configurations or word-configurations. The correspondence between a configuration and a
word-configuration is obvious. A \emph{machine-like} (word-)configuration is a configuration of the form:

$\block{3}\free^*\robot\free^*\block{4}\free^*\robot\free^*\block{5}\free^*\robot\free^*\block{6}\free^*\robot\free^*\block{7}\free^*\robot\free^*\block{8}$$P_1$$P_2$$P_3$$P_4$$P_5$
$\robot
\free\robot$

where $\block{i}$ is a shorthand for $\free\robot^i\free$, and $P_1P_2\in\{\robot\free,\free\robot\}$ and exactly one $P_i=\robot$ for $i\in\{3,4,5\}$, 
$P_j=\free$ for $j\in \{3,4,5\}\setminus\{i\}$ (see Table~\ref{table:config} for a graphic representation of the section $P_1P_2P_3P_4P_5$ of the ring). 
Observe that the different blocks $\block{i}$ yield for every robot in the ring a distinct view. 
Hence, in the rest of the proof we abuse
notations and describe the protocol using different names for the different robots, according to their position in the ring, even if they are formally anonymous. We let
$\robots$ be the set of robots involved. A machine-like (word-)configuration 
$\block{3}$$\free^{n_1}$$\robot_{c_1}$$\free^*$$\block{4}$$\free^{n_2}$$\robot_{c_2}$$\free^*$$\block{5}$$\free^m$$\robot_c$$\free^n$$\block{6}$$\free^i$$\robot_\ell$$\free^{i'}$$\block{7}$$\free^p$$\robot_{\ell'}$$
\free^r$$\block{8}$
$\robot_{tt}$$\free$$\robot_t$$\free$$\free$$\robot_g$$\free$$\robot_d$ is said to be \emph{stable} because of the positions of robots $\robot_t$ and $\robot_{tt}$ (see
Table~\ref{table:config}). Moreover, it encodes the configuration $(\ell_i,n_1,n_2)$ of $M$ (due to the relative positions of robots $\robot_{c_1}$,
$\robot_{c_2}$ and $\robot_\ell$ respectively to $\block{3}$, $\block{4}$ and $\block{6}$). We say that a configuration $\config$ is machine-like, \stable, etc. if its corresponding
word-configuration is machine-like, \stable, etc. In the following, we distinguish configurations of the 2-counter machine, and configurations
of the robots, by calling them respectively $M$-configurations and $\phi$-(word)-configurations. For a stable and machine-like $\phi$-configuration $\config$, 
we let $M(\config)$ be the $M$-configuration encoded by $\config$.
We first present the part of the algorithm simulating the behavior of $M$. We call this algorithm $\phi'$.
Since the machine is deterministic, only one instruction is labelled by $\ell_i$, known by every robot. The simulation follows different steps, according to the positions
of the robots $\robot_{t}$ and $\robot_{tt}$, as pictured in Table~\ref{table:config}.

\begin{table}[htbp]
\caption{Different types of configurations}\label{table:config}
\centering
\begin{tabular}{|cccc|}
\hline
\stable configuration && $\robot_{tt} \free \robot_t \free\free$ & \begin{tikzpicture}[node distance =15em,>=stealth,-, scale=0.6, transform shape, thick]
 \draw (0,0) arc [x radius=2.5, y radius=-0.5, start angle=20, end angle=160]
\foreach \t in {0,0.25,0.75}
{ node [draw,circle,pos=\t, auto](\t) {} }
\foreach \t in {0.5,1}
{node [draw,circle,pos=\t, auto,fill=gray!100](\t) {}} ;

\end{tikzpicture}
\\
\moving configuration && $\free\robot_{tt}\robot_t \free\free$ &  \begin{tikzpicture}[node distance =15em,>=stealth,-, scale=0.6, transform shape, thick]
 \draw (0,0) arc [x radius=2.5, y radius=-0.5, start angle=20, end angle=160]
\foreach \t in {0,0.25,1}
{ node [draw,circle,pos=\t, auto](\t) {} }
\foreach \t in {0.5,0.75}
{node [draw,circle,pos=\t, auto,fill=gray!100](\t) {}} ;
\end{tikzpicture}
\\
\movingtwo configuration && $\free\robot_{tt} \free \robot_t\free$ & \begin{tikzpicture}[node distance =15em,>=stealth,-, scale=0.6, transform shape, thick]
 \draw (0,0) arc [x radius=2.5, y radius=-0.5, start angle=20, end angle=160]
\foreach \t in {0,0.5,1}
{ node [draw,circle,pos=\t, auto](\t) {} }
\foreach \t in {0.25,0.75}
{node [draw,circle,pos=\t, auto,fill=gray!100](\t) {}} ;
\end{tikzpicture}
\\
\movingthree configuration && $\free\robot_{tt} \free\free \robot_t$& \begin{tikzpicture}[node distance =15em,>=stealth,-, scale=0.6, transform shape, thick]
 \draw (0,0) arc [x radius=2.5, y radius=-0.5, start angle=20, end angle=160]
\foreach \t in {0.25,0.5,1}
{ node [draw,circle,pos=\t, auto](\t) {} }
\foreach \t in {0,0.75}
{node [draw,circle,pos=\t, auto,fill=gray!100](\t) {}} ;
\end{tikzpicture}
\\
\stabili configuration && $\robot_{tt} \free\free\free\robot_t$& \begin{tikzpicture}[node distance =15em,>=stealth,-, scale=0.6, transform shape, thick]
 \draw (0,0) arc [x radius=2.5, y radius=-0.5, start angle=20, end angle=160]
\foreach \t in {0.25,0.5,0.75}
{ node [draw,circle,pos=\t, auto](\t) {} }
\foreach \t in {0,1}
{node [draw,circle,pos=\t, auto,fill=gray!100](\t) {}} ;
\end{tikzpicture}
\\
\stabilitwo configuration && $\robot_{tt} \free\free \robot_t \free$& \begin{tikzpicture}[node distance =15em,>=stealth,-, scale=0.6, transform shape, thick]
 \draw (0,0) arc [x radius=2.5, y radius=-0.5, start angle=20, end angle=160]
\foreach \t in {0,0.5,0.75}
{ node [draw,circle,pos=\t, auto](\t) {} }
\foreach \t in {0.25,1}
{node [draw,circle,pos=\t, auto,fill=gray!100](\t) {}} ;
\end{tikzpicture}
\\
\hline
\end{tabular}
\end{table}%

We explain the algorithm $\phi'$ on the configuration $(\ell_i,n_1,n_2)$ with the transition $\ell_i:\texttt{if }c_1>0 \texttt{ then }
 c_1=c_1-1;  \texttt{goto } \ell_j; \texttt{else goto }\ell_{j'}$.
\begin{itemize}
\item When in a \stable configuration, robot $\robot_{tt}$ first moves to obtain a \moving configuration. 
\item In a \moving configuration, robot $\robot_c$ moves until it memorizes
the current value of $c_1$. More precisely, in a \moving configuration where $n_1\neq m$, robot $\robot_c$ moves : if $n_1>m$, and $n\neq 0$, $\robot_c$ moves towards $\block{6}$,
if $n_1<m$, it moves towards $\block{5}$, if $n_1>m$ and $n=0$, it does not move.
\item In a \moving configuration where $n_1=m$, $\robot_t$ moves to obtain a \movingtwo configuration.
\item In a \movingtwo configuration, if $n_1=m\neq 0$, then $\robot_{c_1}$ moves towards $\block{3}$, hence encoding the decrementation of $c_1$.
\item In a \movingtwo configuration, if $n_1=m=0$ or if $n_1\neq m$, (then the modification of $c_1$ is either impossible, or
already done), robot $\robot_{\ell'}$ moves until it memorizes the position of robot $\robot_\ell$: 
if $p<i$, and $r\neq 0$, $\robot_{\ell'}$ moves towards $\block{8}$; if $p>i$, $\robot_{\ell'}$ moves towards $\block{7}$. 
\item In a \movingtwo configuration, if $p=i$, then $\robot_t$ moves to obtain a \movingthree configuration.
\item In a \movingthree configuration, if $n_1=m=0$, and robot
  $\robot_{\ell'}$ encodes $\ell_i$ (i.e. $p=i$), then $c_1=0$ and
  robot $\robot_\ell$ has to move until it encodes $\ell_{j'}$. 
If on the other hand $n_1 < m$, then robot $\robot_\ell$
moves until it encodes $\ell_j$. More precisely, if $n_1=m=0$, and the position encoded by $\robot_\ell$ is smaller than $j'$ ($i<j'$),
and if $i'\neq 0$, then $\robot_\ell$ moves towards $\block{7}$. If $n_1=m=0$, and the position encoded by $\robot_\ell$ is greater than $j'$, $\robot_\ell$ moves towards $\block{6}$.
If $n_1\neq m$, then robot $\robot_\ell$ moves in order to reach a position where it encodes $\ell_{j}$ (towards $\block{6}$ if $i>j$, towards $\block{7}$ if $i<j$ and $i'\neq 0$). 
\item In a \movingthree configuration, if the position encoded by $\robot_{\ell'}$ is $\ell_i$, if $n_1=m=0$ and the position encoded by $\robot_\ell$ is $\ell_{j'}$, or if 
$n_1\neq m$,
 and the position encoded by $\robot_\ell$ is $\ell_j$, then the transition has been completely simulated : the counters have been updated and the next transition is
 stored. The robots then return to a \stable configuration: robot $\robot_{tt}$
moves to obtain a \stabili configuration.
\item In a \stabili configuration, robot $\robot_t$ moves to obtain a \stabilitwo configuration.
\item In a \stabilitwo configuration, robot $\robot_t$ moves to obtain a \stable configuration.
\end{itemize}

For other types of transitions, the robots move similarly. When in a \stable configuration encoding a configuration in \halt, 
no robot moves. 
We describe now the algorithm $\phi$ that simply adds to $\phi'$ the two following rules. 
Robot $\robot_g$ (respectively $\robot_d$) moves in the direction of $\robot_d$ (respectively in the direction of $\robot_g$)
if and only if the robots are in a \stable machine-like configuration, and the encoded configuration of the machine is $(\ell_0,0,0)$ (respectively is in \halt), 
(since the configuration is machine-like, the distance
between $\robot_g$ and $\robot_d$ is 2). 

On all configurations that are not machine-like, the algorithm makes sure that no robot move. This implies that once $\robot_g$
or $\robot_d$ has moved, no robot with a view up-to-date ever moves. One can easily be convinced that the algorithm can be
expressed by a QFP formula $\phi$. 

\iflong \fi

Let the formulae 
$
\texttt{Bad}(\config_1,\dots, \config_{42})=\bigvee_{\text{\scriptsize\tabular[t]{@{}l@{}}$i,j\in[1,42]$\\ $i\neq j$ \endtabular}} (\config_i=\config_j)
$ that is satisfied by all the 
configurations where two robots share the same position and $\texttt{Ring}(y)=y\geq 0$. We \iflong\else can \fi show that $M$ halts if and only if there exits a size $n\in\denote{\texttt{Ring}}$, a 
$(42,n)$-configuration $\config$ with $\config\notin\denote{\texttt{Bad}}$, such that 
$\poststar{\async}(\phi,\config)\cap\denote{\texttt{Bad}} \neq\emptyset$. 

\iflong
 For that matter, we use several claims about $\phi'$. First observe that $\phi'$ is uniquely-sequentializable.

 \begin{myclaim}\label{claim:machine-like}
Let $\config$ be a machine-like $\phi$-configuration. Then for all configuration $\config'\in\poststar{\async}(\phi',\config)=\poststar{\sync}(\phi',\config)$,
$\config'$ is machine-like. 
 \end{myclaim}

\begin{myclaim}\label{claim:if-successor-then-config}
Let $\tuple{\config,\internalState,\Views}$ be a \stable and machine-like asynchronous $\phi$-configuration corresponding to the following
word-configuration: $\block{3}$$\free^{n_1}$$\robot_{c_1}$$\free^{n'_1}$$\block{4}$$\free^{n_2}$$\robot_{c_2}$$\free^{n'_2}$$\block{5}$$\free^m$$\robot_c$$\free^n$$\block{6}$$
\free^i$$\robot_\ell$$\free^{i'}$$\block{7}$$\free^p$$\robot_{\ell'}$$
\free^r$$\block{8}$
$\robot_{tt}$$\free$$\robot_t$$\free$$\free$$\robot_g$$\free$$\robot_d$. Let $M(\config)=(\ell_i, n_1,n_2)$ a non-halting $M$-configuration. Assume that $\config$ has the following
properties.
$(i)$ if $n_1>m$ then $n\geq n_1 - m$ (if $\ell_i$ modifies $c_1$) or if $n_2>m$ then $n\geq n_2 - m$ (if $\ell_i$ modifies $c_2$),
$(ii)$ if $\ell_i$ increments $c_1$ (resp. $c_2$) then $n'_1>0$ (resp. $n'_2>0$), and 
$(iii)$ $i+i' = p+r = |L|$.
Then $\tuple{\config,\internalState,\Views}\asynchtrans^*_{\phi'} \tuple{\config',\internalState',\Views'}$ with $\config'$
\stable and machine-like, such that $M(\config)\vdash M(\config')$. 
\end{myclaim}

\begin{myclaim}\label{claim:if-config-then-successor}
Let $\tuple{\config,\internalState,\Views}$ and $\tuple{\config',\internalState',\Views'}$ be two asynchronous configurations, with $\config$ and $\config'$ \stable.
If there exists $k>0$ such that $\tuple{\config,\internalState,\Views} \asynchtrans^k_{\phi'} \tuple{\config',\internalState',\Views'}$ and for all $0<j<k$,
if $\tuple{\config,\internalState,\Views} \asynchtrans^j_{\phi'} \tuple{\config'',\internalState'',\Views''}$ we have $\config''$ not \stable, then 
$M(\config)\vdash M(\config')$.

\end{myclaim}

Assume that $M$ halts. There is then a finite bound $K\in\nat$ on the values of the counter during the run. We show hereafter an asynchronous 
$\phi$-run leading to a collision. Let $\tuple{\config_0,\internalState_0,\Views_0}$ be the initial configuration of the run, with $\config_0$ 
a $\phi$-configuration corresponding to the word-configuration 
\begin{equation*}
\block{3}\robot_{c_1}\free^{K}\block{4}\robot_{c_2}\free^{K}\block{5}\robot_c\free^K\block{6}\robot_\ell\free^{|L|}\block{7}\robot_{\ell'}\free^{|L|}\block{8}\robot_{tt}\free\robot_t\free\robot_g\free\robot_d,
\end{equation*} 
hence such that $M(\config_0)=(\ell_0,0,0)$, and $\internalState_0(i)=\lo$ 
for all $i\in\robots$. Then $\tuple{\config_0,\internalState_0,\Views_0}\asynchtrans \tuple{\config_1,\internalState_1,\Views_1}$ where
$\config_1=\config_0$, $\internalState_1(g)=\mo$, $\Views_1(g)=\ViewR{\config_0}{g}$, and $\internalState_1(i)=\internalState_0(i)$ and $\Views_1(i)=\Views_0(i)$
for all $i\in\robots\setminus\{g\}$. It is easy to check that,
at each step of a run starting from $\tuple{\config_0,\internalState_0,\Views_0}$, the conditions $(i)$, $(ii)$ and
$(iii)$ of Claim~\ref{claim:if-successor-then-config} are satisfied. Then, since $M$ halts, by applying iteratively Claim~\ref{claim:if-successor-then-config}, we have $\tuple{\config_1,\internalState_1,\Views_1}\asynchtrans^*_{\phi'} \tuple{\config_n,\internalState_n,\Views_n}$ with $M(\config_n)$ a halting configuration.
Now, the $\phi$-run continues with robot $d$ being scheduled to look, and then robots $g$
and $d$ moving, leading to a collision. Formally : 
$\tuple{\config_n,\internalState_n, \Views_n}\asynchtrans \tuple{\config,\internalState,\Views}\asynchtrans \tuple{\config',\internalState',\Views'}\asynchtrans
\tuple{\config'',\internalState'',\Views''}$, with $\config_n=\config$, $\internalState(d)=\mo$, and $\Views(d)=\ViewR{\config_n}{d}$, $\config'(d)=\config(d)-1$ since 
$\Views(d)$ memorizes the encoding of a halting configuration, with distance between $\robot_g$ and $\robot_d$ equal to 2. Last, $\config''(g)=\config(g)+1$
since $\Views(g)$ memorizes the encoding of the initial configuration, with distance between $\robot_g$ and $\robot_d$ equal to 1. Hence, $\config''(g)=\config''(d)$
and $\config''\in\poststar{as}(\phi,\config_0)\cap\denote{\texttt{Bad}}$.

Conversely assume there is an asynchronous $\phi$-run $\rho$ leading to a collision. By Claim~\ref{claim:machine-like}, if $\rho$ is in fact a $\phi'$ asynchronous
run, there is no collision (either $\rho$ starts in a non machine-like configuration and no robot moves, or it starts in a machine-like configuration and
it never reaches a collision, since machine-like configurations are collision-free by construction). Hence $\rho$ contains moves from $\robot_g$ and/or 
$\robot_d$. Assume $\robot_d$ moves in this run. Hence, there is a configuration $\tuple{\config,\internalState,\Views}$ in this run where $\robot_d$ has just
been scheduled to move and is hence such that $\internalState(\robot_d)=\mo$, $\Views(\robot_d)=\ViewR{\config}{\robot_d}$, with 
$\mov(\phi,\Views(\robot_d))\neq 0$. Hence, by definition of $\phi$, $\config$ is machine-like, \stable and $M(\config)$ is a halting $M$-configuration. Let $j\neq
\robot_g$ a robot such that $\internalState(j)=\mo$. If $\mov(\phi,V(j))\neq\{0\}$, by Proposition (P) from the proof of Theorem~\ref{th:runs-synch-asynch}
and since $\mov(\phi,V(j))=\mov(\phi',V(j))$, $V(j)=\ViewR{\config}{j}$, and it is impossible since $M(\config)$ is a halting configuration. Hence, from 
$\tuple{\config,\internalState,\Views}$ the only robots that can move are $\robot_d$ and $\robot_g$. If $\robot_g$ does not move, then all the following
configurations in the run is $\config'$ with $\config'(i)=\config(i)$ for all $i\neq\robot_d$ and $\config'(\robot_d)=\config(\robot_d)+1$. Hence, $\rho$ does not
lead to a collision.

Assume now that $\robot_g$ moves in $\rho$. Let $\rho'\cdot\tuple{\config_1,\internalState_1,\Views_1}\tuple{\config_2,\internalState_2,\Views_2}$ be a prefix of $\rho$ 
such that $\config_2(\robot_g)=\config_1(\robot_g)-1$ and $\config_2(i)=\config_1(i)$ for all $i\neq\robot_g$, and $\rho'\cdot\tuple{\config_1,\internalState_1,\Views_1}$ 
is a prefix of a $\phi'$-run. Then $\config_1$ is
a machine-like configuration and again by Proposition (P) of the proof of Theorem~\ref{th:runs-synch-asynch}, there is at most one robot $i\notin \{\robot_g,\robot_d\}$
such that $\mov(\phi',\Views_1(i))=\mov(\phi',\Views_2(i))\neq\{0\}$. Hence if $\robot_d$ does not move in $\rho$ then $\rho= \rho'\cdot\tuple{\config_1,\internalState_1,\Views_1}\tuple{\config_2,\internalState_2,\Views_2}\tuple{\config_3,\internalState_3,\Views_3}\cdot\rho''$ with $\config_3(i)=\config_2(i)$ for all
 the robots $i$ but possibly one, and then $\config_3$ being the only configuration appearing in $\rho''$. According to $\phi'$, $\config_3$ is collision free,
 so such a run could not yield a collision.
 
 Hence, we know that in $\rho$ both $\robot_g$ and $\robot_d$ moves. Moreover, from the above reasonings we deduce that in $\rho$, at some point $\robot_g$
 has been scheduled to look, then later on, $\robot_d$ has been scheduled to look, and just after, $\robot_g$ and $\robot_d$ have moved
 provoking a collision. Formally, $\rho$ is in the following form:
 $\rho=\rho'\cdot\tuple{\config_0,\internalState_0,\Views_0}\tuple{\config_1,\internalState_1,\Views_1}\cdot\rho''\cdot\tuple{\config_2,\internalState_2,\Views_2}
 \tuple{\config_3,\internalState_3,\Views_3}\tuple{\config_4,\internalState_4,\Views_4}\tuple{\config_5,\internalState_5,\Views_5}$ with $\rho',\rho''$ asynchronous $\phi'$-runs,
 $\internalState_0(\robot_g)=\lo$, $\internalState_1(\robot_g)=\mo$, and for all other robots $i$, $\internalState_0(i)=\internalState_1(i)$, and $\config_0=\config_1$, and 
 $\mov(\phi,\Views_1(\robot_g))\neq\{0\}$,
 $\config_2=\config_3$, $\internalState_2(\robot_d)=\lo$, $\internalState_3(\robot_d)=\mo$, $\mov(\phi,\Views_3(\robot_d))\neq\{0\}$, and either $\config_4(\robot_d)=\config_3(\robot_d)+1$, and for all $i\neq \robot_d$,
 $\config_3(i)=\config_4(i)$ and $\config_5(\robot_g)=\config_4(\robot_g)-1$ and for all $i\neq \robot_g$, $\config_4(i)=\config_5(i)$, or $\config_4(\robot_g)=\config_3(\robot_g)-1$,
  and for all $i\neq \robot_g$,
 $\config_3(i)=\config_4(i)$ and $\config_5(\robot_d)=\config_4(\robot_d)-1$ and for all $i\neq \robot_d$, $\config_4(i)=\config_5(i)$.
In both cases, we deduce that since $\mov(\phi,\Views_1(\robot_g))\neq\{0\}$, then $\config_0$ is a machine-like \stable configuration such that $M(\config_0)=C_0$,
and since $\mov(\phi,\Views_3(\robot_d))\neq\{0\}$, then $\config_2$ is a machine-like \stable configuration encoding a halting $M$-configuration $C_h$. Hence, from 
Claim~\ref{claim:if-config-then-successor}, since $\config_0\asynchtrans^*\config_2$ then $C_0\vdash\cdots \vdash C_h$ and $M$ halts.
\end{proof}
\else
 \qed
\end{proofsketch}
\fi

\begin{theorem}\label{th:universal-undecidable}
\reachability{m} is undecidable, for $m\in\{\ssync,\sync,\async\}$.
\end{theorem}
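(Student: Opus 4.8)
The plan is to reduce from the halting problem of a deterministic two‑counter machine, building on the simulation machinery of the proof of Theorem~\ref{th:undecidable-safe}. Given a two‑counter machine $M_0$, I would first construct an auxiliary deterministic counter machine $M$ with three counters (the two of $M_0$ and a fresh \emph{step counter}) behaving as follows: $M$ simulates $M_0$ step by step while incrementing the step counter at every step, and whenever the simulated $M_0$ reaches its \texttt{halt} instruction, $M$ does not stop but instead zeroes all three of its counters and restarts the simulation of $M_0$ from its initial configuration. The step counter guarantees that, as long as $M$ is inside a simulation phase, one coordinate of its configuration strictly increases, so no $\vdash$‑cycle of $M$ can lie entirely inside a simulation phase; symmetrically no cycle lies inside a cleanup phase, where counters only decrease. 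Hence the configuration graph of $M$ contains a cycle if and only if it contains one passing through $(\ell_0,0,0,0)$, i.e. if and only if $M$'s run from $(\ell_0,0,0,0)$ eventually returns to $(\ell_0,0,0,0)$, i.e. if and only if $M_0$ halts on its initial configuration. More generally, by determinism and pigeonhole, a configuration of $M$ has an infinite forward computation with bounded counter values if and only if that computation reaches such a cycle, which again forces $M_0$ to halt.

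Next I would turn the protocol $\phi'$ from the proof of Theorem~\ref{th:undecidable-safe} into a uniquely‑sequentializable protocol $\phi$ that faithfully simulates $M$: this only costs a larger (still fixed) number $k$ of robots — one extra counter robot and a bit of extra instruction‑encoding room — while the encoding scheme (the blocks $\block{i}$ forcing distinct views, the phase discipline of the marker robots, and the notions of \emph{machine-like} and well‑formed configuration) is unchanged. Exactly as in Theorem~\ref{th:undecidable-safe}, machine‑like configurations are closed under $\asynchtrans_\phi$ and, from a well‑formed stable machine‑like configuration encoding an $M$‑configuration, the protocol correctly performs one step of $M$ — unless this step would require a free node that does not exist on the current ring, in which case $\phi$ prescribes no move. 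Since $\phi$ is uniquely‑sequentializable, Theorem~\ref{th:runs-synch-asynch} gives $\poststar{\sync}(\phi,\cdot)=\poststar{\ssync}(\phi,\cdot)=\poststar{\async}(\phi,\cdot)$, so it suffices to prove undecidability in one semantics and the statement follows for all $m\in\{\ssync,\sync,\async\}$ simultaneously.

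I would then take $\texttt{Ring}(y):=y\geq 0$ and let $\texttt{Goal}(x_1,\dots,x_k)$ be a QFP formula (expressible because $k$ is fixed, as a finite disjunction over the possible cyclic arrangements of the robots together with linear constraints on their positions) describing exactly the configurations that are \emph{either} not well‑formed machine‑like \emph{or} well‑formed machine‑like but with the current simulation step already impossible for lack of a free node. The configurations outside $\denote{\texttt{Goal}}$ are then precisely the well‑formed machine‑like configurations from which the simulation can still advance, and from such a configuration the $\phi$‑run stays outside $\denote{\texttt{Goal}}$ forever if and only if the simulation never runs out of room, i.e. if and only if the forward $M$‑computation it performs stays within the ring's capacity forever. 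Letting the ring size grow, the predicate ``there exist $n\in\denote{\texttt{Ring}}$ and a configuration $\config$ with $\poststar{m}(\phi,\config)\cap\denote{\texttt{Goal}}=\emptyset$'' becomes equivalent to ``some configuration of $M$ has an infinite bounded forward computation'', equivalent to ``$M$'s configuration graph has a cycle'', equivalent to ``$M_0$ halts''; on the positive side, the witnessing robot configuration is simply the canonical initial configuration of $\phi$ on any ring large enough to hold the finitely many, bounded configurations of the halt‑and‑restart cycle. This proves \reachability{m} undecidable for every $m$.

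The main difficulty is, as in Theorem~\ref{th:undecidable-safe}, the faithful realization of the machine simulation by a QFP protocol; here it is only mildly heavier (a couple more counters and the cleanup‑and‑restart block of instructions) and is handled with the same kind of claims — closure of machine‑like configurations under the transition relation, an exact correspondence between simulation segments and $M$‑steps, and ``no move'' exactly on out‑of‑room configurations. The second point that needs care is the cycle‑versus‑halting argument: one must verify that the step counter and the zeroing of \emph{all} counters before restarting are precisely what confine the cycles of $M$ to those passing through $(\ell_0,0,0,0)$ — a naive tight self‑loop after \texttt{halt} would be encodable by robot configurations irrespective of whether $M_0$ halts, and would make the instance trivially positive.
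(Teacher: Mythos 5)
Your proposal is sound, but it follows a genuinely different route from the paper's. The paper does not reduce from the halting problem: it reduces from the \emph{repeated reachability} problem for deterministic three-counter \emph{zero-initializing, bounded-strongly-cyclic} machines (does an infinite space-bounded run exist?), whose undecidability it imports from Mayr's work on unreliable computations. You instead build that machinery yourself: starting from a halting instance $M_0$, you construct the halt--cleanup--restart machine $M$ with a step counter and prove the cycle characterization (every cycle of $M$ passes through $(\ell_0,0,0,0)$, and an infinite bounded run exists iff $M_0$ halts); this is essentially a self-contained re-derivation of the property the paper gets for free from its source problem, and your remarks about why the step counter and the full zeroing are needed are exactly the right sanity checks. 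The second difference is at the robot level: on an out-of-room increment the paper's protocol $\overline{\phi}$ \emph{moves anyway}, forcing a collision, and puts $\texttt{Collision}$ (a disjunction of equalities $x_i=x_j$) into $\texttt{Goal}$ alongside $\neg\texttt{Machine\_like}$ and $\texttt{Halting}$; you instead \emph{block} the move and put the out-of-room configurations themselves into $\texttt{Goal}$. Both choices work and both require the same supporting claims (closure of machine-like configurations, correspondence between stable-to-stable segments and $\vdash$-steps, eventual return to stable configurations), and both transfer to all three semantics via unique sequentializability and Theorem~\ref{th:runs-synch-asynch}. What the paper's version buys is brevity (the hard combinatorics is outsourced to a known undecidable problem, and $\texttt{Collision}$ is a particularly simple QFP formula); what yours buys is self-containedness, at the cost of having to argue carefully that your out-of-room predicate is QFP-expressible and that blocked runs are exactly the ones caught by $\texttt{Goal}$.
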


\iflong
\begin{proof}
\else
\begin{proofsketch}
\fi
The proof relies on a reduction from the repeated reachability problem of a deterministic three-counter zero-initializing bounded-strongly-cyclic machine $M$, 
which is undecidable \cite{Mayr03}. A counter machine is zero-initializing if from the initial instruction $\ell_0$ it first sets all the counters to 0. Moreover, an
infinite run is said to be \emph{space-bounded} if there is a value $K\in\nat$ such that all the values of all the counters stay below $K$ during the run. A counter
machine $M$ is
bounded-strongly-cyclic if every space-bounded infinite run starting from any configuration visits $\ell_0$ infinitely often.
The repeated reachability problem we consider is expressed as follows: given a 3-counter zero-initializing bounded-strongly-cyclic machine $M$, 
does there exist an infinite
space-bounded run of $M$?
A configuration of $M$ is encoded in the same fashion than in the proof of Theorem~\ref{th:undecidable-safe}, with 3 robots encoding the values
of the counters. 
\iflong A machine-like configuration in that case is of the form $\block{3}\free^{n_1}\robot_{c_1}\free^*\block{4}\free^{n_2}\robot_{c_2}\free^*\block{5}\free^{n_3}\robot_{c_3}\free^*\block{6}\free^m\robot_c\free^n\block{7}\free^i\robot_\ell\free^{i'}\block{8}\free^p\robot_{\ell'}
\free^r\block{9}\robot_{tt}\free\robot_t\free\free$.\fi
A transition of $M$ is simulated by the algorithm in the same way than above \emph{except that if a counter is to be
increased, the corresponding robot moves accordingly even if there is no room to do it, yielding a collision}. 
\iflong Hence, 
in any machine-like non \stable configuration, exactly one robot moves (hence the only finite runs are either ending in 
a halting configuration, or in a collision, which is not machine-like).
The algorithm that governs the robots in that case is called $\overline{\phi}$ and is a variant of $\phi'$. \comment[ts]{detail more the algorithm?}

Let $\texttt{Machine\_like}$, $\texttt{Halting}$ and $\texttt{Collision}$ be three QFP formulae, with 50 free variables, such that 
$\config\in\denote{\texttt{Machine\_like}}$ 
(respectively $\config\in\denote{\texttt{Halting}}$, $\config\in\denote{\texttt{Collision}}$) if
and only if $\config$ is machine-like (respectively iff $M(\config)$ -- the $M$-configuration encoded by $\config$ -- is a halting configuration, and iff $\config(i)=\config(j)$ for some $i,j\in\robots$).
We then let $\texttt{Goal}=\neg\texttt{Machine\_like}\vee \texttt{Halting}\vee \texttt{Collision}$ and $\texttt{Ring(y)}=y\geq 0$.

We \iflong now \else can \fi show that there is a size $n\in\denote{\texttt{Ring}}$, and a $(50,n)$-configuration $\config$ such that $\poststar{\sync}(\phi,\config)\cap\denote{\texttt{Goal}}=\emptyset$ if and only 
if there exists an infinite space-bounded run of $M$. From
Theorem~\ref{th:runs-synch-asynch}, this \iflong also provides \else provides \fi an undecidability proof for 
\reachability{\ssync} and
\reachability{\async}.
\fi
\iflong
We use the following claims, reminiscent of the claims used in the proof of Theorem~\ref{th:undecidable-safe}:

\begin{myclaim}\label{cl:pos-machine-like}
Let $\config$ be a machine-like configuration. Then, for all configuration $\config'\in\poststar{\sync}(\overline{\phi},\config)$, $\config'\in
\denote{\texttt{Machine\_like}\vee\texttt{Collision}}$.
\end{myclaim}

\begin{myclaim}\label{cl:successor-then-config}
Let $\config$ be a \stable and machine-like synchronous $\overline{\phi}$-configuration corresponding to the following
word-configuration: $$\block{3}\free^{n_1}\robot_{c_1}\free^{n'_1}\block{4}\free^{n_2}\robot_{c_2}\free^{n'_2}\block{5}\free^{n_3}\robot_{c_3}\free^{n'_3}\block{6}
\free^m\robot_c\free^n\block{7}\free^i\robot_\ell\free^{i'}\block{8}\free^p\robot_{\ell'}
\free^r\block{9}\robot_{tt}\free\robot_t\free\free.$$ Let $M(\config)=(\ell_i, n_1,n_2,n_3)$ be a non-halting $M$-configuration. Assume that $\config$ has the following
properties.
$(i)$ if $n_1>m$ (respectively $n-2>m$, $n-3>m$), then $n\geq n_1 - m$ (respectively $n\geq n_2-m$, $n\geq n_3-m$)(if $\ell_i$ modifies $c_1$ - respectively
$c_2$ or $c_3$), 
$(ii)$ if $\ell_i$ increments $c_1$ (resp. $c_2$ or $c_3$) then $n'_1>0$ (resp. $n'_2>0$, or $n'_3>0$), and 
$(iii)$ $i+i' = p+r = |L|$.
Then $\config\asynchtrans^*_{\overline{\phi}} \config'$ with $\config'$
\stable and machine-like, such that $M(\config)\vdash M(\config')$.
\end{myclaim}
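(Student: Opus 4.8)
The plan is to adapt, essentially verbatim, the argument of Claim~\ref{claim:if-successor-then-config} from the proof of Theorem~\ref{th:undecidable-safe}, now with three counter robots and with the ``no-room-yields-collision'' variant $\overline{\phi}$ of $\phi'$. First I would note that, exactly as for $\phi'$, the protocol $\overline{\phi}$ moves at most one robot in any machine-like configuration, and that by Claim~\ref{cl:pos-machine-like} every configuration met along the run I am about to construct stays machine-like --- no collision is created, since hypothesis $(ii)$ supplies a free node for each increment. Hence $\overline{\phi}$ behaves along this run like a uniquely-sequentializable protocol, so by Theorem~\ref{th:runs-synch-asynch} it suffices to exhibit one sequence of single-robot moves from $\config$ to a \stable, machine-like $\config'$ with $M(\config)\vdash M(\config')$ (the interleaved \emph{look} steps of an asynchronous run leave the configuration unchanged).

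Then I would replay the phase structure of Table~\ref{table:config},
\stable $\to$ \moving $\to$ \movingtwo $\to$ \movingthree $\to$ \stabili $\to$ \stabilitwo $\to$ \stable,
on the representative instruction ``test-and-decrement of $c_1$ with target $\ell_j$ if positive, $\ell_{j'}$ if zero'', checking phase by phase that the run is forced and terminates. The transitions between phases are single moves of $\robot_{tt}$ or $\robot_t$; inside a phase a single scanning robot moves monotonically towards a bounded target: in \moving, $\robot_c$ records the value $n_1$, which terminates because $(i)$ guarantees at least $n_1-m$ free nodes in the required direction; in \movingtwo, $\robot_{c_1}$ performs the one move realizing the decrement (when $n_1=m\neq 0$) and then $\robot_{\ell'}$ scans to record the position of $\robot_\ell$; in \movingthree, $\robot_\ell$ walks to the position encoding $\ell_j$ or $\ell_{j'}$ according to the zero-test, which exists and is reached because $(iii)$ leaves exactly the $|L|$ free nodes needed to encode any instruction of $L$. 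The closing phases only restore the trailing pattern $\robot_{tt}\free\robot_t\free\free$. For an increment instruction the argument is the same, with the corresponding robot among $\robot_{c_1}$, $\robot_{c_2}$, $\robot_{c_3}$ moving away from its block; that move finds a free node precisely by $(ii)$, which is the one place where $\overline{\phi}$ departs from $\phi'$. Taking $\config'$ to be the final configuration, it is \stable by the closing phases, machine-like by Claim~\ref{cl:pos-machine-like}, and its decoding is exactly the $\vdash$-successor of $(\ell_i,n_1,n_2,n_3)$.

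The main obstacle is only the bookkeeping: a complete case analysis on the instruction at $\ell_i$ (increment of $c_1$, $c_2$ or $c_3$; zero-test of one of them with the branch taken or not) and, for each case, the direction in which $\robot_c$, $\robot_{\ell'}$ and $\robot_\ell$ must travel, together with a check that the guards governing these moves are pairwise exclusive, so that the run is deterministic and finite. Conceptually nothing is new here --- it is the argument of Claim~\ref{claim:if-successor-then-config} repeated counter by counter --- and the only steps that require genuine care are the termination of the scanning phases (via $(i)$ and $(iii)$) and the use of $(ii)$ to rule out a spurious collision during an increment.
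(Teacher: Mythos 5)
Your proposal is correct and matches the argument the paper intends: the paper in fact states this claim without proof (as it does for the analogous Claim~\ref{claim:if-successor-then-config}), and the justification implicit in the text is exactly your phase-by-phase replay of the \stable/\moving/\movingtwo/\movingthree/\stabili/\stabilitwo cycle, with $(i)$ and $(iii)$ bounding the scanning phases of $\robot_c$, $\robot_{\ell'}$ and $\robot_\ell$, $(ii)$ ruling out a collision at an increment (the one point where $\overline{\phi}$ differs from $\phi'$), and the at-most-one-active-robot property letting you realize the move sequence as an asynchronous run. Your sketch is, if anything, more explicit than what the paper provides.
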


\begin{myclaim}\label{cl:config-then-successor}
Let $\config, \config'$ be two \stable, machine-like configurations. If there exists some $k>0$ such that $\config\synchtrans_{\overline{\phi}}^k \config'$ and that for all $0<j<k$, 
if $\config\synchtrans_{\overline{\phi}}^j\config''$ then $\config''$ is not \stable, then $M(\config)\vdash M(\config')$.
\end{myclaim}

\begin{myclaim}\label{cl:stable}
Let $\config$ be a machine-like configuration, which is not \stable. Then, either $|\poststar{\sync}(\overline{\phi},\config)|$ is finite, or there exists 
$\config'\in\poststar{\sync}(\overline{\phi},\config)$ with $\config'$ \stable.
\end{myclaim}
If there is an infinite space-bounded run of $M$, we let $K\in\nat$ be the maximal values of all the counters during this run. Let $\config_0$ be the $\phi$-configuration
having the following word-representation:
\noindent$\block{3}\robot_{c_1}\free^{K}\block{4}\robot_{c_2}\free^{K}\block{5}\robot_{c_3}\free^K\block{6}\robot_c\free^K\block{7}\robot_\ell\free^{i'}\block{8}
\free^p\robot_{\ell'}\free^r\block{9}$\\
$\robot_{tt}\free\robot_t\free\robot_g\free\robot_d$. Hence $M(\config_0)$ is the initial configuration of $M$. It is easy to show that, for all 
$\config\in\poststar{\sync}(\overline{\phi}, \config_0)$, $\config$ satisfies conditions $(i)$, $(ii)$, and $(iii)$ of Claim~\ref{cl:successor-then-config}. Hence, by
applying iteratively Claim~\ref{cl:successor-then-config}, we can
build an infinite  $\overline{\phi}$-run $\rho=\config_0\config_1\cdots$
such that $\config_i\notin\denote{\texttt{Collision}\vee \texttt{Halting}}$ for all $i\geq 0$. Hence, by Claim~\ref{cl:pos-machine-like}, $\config_i\in\denote{\texttt{Machine\_like}}$
 for all $i\geq 0$, and $\config_i\notin\denote{\texttt{Goal}}$ for all $i\geq 0$. Moreover, for all machine-like configuration $\config$, for all $i\in\robots$, 
$\ViewR{\config}{i}\neq\ViewL{\config}{i}$, then it can have at most one successor, and $\rho$ is the only $\phi$-run starting from $\config_0$. Hence,
$\poststar{\sync}(\phi,\config_0)\cap\denote{\texttt{Goal}}=\emptyset$.

Conversely, assume that there is $n\in\nat$ and a $(k,n)$-$\phi$-configuration $\config_0$ such that 
$\poststar{\sync}(\overline{\phi},\config_0)\cap\denote{\texttt{Goal}}=\emptyset$. Hence, for all $\config\in\poststar{\sync}(\overline{\phi},\config_0)$,
$\config\in\denote{\texttt{Machine\_like}}$. Then, according to the definition
of the protocol, there is a unique synchronous $\overline{\phi}$-run starting from $\config_0$. Assume for the sake of contradiction that this run is finite. 
Let $\rho=\config_0\cdots\config_m$ be such a run. Then $\config_m\notin\denote{\texttt{Goal}}$ hence $M(\config_m)$ is not a halting configuration and $\config_m$
is collision-free. If it is not \stable, $\post_{\sync}(\overline{\phi},\config_m)\neq\emptyset$, from the definition of the protocol. If $\config_m$ is \stable, either 
$M(\config_m)=(\ell_h,n_1,n_2,n_3)$ for some $n_1,n_2,n_3 \in\nat$, but then $\config_m\in\denote{\texttt{Halting}}\subseteq\denote{\texttt{Goal}}$, which is not 
possible, or there exists a configuration $\config_{m+1}$ such that $\config_m\synchtrans\config_{m+1}$ and $\rho$ can be continued. Hence let $\rho=\config_0\config_1\cdots$ be the infinite synchronous $\overline{\phi}$-run
starting from $\config_0$. Assume that $\config_0$ is not \stable. Then, by Claim~\ref{cl:stable}, there exists $i\geq 0$, such that $\config_i$
is \stable. By definition of $\overline{\phi}$, $\config_{i+1}$ is not \stable, but by Claim~\ref{cl:stable} and Claim~\ref{cl:config-then-successor}, there exists $j>i+1$ such that
$\config_j$ is \stable and $M(\config_i)\vdash M(\config_j)$. By iterating this reasoning, we can in fact build an infinite run of $M$ starting in $M(\config_i)$. 
Let $K$ be the maximal number of positions between respectively $\block{3}$ and $\block{4}$, $\block{4}$ and $\block{5}$, $\block{5}$ and $\block{6}$
and $\block{6}$ and $\block{7}$ in $\config_0$. It is easy to see that this distance is an invariant of any $\overline{\phi}$-run. Hence, for any $k\geq 0$ such that $\config_k$
is \stable, $M(\config_k)=(\ell, n_1,n_2,n_3)$ with $n_i\leq K$ for $i\in\{1,2,3\}$, and the infinite run of $M$  is indeed space-bounded. Let $C_0\vdash C_1\cdots$
be such a run. Since $M$ is 
bounded-strongly-cyclic, there exists $i\geq 0$ such that $C_i=(\ell_0,n_1,n_2,n_3)$ with $n_i\in \nat$ for $i=\{1,2,3\}$, and since $M$ is zero-initializing, then 
there exists $j\geq i$ such that $C_j=(\ell_0,0,0,0)$. Hence,
$M$ has an infinite space-bounded run from $(\ell_0,0,0,0)$.
\fi

\iflong
\end{proof}
\else\qed
\end{proofsketch}
\fi

\section{Decidability results and case study}
In this section, we show that even if \safety{\async}, \reachability{\async}, \reachability{\ssync} and \reachability{\sync} are undecidable, the other cases \safety{\sync} and \safety{\ssync} can be reduced to the satisfiability problem for EP formulae, which is decidable and \textsc{NP}-complete \cite{Borosh&Treybig76}.
\subsection{Reducing safety to successor checking}
The first step towards decidability is to remark that to solve \safety{\sync} and \safety{\ssync} it is enough to look at the one-step successor.  Let $\phi$ be a protocol over $k$ robots and $\texttt{Ring}$ and  $\texttt{Bad}$ be respectively a ring property and a set of bad configurations. We have then the following lemma.

\begin{lemma}
\label{lem:poststar-to-post}
Let $n\in\nat$ such that $n\in \denote{\texttt{Ring}}$ and $m \in \set{\sync,\ssync}$. There exists a $(k,n)$-configuration $\config$ with $\config\notin\denote{\texttt{Bad}}$, such that 
$\poststar{m}(\phi,\config)\cap\denote{\texttt{Bad}} \neq\emptyset$ iff there exists a $(k,n)$-configuration $\config'$ with $\config'\notin\denote{\texttt{Bad}}$, such that 
$\post_{m}(\phi,\config')\cap\denote{\texttt{Bad}} \neq\emptyset$.
\end{lemma}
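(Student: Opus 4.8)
The plan is to prove the equivalence by handling its two directions separately, the right-to-left implication being immediate and the left-to-right one following from a ``first visit to $\denote{\texttt{Bad}}$'' argument along a reaching run.

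For the ($\Leftarrow$) direction I would simply note that $\post_{m}(\phi,\config') \subseteq \poststar{m}(\phi,\config')$, since the one-step relation is contained in its own reflexive--transitive closure; hence any $(k,n)$-configuration $\config'$ with $\config' \notin \denote{\texttt{Bad}}$ and $\post_{m}(\phi,\config') \cap \denote{\texttt{Bad}} \neq \emptyset$ also witnesses, by taking $\config := \config'$, that $\poststar{m}(\phi,\config) \cap \denote{\texttt{Bad}} \neq \emptyset$ with $\config \notin \denote{\texttt{Bad}}$.

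For the ($\Rightarrow$) direction, starting from a $(k,n)$-configuration $\config \notin \denote{\texttt{Bad}}$ with $\poststar{m}(\phi,\config) \cap \denote{\texttt{Bad}} \neq \emptyset$, I would unfold the definition of $\poststar{m}$ (the reflexive--transitive closure of $\semtrans$ when $m = \ssync$, of $\synchtrans$ when $m = \sync$) to obtain a finite $m$-mode $\phi$-run $\config = \config_0 \to \config_1 \to \cdots \to \config_\ell$ with $\config_\ell \in \denote{\texttt{Bad}}$, and then pick such a run of minimal length $\ell$. Since $\config_0 = \config \notin \denote{\texttt{Bad}}$ we get $\ell \geq 1$, and minimality forces $\config_j \notin \denote{\texttt{Bad}}$ for every $0 \leq j < \ell$ (otherwise the prefix up to $\config_j$ would be a strictly shorter reaching run); in particular $\config_{\ell-1} \notin \denote{\texttt{Bad}}$, while $\config_\ell \in \post_{m}(\phi,\config_{\ell-1}) \cap \denote{\texttt{Bad}}$, so $\config' := \config_{\ell-1}$ is the required witness. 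I would also observe that each $\config_j$ is a genuine $(k,n)$-configuration, since $\semtrans$ and $\synchtrans$ map $[0,n-1]^k$ into $[0,n-1]^k$ and the ring size is never modified.

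I do not expect any real obstacle: the whole argument is uniform in $m \in \{\sync,\ssync\}$ and relies only on the facts that $\poststar{m}$ is generated by finite runs, that $\post_{m} \subseteq \poststar{m}$, and that transitions preserve the set of $(k,n)$-configurations. The only point requiring a little care is to invoke minimality of the run length, which is precisely what keeps the entire prefix --- and hence the chosen witness $\config'$ --- outside $\denote{\texttt{Bad}}$.
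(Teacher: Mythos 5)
Your proof is correct and follows essentially the same route as the paper: the ($\Leftarrow$) direction is immediate from $\post_{m}\subseteq\poststar{m}$, and for ($\Rightarrow$) the paper likewise takes the configuration just before the first occurrence of a configuration in $\denote{\texttt{Bad}}$ along a reaching run, which is exactly your minimal-length argument.
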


\iflong
\begin{proof}
The $\Leftarrow$ direction is obvious. For the $\Rightarrow$ direction, if there exists a synchronous or semi-synchronous  run $\rho=\config_0 \config_1\dots\config_n$  with $\config_0=\config$ and $\config_n$ being the first configuration in $\rho$ belonging to $\denote{\texttt{Bad}}$, then by taking $\config'=\config_{n-1}$ we have $\config'\notin\denote{\texttt{Bad}}$ and
$\post_{m}(\phi,\config')\cap\denote{\texttt{Bad}} \neq\emptyset$.
\end{proof}
\fi
This last result may seems strange at a first sight but it can easily be explained by the fact that robots protocols are most of the time designed to work without any assumption on the initial configuration, except that it is not a bad configuration.

\subsection{Encoding successor computation in Presburger}

We now describe various EP formulae to be used to express the computation of the successor configuration in synchronous and semi-synchronous mode.

First we show how to express the view of some robot  $R_i$ in a configuration $\config$, with the following formula:
$$
\begin{array}{c}
\configview{i}(y,p_1,\dots,p_k,d_1,\dots,d_k) := \\ 
\exists d'_1,\dots, d'_{k-1} \cdot \bigwedge_{j=1}^{k-2}d'_j\leq d'_{j+1}\wedge \\
\bigwedge_{j=1, j\neq i}^k (\bigvee_{\ell=1}^{k-1} p_j=(p_i+d'_\ell)\mod y) \wedge \\
 \bigwedge_{\ell=1}^k(\bigvee_{j=1, j\neq i}^k p_j=(p_i+d'_\ell)\mod y))\wedge\\
0<d'_1\wedge \bigwedge_{j=1}^{k-1}d'_j\leq y \wedge \\
 d_1=d'_1\wedge \bigwedge_{j=2}^{k-1}d_j=d'_j-d'_{j-1}\wedge d_k=y-d'_{k-1},
\end{array}
$$
Note that this formula only expresses in the syntax of Presburger arithmetic the definition of  $\ViewR{\config}{i}$ where the variable $y$ is used to store the length of the ring, $p_1,\ldots,p_k$ represent $\config$ and the variables $d_1,\ldots,d_k$ represent the view. \iflong In fact, we have the following statement. 
 \begin{lemma}\label{prop:configview}
 For all $i\in [1,k]$, we have  $n,\config,\mathbf{V} \models \configview{i}$ 
if and only if $\tuple{d_1,\dots, d_k} = \ViewR{\config}{i}$ on a ring of size $n$ .
\end{lemma}

\begin{proof}
Assume  $n,\config,\mathbf{V} \models \configview{i}$. Then, there exist $k-1$ variables, $d'_1,\dots, d'_{k-1}\in [1,n]$ such that 
$0<d'_1\leq d'_2\leq\dots\leq d'_{k-1}$. Moreover, there exists a bijection $f:[1,k-1]\rightarrow [1,k-1]$ such that, for all $j\neq i$,
$p_j = (p_i+d'_{f(j)}) \mod n$. Finally, $d_1=d'_1$ and for all $j\in [2,n-1]$, $d_j=d'_j-d'_{j-1}$ and $d_k=n-d'_{k-1}$. Hence, if we consider
the configuration $\config$ defined by $\config(j)=p_j$ for all $j\in[1,n]$, then $\tuple{d_1,\dots, d_k} = \ViewR{\config}{i}$.
Conversely, let $\config$ be a $(k,n)$-configuration and $\ViewR{\config}{i}=\tuple{d_1,\dots,d_k}$. Then, 
$n,\config,\mathbf{V} \models \configview{i}$. Indeed, by definition of the view, we let $d_i(j)\in [1,n]$ be such that 
$(\config(i) + d_i(j)) \modulo n = \config(j)$ for all $j\neq i$ and we let $i_1,\dots, i_k$ be a permutation of positions such that 
$0<d_i(i_1)\leq d_i(i_2)\leq\dots\leq d_i(i_{k-1})$. Then, for all $j\in [2,k-1]$, $d_j=d_i(i_j)-d_i(i_{j-1})$, $d_1=d_i(i_1)$ and $d_k=n-d_i(i_{k-1})$. By interpreting
the variables $d'_1,\dots, d'_k$ by respectively $d_i(i_1), \dots, d_i(i_k)$, it is easy to see that the formula is satisfied.
\end{proof}
\fi

We also use the formula $\viewsym(d_1,\dots, d_k, d'_1, \dots,
d'_k)$ that is useful to compute the symmetric of a view.
$$
\begin{array}{c}
\viewsym(d_1,\dots, d_k, d'_1, \dots, d'_k):= \\
\bigvee_{j=1}^k(\bigwedge_{\ell=j+1}^{k} (d_\ell=0 \wedge d'_\ell=0) \wedge \bigwedge_{\ell=1}^j d'_\ell=d_{j-\ell+1})
\end{array}
$$

\iflong\begin{lemma}\label{prop:viewsym}
For all $n\in\nat$, for all views $\mathbf{V},\mathbf{V}'\in [0,n]^k$,  we have $\mathbf{V},\mathbf{V'} \models \viewsym$ if and only if
 $\mathbf{V}'=\revert{\mathbf{V}}$.
\end{lemma}

\begin{proof}
Given $n\in\nat$, $d_1,\dots, d_k, d'_1,\dots, d'_k\in [0,n]$ such that $d_1\neq 0$ we have $\tuple{d'_1,\dots, d'_k}=\revert{\tuple{d_1,\dots,d_k}}$
if and only if there exists $1\leq j\leq k$ such that $d_\ell=0$ for all $j+1\leq \ell\leq k$ and $d'_1=d_j$, \dots, $d'_j=d_1$ and $d'\ell=0$ for all $j+1\leq \ell\leq k$
(by definition), if and only if $d_1,\ldots,d_k,d'_1,\ldots,d'_k\models \viewsym$.
\end{proof}
\fi

We are now ready to introduce the formula $\presburgmove^{\phi}_i(y,p_1,\dots, p_k, p')$, which is true if and only if, on a ring of size $n$ (represented by the variable $y$),
 the move of robot $R_i$
according to the protocol $\phi$ from the configuration $\config$ yields to the new position $p'$. Here the variables $p_1,\ldots,p_k$ characterizes  $\config$.
. 

$$
\begin{array}{c}
\presburgmove^\phi_i(y,p_1,\dots, p_k,p'):= \\
\exists d_1,\dots,d_k, d'_1, \cdots, d'_k \cdot \\
\configview{i}(y,p_1,\dots, p_k, d_1,\dots, d_k)\wedge\\
 \viewsym(d_1,\dots, d_k,d'_1,\dots, d'k)\wedge \\
 \Bigl[\Bigl(\phi(d_1,\dots,d_k)\wedge \bigl((p_i<y-1\wedge p'=p_i+1)\\
\vee (p_i=y-1\wedge p'=0)\bigr )\Bigr) \vee \\
\Bigl ( \phi(d'_1,\dots, d'_k) \wedge \bigl((p_i>0 \wedge p'=p_i-1) \\
\vee(p_i=0 \wedge p'=y-1) \bigr)\Bigr)\\
\vee \Bigl(\neg\phi(d_1,\dots, d_k)\wedge\neg\phi(d'_1,\dots, d'_k)\wedge(p'=p_i)\Bigr)\Bigr]
\end{array}
$$
\iflong
\begin{lemma}\label{prop:presburgmove}
For all $n\in\nat$ and a $(k,n)$-configurations $\config$ and $p' \in [0,n-1]$, we have $n,\config,p' \models \presburgmove^\phi_i$ if and only if $p'=(\config(i)+m)\modulo n$ with $m\in\mov(\phi,\ViewR{\config}{i})$.
\end{lemma}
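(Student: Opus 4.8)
The plan is to unfold the definition of $\presburgmove^\phi_i$ and reduce the claim to the two preceding lemmas together with the definition of $\mov$. Fix $n\in\nat$, a $(k,n)$-configuration $\config$ with $\config(j)=p_j$ for each $j$, and a candidate position $p'\in[0,n-1]$. The formula $\presburgmove^\phi_i$ asserts the existence of tuples $\langle d_1,\dots,d_k\rangle$ and $\langle d'_1,\dots,d'_k\rangle$ satisfying respectively the $\configview{i}$ conjunct and the $\viewsym$ conjunct, plus one of three position constraints. By Lemma~\ref{prop:configview}, any such $\langle d_1,\dots,d_k\rangle$ must equal $\ViewR{\config}{i}$, and conversely $\ViewR{\config}{i}$ is a legal witness; by Lemma~\ref{prop:viewsym}, the second tuple is then forced to be $\revert{\ViewR{\config}{i}}=\ViewL{\config}{i}$. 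So, writing $V=\ViewR{\config}{i}$, I would first establish that $n,\config,p'\models\presburgmove^\phi_i$ iff one of the following holds: (a) $V\models\phi$ and $p'=(p_i+1)\modulo n$; (b) $\revert{V}\models\phi$ and $p'=(p_i-1)\modulo n$; (c) $V\not\models\phi$, $\revert{V}\not\models\phi$, and $p'=p_i$. Here the two sub-disjuncts inside cases (a) and (b) of the formula ($p_i<y-1\wedge p'=p_i+1$ versus $p_i=y-1\wedge p'=0$, and symmetrically for $-1$) are nothing but a spelled-out computation of $(p_i\pm 1)\modulo n$.

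The second step is to line this three-way disjunction up with the four-case definition of $\mov(\phi,V)$. Using that definition together with the protocol constraint of Definition~\ref{def:protocol} --- which forbids both $V\models\phi$ and $\revert{V}\models\phi$ when $V\neq\revert{V}$ --- I would record the equivalences $+1\in\mov(\phi,V)\iff V\models\phi$; $-1\in\mov(\phi,V)\iff\revert{V}\models\phi$; and $0\in\mov(\phi,V)\iff V\not\models\phi$ and $\revert{V}\not\models\phi$, and note that $\mov(\phi,V)\subseteq\{-1,0,+1\}$. With these at hand, disjunct (a) says exactly that $p'=(\config(i)+m)\modulo n$ for $m=+1\in\mov(\phi,V)$, disjunct (b) the same for $m=-1$, and disjunct (c) for $m=0$. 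This yields both implications: from a satisfying assignment of $\presburgmove^\phi_i$ one reads off the corresponding $m$; conversely, given $m\in\mov(\phi,\ViewR{\config}{i})$ one picks the witnesses $\langle d_1,\dots,d_k\rangle=\ViewR{\config}{i}$ and $\langle d'_1,\dots,d'_k\rangle=\ViewL{\config}{i}$ --- legal by the converse directions of Lemmas~\ref{prop:configview} and~\ref{prop:viewsym} --- and checks the appropriate disjunct.

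The argument is routine; the only places that need care are the edge cases of modular arithmetic, which are already handled by the explicit case split in the formula, and the alignment of the three-way disjunction with the four-way definition of $\mov$. In particular, the symmetric case $V=\revert{V}$ with $V\models\phi$, where $\mov(\phi,V)=\{-1,+1\}$, is covered because then both disjuncts (a) and (b) become satisfiable (for the two corresponding values of $p'$), matching $m\in\{-1,+1\}$; I do not expect any genuine obstacle beyond this bookkeeping.
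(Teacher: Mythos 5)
Your proposal is correct and follows essentially the same route as the paper's proof: unfold the formula, use Lemmas~\ref{prop:configview} and~\ref{prop:viewsym} to pin the witnesses to $\ViewR{\config}{i}$ and $\ViewL{\config}{i}$, and then match the resulting three-way disjunction against the definition of $\mov$. Your extra bookkeeping on the equivalences $+1\in\mov(\phi,V)\iff V\models\phi$, etc., is sound and merely makes explicit a step the paper leaves implicit.
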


\begin{proof}
We have $n,\config,\config' \models \presburgmove^\phi_i$ if and only if there exist $d_1,\dots, d_k, d'_1, \dots, d'_k\in [0,n]$ such that
$\tuple{d_1,\dots, d_k} = \ViewR{\config}{i}$ (by Lemma~\ref{prop:configview}) and $\tuple{d'_1,\dots,d'_k}=\revert{\tuple{d_1,\dots,d_k}}=\ViewL{\config}{i}$
(by Lemma~\ref{prop:viewsym}) and either $(i)$ $\ViewR{\config}{i}\models\phi$ and $p'=(p_i+1) \modulo n$, or $(ii)$ $\ViewL{\config}{i}\models\phi$ and
$p'=(p_i-1)\modulo n$, or $(iii)$ $\ViewR{\config}{i}\not\models\phi$, $\ViewL{\config}{i}\not\models\phi$ and $p'=p_i$, if and only if 
either $(i)$ $1\in \mov(\phi, \ViewR{\config}{i})$ and $p'=(p_i+1)\modulo n$ or $(ii)$ $-1\in\mov(\phi,\ViewR{\config}{i})$ and $p'=(p_i-1)\modulo n$ or 
$(iii)$ $\mov(\phi,\ViewR{\config}{i})=\{0\}$ and $p'=p_i$ if and only if $p'=(p_i+m)\modulo n$ with $m\in\mov(\phi,\ViewR{\config}{i})$.
\qed
\end{proof}
\fi
Now, given two $(k,n)$-configurations $\config$ and $\config'$, and a $k$-protocol $\phi$, it is easy to express the fact
that $\config'$ is a successor configuration of $\config$ according to $\phi$ in a semi-synchronous run (resp. synchronous run); for this we define the two formulae $\semisyncpost_\phi(y,p_1,\dots, p_k, p'_1,\dots, p'_k)$  and $\syncpost_\phi(y,p_1\dots, p_k, p'_1,\dots, p'_k)$)  as follows:

$$
\begin{array}{c}
\semisyncpost_\phi(y,p_1,\dots, p_k, p'_1, \dots, p'_k):= \\
\bigvee_{i=1}^k \bigr( \presburgmove^\phi_i(y,p_1, \dots, p_k, p'_i) \wedge\\
\bigwedge_{j=1, j\neq i}^k ((p'_j=p_j)\vee \presburgmove^\phi_j(y,p_1, \dots, p_k, p'_j))\bigr)\\
\textcolor{white}{aa}\\
\syncpost_\phi(y,p_1,\dots, p_k, p'_1, \dots, p'_k):=\\\bigwedge_{i=1}^k\presburgmove^\phi_i(y,p_1,\dots,p_k,p'_i)
\end{array}
$$

\begin{lemma}
\label{lem:post-to-Presburger}
For all $n \in \nat$ and all $(k,n)$-configurations $\config$ and $\config'$, we have:
\begin{enumerate} 
\item $\config\semtrans \config'$ if and only if
$n,\config,\config' \models \semisyncpost_\phi$,
\item $\config\synchtrans \config'$
if and only if $n,\config,\config' \models \syncpost_\phi$.
\end{enumerate}
\end{lemma}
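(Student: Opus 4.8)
The plan is to prove each of the two equivalences by unfolding the definitions of the semantic transition relations and the Presburger formulae, using the characterisations of the auxiliary formulae established earlier. The crucial ingredient is Lemma~\ref{prop:presburgmove}, which states that $n,\config,p' \models \presburgmove^\phi_i$ if and only if $p' = (\config(i)+m)\modulo n$ for some $m \in \mov(\phi,\ViewR{\config}{i})$; this lemma in turn rests on Lemma~\ref{prop:configview} (correctness of $\configview{i}$) and Lemma~\ref{prop:viewsym} (correctness of $\viewsym$). So the heavy lifting is already done, and what remains is essentially a careful bookkeeping argument matching the shape of $\semisyncpost_\phi$ and $\syncpost_\phi$ against the definitions of $\semtrans$ and $\synchtrans$.

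For item~2 (the synchronous case) I would argue as follows. By definition, $\config\synchtrans\config'$ holds iff for every $i\in\robots$ we have $\config'(i) = (\config(i)+m)\modulo n$ with $m\in\mov(\phi,\ViewR{\config}{i})$. By Lemma~\ref{prop:presburgmove}, this condition on index $i$ is equivalent to $n,\config,\config'(i)\models\presburgmove^\phi_i$. Taking the conjunction over all $i\in[1,k]$ gives exactly $n,\config,\config'\models \bigwedge_{i=1}^k\presburgmove^\phi_i(y,p_1,\dots,p_k,p'_i) = \syncpost_\phi$, which closes the equivalence.

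For item~1 (the semi-synchronous case), recall that $\config\semtrans\config'$ holds iff there is a subset $I\subseteq\robots$ such that for $i\in I$, $\config'(i) = (\config(i)+m)\modulo n$ with $m\in\mov(\phi,\ViewR{\config}{i})$, and for $i\notin I$, $\config'(i)=\config(i)$. The only subtlety is that $\semisyncpost_\phi$ is written as a disjunction over a distinguished robot $i$ that \emph{does} move, with every other robot $j$ either keeping its position ($p'_j=p_j$) or making a legal move; so I must check that the two descriptions agree. If $\config\semtrans\config'$ via a \emph{nonempty} $I$, pick any $i\in I$: then $\presburgmove^\phi_i$ holds by Lemma~\ref{prop:presburgmove}, and for each $j\neq i$ either $j\notin I$, whence $p'_j=p_j$, or $j\in I$, whence $\presburgmove^\phi_j$ holds; so the $i$-th disjunct of $\semisyncpost_\phi$ is satisfied. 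If $I=\emptyset$ then $\config'=\config$; but note that $0\in\mov(\phi,\ViewR{\config}{i})$ would require $\ViewR{\config}{i}\not\models\phi$ and $\ViewL{\config}{i}\not\models\phi$ --- in general a robot that does not move need not have $\mov=\{0\}$, so strictly $I=\emptyset$ is always a legal choice and yields $\config'=\config$; I should double-check whether $\semisyncpost_\phi$ as written admits $\config'=\config$ (it does when at least one robot $i$ satisfies $\presburgmove^\phi_i(y,\bar p, p_i)$, e.g. any robot with $\ViewR{\config}{i}\not\models\phi$ and $\ViewL{\config}{i}\not\models\phi$, which always exists by the structure of views, or when $I$ can be taken as a singleton of a robot whose move returns it to its place --- impossible on a ring, so the honest statement is that $\config\semtrans\config$ is witnessed by $I=\emptyset$ and correspondingly by the disjunct for any non-moving robot $i$). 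Conversely, if $n,\config,\config'\models\semisyncpost_\phi$, the satisfied disjunct gives an index $i$ with $\presburgmove^\phi_i$ true; set $I=\{j : p'_j\neq p_j\}\cup\{i\}$ if $i$ actually moved, or handle the degenerate case; then Lemma~\ref{prop:presburgmove} certifies $\config\semtrans\config'$ via $I$.

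The main obstacle I anticipate is exactly this matching of the existential witness in $\semisyncpost_\phi$: the formula is phrased so that one particular robot $i$ is singled out as ``the mover'', and one must verify that (a) every semi-synchronous step, including the stuttering step $\config\semtrans\config$, is captured by \emph{some} disjunct, and (b) conversely no spurious transitions are introduced --- the clause $(p'_j=p_j)\vee\presburgmove^\phi_j(\dots)$ for $j\neq i$ is doing double duty, since $p'_j=p_j$ covers both ``robot $j$ was not scheduled'' and (vacuously) ``robot $j$ was scheduled but $\mov(\phi,\ViewR{\config}{j})$ brought it back'', the latter being impossible on a ring. Once that correspondence is spelled out, everything else is a direct substitution using Lemma~\ref{prop:presburgmove}, so the proof is short; I would keep it to a couple of sentences per item with the semi-synchronous witness management made explicit.
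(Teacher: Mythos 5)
Your overall route is the intended one: the paper in fact gives no explicit proof of this lemma, treating it as an immediate consequence of Lemma~\ref{prop:presburgmove} (which itself rests on Lemmas~\ref{prop:configview} and~\ref{prop:viewsym}), and your synchronous case together with your treatment of a non-empty scheduled set $I$ in the semi-synchronous case is exactly the bookkeeping that is required. The converse direction with the witness $I=\{i\}\cup\{j : p'_j\neq p_j\}$ is also correct.

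The one genuine problem is your resolution of the stuttering case $I=\emptyset$. You assert that a robot $i$ with $\ViewR{\config}{i}\not\models\phi$ and $\ViewL{\config}{i}\not\models\phi$ ``always exists by the structure of views''; this is false. Take $k=2$ robots at antipodal positions on a ring of size $n=4$: both have the palindromic view $\tuple{2,2}$, and if $\tuple{2,2}\models\phi$ (which is compatible with Definition~\ref{def:protocol}, since the constraint there only bites when $\mathbf{V}\neq\revert{\mathbf{V}}$) then $\mov(\phi,\ViewR{\config}{i})=\{-1,+1\}$ for both robots. Then $\presburgmove^\phi_i(y,p_1,p_2,p_i)$ fails for both $i$, so no disjunct of $\semisyncpost_\phi$ is satisfied by $(n,\config,\config)$, even though $\config\semtrans\config$ holds via $I=\emptyset$ under the literal definition of $\semtrans$. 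So you must either read the semi-synchronous scheduler as activating a \emph{non-empty} subset of robots --- which is what $\semisyncpost_\phi$ encodes and what the standard SSYNC model assumes, in which case your $I\neq\emptyset$ argument already is the whole proof --- or explicitly note that the discrepancy concerns only the transition $\config\semtrans\config$, which is harmless for the safety application since it cannot take a configuration outside $\denote{\texttt{Bad}}$ to one inside it. Either way, the claim that a non-moving robot always exists has to be removed.
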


\subsection{Results}

Now since to solve $\safety{\ssync}$ and $\safety{\sync}$, we only need to look at the successor in one step, as stated by Lemma \ref{lem:poststar-to-post}, and thanks to the formulae $ \semisyncpost_\phi$ and $ \syncpost_\phi$ and their properties expressed by Lemma  \ref{lem:post-to-Presburger}, we deduce that these two problems can be expressed in Presburger arithmetic.

\begin{theorem}
\safety{\sync} and \safety{\ssync} are decidable and in \textsc{NP}.
\end{theorem}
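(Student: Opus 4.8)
The plan is to reduce \safety{\sync} and \safety{\ssync} to the satisfiability problem for Existential Presburger arithmetic, which is decidable and \textsc{NP}-complete by~\cite{Borosh&Treybig76}, and to check that the resulting formula has size polynomial in the input. First I would recall the chain of reductions already established in the excerpt. By Lemma~\ref{lem:poststar-to-post}, for $m\in\set{\sync,\ssync}$ it suffices to decide whether there is a ring size $n\in\denote{\texttt{Ring}}$ and a $(k,n)$-configuration $\config\notin\denote{\texttt{Bad}}$ with $\post_m(\phi,\config)\cap\denote{\texttt{Bad}}\neq\emptyset$; that is, a one-step successor lands in \texttt{Bad}. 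By Lemma~\ref{lem:post-to-Presburger}, the relation $\config\semtrans\config'$ (resp.\ $\config\synchtrans\config'$) is captured exactly by the EP formula $\semisyncpost_\phi$ (resp.\ $\syncpost_\phi$) over the free variables $y,p_1,\dots,p_k,p'_1,\dots,p'_k$.

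Next I would assemble the target formula. For $m=\sync$ (the case $m=\ssync$ is identical with $\semisyncpost_\phi$ in place of $\syncpost_\phi$), define
$$
\Psi_\sync := \exists y.\ \exists p_1,\dots,p_k.\ \exists p'_1,\dots,p'_k.\
\texttt{Ring}(y)\wedge \bigwedge_{i=1}^k p_i<y \wedge \neg\texttt{Bad}(p_1,\dots,p_k)\wedge \syncpost_\phi(y,p_1,\dots,p_k,p'_1,\dots,p'_k)\wedge \texttt{Bad}(p'_1,\dots,p'_k).
$$
I would then argue correctness: a satisfying assignment of $\Psi_\sync$ gives a ring size $n$ (the value of $y$, which lies in $\denote{\texttt{Ring}}$), a configuration $\config=(p_1,\dots,p_k)\in[0,n-1]^k$ outside $\denote{\texttt{Bad}}$, and by Lemma~\ref{lem:post-to-Presburger} a successor $\config'=(p'_1,\dots,p'_k)$ with $\config\synchtrans\config'$ and $\config'\in\denote{\texttt{Bad}}$; conversely such $\config,\config'$ yield a model of $\Psi_\sync$. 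Hence $\Psi_\sync$ is satisfiable iff the (one-step, and thus by Lemma~\ref{lem:poststar-to-post} the reachability) instance of \safety{\sync} is positive. Note that $\syncpost_\phi$ and $\semisyncpost_\phi$ are genuine EP formulae: although $\phi$ is quantifier-free possibly with negations, the only place negation is used in $\presburgmove^\phi_i$ is on $\phi$ itself, and a QFP formula is closed under negation and lies inside EP, so the whole construction stays within the existential fragment.

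Finally I would address complexity. One checks that $\configview{i}$, $\viewsym$, $\presburgmove^\phi_i$, $\semisyncpost_\phi$ and $\syncpost_\phi$ all have size polynomial in $k$ and in $|\phi|$, and that $\Psi_\sync$ adds only $O(|\texttt{Ring}|+|\texttt{Bad}|+|\syncpost_\phi|)$ more symbols together with $2k+1$ leading existential quantifiers; so $\Psi_\sync$ is polynomial in the size of the input $(\phi,\texttt{Ring},\texttt{Bad})$. Since satisfiability of EP formulae is in \textsc{NP}~\cite{Borosh&Treybig76}, \safety{\sync} and \safety{\ssync} are in \textsc{NP}. I expect no serious obstacle here: the main care points are purely bookkeeping, namely verifying that negation is only ever applied to quantifier-free subformulae so that everything remains existential, confirming that the bound constraints $p_i<y$ correctly restrict configurations to $[0,n-1]^k$, and double-checking that the formula sizes are polynomial rather than exponential in $k$ (the disjunctions and conjunctions in $\configview{i}$ and $\semisyncpost_\phi$ range over $O(k^2)$ pairs, which is fine).
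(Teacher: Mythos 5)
Your proposal is correct and follows essentially the same route as the paper: reduce to one-step successor checking via Lemma~\ref{lem:poststar-to-post}, encode the resulting condition as the EP formula $\syncpost_\phi \wedge \texttt{Ring} \wedge \neg\texttt{Bad}(p_1,\dots,p_k) \wedge \texttt{Bad}(p'_1,\dots,p'_k)$ (resp.\ with $\semisyncpost_\phi$) using Lemma~\ref{lem:post-to-Presburger}, and conclude membership in \textsc{NP} from the complexity of EP satisfiability. The extra bookkeeping you supply (the bounds $p_i<y$, the observation that negation only touches quantifier-free subformulae, and the polynomial size check) is sound and only makes explicit what the paper leaves implicit.
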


\begin{proof}
We consider a ring property  $\texttt{Ring}(y)$, a protocol $\phi$ for $k$ robots (which is a QFP formula) and a set of bad configurations given by a QFP formula $\texttt{Bad}(x_1,\ldots,x_k)$. We know that there exists a size $n\in\nat$ with $n\in \denote{\texttt{Ring}}$, and a $(k,n)$-configuration $\config$ with $\config\notin\denote{\texttt{Bad}}$, such that 
$\poststar{\sync}(\phi,\config)\cap\denote{\texttt{Bad}} \neq\emptyset$ if and only if  there exists a $(k,n)$-configuration $\config'$ with $\config'\notin\denote{\texttt{Bad}}$, such that 
$\post_{m}(\phi,\config')\cap\denote{\texttt{Bad}} \neq\emptyset$. By Lemma \ref{lem:post-to-Presburger}, this latter property is true if and only if the following formula is satisfiable:

$$
\begin{array}{c}
\syncpost_\phi(y,p_1,\dots, p_k, p'_1, \dots, p'_k) \wedge \\
\texttt{Ring}(y) \wedge \neg \texttt{Bad}(p_1,\ldots,p_k) \wedge \\
\texttt{Bad}(p'_1,\ldots,p'_k)
\end{array}
$$
For the semi-synchronous case, we replace the formula $\syncpost_\phi$ by $\semisyncpost_\phi$. The \textsc{NP} upper bound is obtained by the fact that the built formula is an EP formula.
\end{proof}

When the protocol $\phi$ is uniquely-sequentializable, i.e. when in each configuration at most one robot make the decision to move then Theorem \ref{th:runs-synch-asynch} leads us to the following result.

\begin{corollary}
\label{cor:unique-sequentializable}
When the protocol $\phi$ is uniquely-sequentializable, \safety{\async} is decidable.
\end{corollary}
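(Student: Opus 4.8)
The plan is to reduce \safety{\async} for a uniquely-sequentializable protocol $\phi$ directly to \safety{\sync} for the same protocol, and then invoke the decidability of \safety{\sync} established just above. The key observation is Theorem~\ref{th:runs-synch-asynch}: when $\phi$ is uniquely-sequentializable, $\poststar{\async}(\phi,\config)=\poststar{\sync}(\phi,\config)$ for every configuration $\config$. Since the statements of \safety{\async} and \safety{\sync} are, word for word, identical except for the subscript $m\in\{\async,\sync\}$ in $\poststar{m}(\phi,\config)$, this equality immediately yields that the two instances have the same answer.

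Concretely, I would argue as follows. Fix a ring property $\texttt{Ring}(y)$ and a bad-configuration formula $\texttt{Bad}(x_1,\dots,x_k)$. By definition, \safety{\async} asks whether there exist $n\in\denote{\texttt{Ring}}$ and a $(k,n)$-configuration $\config\notin\denote{\texttt{Bad}}$ with $\poststar{\async}(\phi,\config)\cap\denote{\texttt{Bad}}\neq\emptyset$; \safety{\sync} asks the same with $\poststar{\sync}$ in place of $\poststar{\async}$. Because $\phi$ is uniquely-sequentializable, Theorem~\ref{th:runs-synch-asynch} gives $\poststar{\async}(\phi,\config)=\poststar{\sync}(\phi,\config)$ for every such $\config$, so the witness $(n,\config)$ works for one problem iff it works for the other. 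Hence the \safety{\async} instance is a yes-instance iff the corresponding \safety{\sync} instance is, and the latter is decidable (indeed in \textsc{NP}) by the preceding theorem. Therefore \safety{\async} is decidable for uniquely-sequentializable $\phi$.

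I do not expect any real obstacle here: the corollary is essentially a one-line consequence of Theorem~\ref{th:runs-synch-asynch} combined with the decidability theorem for \safety{\sync}. The only point that deserves a word of care is that the reduction is uniform — i.e.\ the same $\texttt{Ring}$ and $\texttt{Bad}$ formulae and the same $\phi$ are used, so no transformation of the input is needed and the decision procedure for \safety{\sync} can be applied verbatim. One could also note, as a remark, that the procedure actually produces the same EP formula as in the synchronous case, so the \textsc{NP} bound carries over as well, though the corollary as stated only claims decidability.
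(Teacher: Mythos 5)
Your proposal is correct and matches the paper's argument exactly: the corollary is presented as an immediate consequence of Theorem~\ref{th:runs-synch-asynch}, which equates $\poststar{\async}(\phi,\config)$ with $\poststar{\sync}(\phi,\config)$ for uniquely-sequentializable $\phi$, reducing the problem verbatim to the decidable \safety{\sync}. Nothing is missing.
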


\subsection{Expressing other interesting properties}

Not only the method consisting in expressing the successor
computation in Presburger arithmetic allows us to obtain the
decidability for \safety{\sync} and \safety{\ssync}, but they also
allow us to express other interesting properties. For instance, we can compute the successor configuration in asynchronous mode for a protocol $\phi$ working over $k$ robots thanks to the formula $\asyncpost_\phi(y,p_1,\ldots,p_k,s_1,\ldots,s_k,\linebreak[0]v_1\ldots,v_k, p'_1,\ldots,p'_k,s'_1,\ldots,s'_k,v'_1,\ldots,v'_k)$, which is given by:
$$
\begin{array}{c}
\asyncpost_\phi:= \exists d_1,\dots, d_k\cdot \\
\bigvee_{i=1}^k\Bigr( \bigwedge_{j\neq i} (p'_j=p_j\wedge s'_j=s_j\wedge v'_j=v_j) \wedge \\
 s'_i=1-s_i\wedge \bigl((s_i=0\wedge v'_i=\tuple{d_1,\dots, d_k}\wedge \\
\configview{i}(n,p_1, \dots, p_k, d_1, \dots, d_k) \wedge p'_i=p_i) \vee\\
 (s_i=1\wedge v'_i=v_i\wedge \presburgmove_i^\phi(n,p_1,\dots, p_k, p'_i)\bigr)\Bigr)
\end{array}
$$

To prove the correctness of this formula for an asynchronous configuration $(\config,\internalState, \Views)$ with $k$ robots we make the analogy between the flags $\lo$ and $\mo$ and the naturals $0$ and $1$, which means that in the definition of the vector $\internalState \in 
  \set{\lo,\mo}^k$, we encode $\lo$ by $0$ and $\mo$ by $1$ and we then apply the definition of $\asynctrans$.

\iflong
\begin{lemma}
For all $n \in \nat$ and all $(k,n)$ asynchronous configurations$\tuple{\config,\internalState,\Views}$ and $\tuple{\config',\internalState',\Views'}$, we have $\tuple{\config,\internalState,\Views}\asynchtrans \tuple{\config',\internalState',\Views'}$ if and only if
$n,\config,\internalState,\Views,\config',\internalState',\Views' \models \asyncpost_\phi$.
\end{lemma}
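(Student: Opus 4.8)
The plan is to prove both implications at once by unfolding, step for step, the definition of the transition relation $\asynchtrans_\phi$ against the syntactic shape of $\asyncpost_\phi$, exactly as in the proof of Lemma~\ref{lem:post-to-Presburger} but now for the asynchronous encoding. The only new piece of notation to fix is the identification of internal states with naturals, $\lo \mapsto 0$ and $\mo \mapsto 1$; with this convention the conjunct $s'_i = 1 - s_i$ occurring in the $i$-th disjunct of $\asyncpost_\phi$ says precisely that the scheduled robot $R_i$ toggles its internal state, which is what every $\asynchtrans_\phi$-step does. All the semantic content is then supplied by the already established correctness lemmas for $\configview{i}$, $\viewsym$ and $\presburgmove^\phi_i$, so the argument is essentially bookkeeping.

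For the $\Rightarrow$ direction one assumes $\tuple{\config,\internalState,\Views}\asynchtrans_\phi \tuple{\config',\internalState',\Views'}$ and lets $R_i$ be the robot witnessing this step. The first item of the definition of $\asynchtrans_\phi$ yields $p'_j = p_j$, $s'_j = s_j$, $v'_j = v_j$ for all $j \neq i$, i.e.\ the frame conjunct of the $i$-th disjunct, so it remains to satisfy the inner disjunction. If $\internalState(i) = \lo$ (so $s_i = 0$), the second item gives $\config'(i) = \config(i)$, hence $p'_i = p_i$, and $\Views'(i) = \ViewR{\config}{i}$; instantiating the existential witnesses $d_1,\dots,d_k$ by the entries of $\ViewR{\config}{i}$ and invoking the lemma for $\configview{i}$ discharges the first branch. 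If $\internalState(i) = \mo$ (so $s_i = 1$), the third item gives $\Views'(i) = \Views(i)$, hence $v'_i = v_i$, and $\config'(i) = (\config(i) + m) \modulo n$ for some $m \in \mov(\phi, \Views(i))$; this satisfies the second branch, its correctness following from the lemma for $\presburgmove^\phi_i$ together with $\viewsym$ used to recover $\revert{\Views(i)}$. The $\Leftarrow$ direction is symmetric: from a model of $\asyncpost_\phi$ the outer disjunction selects an index $i$, the frame conjunct freezes the other robots, and $s'_i = 1 - s_i$ together with the inner disjunction leaves exactly the two cases which, by the same three lemmas read backwards, reproduce the ``look'' clause and the ``move'' clause of $\asynchtrans_\phi$.

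The step I expect to need the most care is the move case. Unlike in the synchronous and semi-synchronous settings, the new position of $R_i$ here is dictated by its \emph{stored} view $v_i = \Views(i)$, which by the asynchronous semantics need not coincide any longer with $\ViewR{\config}{i}$, since other robots may have moved since $R_i$'s last look. One therefore has to make sure that the move is decoded from $v_i$ rather than recomputed from $\config$: the move-deciding part of the encoding, i.e.\ the block built from $\phi(d_1,\dots,d_k)$, $\phi(d'_1,\dots,d'_k)$ and $\viewsym$, must be evaluated on $v_i$. Once this is pinned down, the verification of each of the two branches, in both directions, is the same routine case split (on whether $v_i$, respectively $\revert{v_i}$, satisfies $\phi$, or neither) as in the proof of Lemma~\ref{lem:post-to-Presburger}, and nothing further is needed.
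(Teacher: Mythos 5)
Your overall strategy --- unfold the definition of $\asynchtrans_\phi$ disjunct by disjunct against the shape of $\asyncpost_\phi$, with the identification $\lo\mapsto 0$, $\mo\mapsto 1$ --- is exactly what the paper intends (it offers no more than a one-sentence remark to this effect and leaves the check to the reader), and your treatment of the frame conjunct and of the look branch is fine. The problem is the move branch, and you have in fact put your finger on it in your last paragraph without resolving it. The formula as printed uses $\presburgmove^\phi_i(y,p_1,\dots,p_k,p'_i)$ in the $s_i=1$ branch, and by Lemma~\ref{prop:presburgmove} that subformula holds iff $p'_i=(\config(i)+m)\modulo n$ for some $m\in\mov(\phi,\ViewR{\config}{i})$ --- the move computed from the \emph{current} view, because $\presburgmove^\phi_i$ internally re-derives the distances from $p_1,\dots,p_k$ via $\configview{i}$ and never mentions $v_i$. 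The asynchronous semantics, by contrast, moves $R_i$ according to $\mov(\phi,\Views(i))$, its \emph{stored} view. When $\Views(i)\neq\ViewR{\config}{i}$ these move sets can differ (say $\mov(\phi,\Views(i))=\{+1\}$ while $\mov(\phi,\ViewR{\config}{i})=\{0\}$), and since the lemma quantifies over \emph{all} asynchronous configurations, including those with stale stored views, neither implication then holds for the formula as written. So your second paragraph's appeal to Lemma~\ref{prop:presburgmove} to discharge the second branch invokes precisely the wrong characterization.

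Your third paragraph states, correctly, that the move-deciding block ``must be evaluated on $v_i$'', but that is a prescription for how the encoding ought to read, not a step in a proof of the lemma about the encoding as given. To close the gap you must do one of two things explicitly: either (a) work with a corrected move branch that takes the stored view as input --- bind $d_1,\dots,d_k$ to the components of $v_i$, apply $\viewsym$ to obtain $\revert{v_i}$, and case-split on whether $\phi(d_1,\dots,d_k)$, $\phi(d'_1,\dots,d'_k)$, or neither holds to constrain $p'_i$ relative to $p_i$ --- and prove the lemma for that formula, where your routine case split does go through; or (b) observe that for the formula as printed the lemma fails, exhibiting a stale-view counterexample. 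As it stands, the proof asserts the conclusion appropriate to (a) while reasoning about the formula of (b).
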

\fi

One can also express the fact that one configuration is a predecessor of the other in a straightforward way.

It is as well possible to check whether a protocol $\phi$ over $k$
robots fits into the hypothesis of Corollary
\ref{cor:unique-sequentializable}, i.e. whether it is
uniquely-sequentializable. We define the formula
$\texttt{UniqSeq}_\phi$ that is satisfiable if and only if $\phi$ is uniquely-sequentializable.
$$
\begin{array}{c}
\texttt{UniqSeq}_\phi :=\neg \exists y.p_1,\dots,p_k,p'_1,\dots,p'_k\cdot\\
\bigvee_{i\neq j, 1\leq i,j\leq k}
(\presburgmove^\phi_i(n,p_1,\dots,p_k,p'_i)\wedge  \presburgmove^\phi_j(n,p_1,\dots,p_k,p'_j)\\
\wedge p'_i\neq p_i\wedge p'_j\neq p_j.
\end{array}
$$

Hence we deduce the following statement.
\begin{theorem}\label{th:unique_decidable}
Checking whether a protocol $\phi$ is uniquely-sequentializable \iflong or pending-bounded \fi is decidable. 
\end{theorem}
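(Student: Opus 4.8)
The plan is to use the closed Presburger sentence $\texttt{UniqSeq}_\phi$ displayed just above the theorem and to show two things: (i) that deciding whether it holds reduces to an Existential Presburger satisfiability check, hence is effective, and (ii) that it holds exactly when $\phi$ is uniquely-sequentializable.

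For (i) I would write $\texttt{UniqSeq}_\phi = \neg\psi_\phi$, where $\psi_\phi$ is the sentence $\exists y, p_1,\dots,p_k,p'_1,\dots,p'_k\cdot\bigl(y\geq k \wedge \bigwedge_{\ell} p_\ell<y \wedge \bigvee_{i\neq j}(\presburgmove^\phi_i(y,\vec p,p'_i)\wedge\presburgmove^\phi_j(y,\vec p,p'_j)\wedge p'_i\neq p_i\wedge p'_j\neq p_j)\bigr)$; I would add the conjuncts $y\geq k$ and $p_\ell<y$, which the displayed formula elides, so that $y$ ranges over admissible ring sizes and $\vec p$ over genuine $(k,y)$-configurations. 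Then $\psi_\phi$ is an EP formula: $\configview{i}$ and $\viewsym$ use only existential quantifiers, and $\phi$ (being QFP) together with its negation is quantifier free, so $\presburgmove^\phi_i$ is EP; a positive Boolean combination of EP formulas under a block of existential quantifiers remains EP. Since EP satisfiability is decidable (indeed \textsc{NP}-complete, as recalled at the start of this section), one can decide whether $\psi_\phi$ is satisfiable, and $\texttt{UniqSeq}_\phi$ holds iff it is not.

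For (ii) the crux is the faithfulness of $\presburgmove^\phi_i$ established when it was introduced: for a ring size $n$, a $(k,n)$-configuration $\config$ and $p'\in[0,n-1]$, one has $n,\config,p'\models\presburgmove^\phi_i$ iff $p'=(\config(i)+m)\modulo n$ for some $m\in\mov(\phi,\ViewR{\config}{i})$. From this I would derive the key equivalence: $\mov(\phi,\ViewR{\config}{i})\neq\{0\}$ iff there is some $p'\neq\config(i)$ with $n,\config,p'\models\presburgmove^\phi_i$. Indeed, if $\mov(\phi,\ViewR{\config}{i})=\{0\}$ the only admissible $p'$ is $\config(i)$, whereas if it is $\{+1\}$, $\{-1\}$ or $\{-1,+1\}$ then (using $n\geq k\geq 2$ in the non-degenerate case) $(\config(i)\pm 1)\modulo n\neq\config(i)$ gives the required witness. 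Plugging this into $\psi_\phi$: a model of $\psi_\phi$ is exactly a ring size $n$, a $(k,n)$-configuration $\config$, and two indices $i\neq j$ with both $\mov(\phi,\ViewR{\config}{i})\neq\{0\}$ and $\mov(\phi,\ViewR{\config}{j})\neq\{0\}$, i.e. a certificate that $\phi$ is not uniquely-sequentializable; conversely every such certificate yields a model. Hence $\phi$ is uniquely-sequentializable iff $\psi_\phi$ is unsatisfiable, which by (i) is decidable.

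I expect no substantial obstacle: the argument is essentially a wrapper around the already-verified correctness of $\presburgmove^\phi_i$, $\configview{i}$ and $\viewsym$. The two points that need a little care are the degenerate instances (for $k=1$ the disjunction $\bigvee_{i\neq j}$ is empty, so $\psi_\phi$ is unsatisfiable and a single-robot protocol is trivially uniquely-sequentializable, so the equivalence still goes through; for $k\geq 2$ every ring has size $n\geq 2$, so ``deciding to move'' genuinely forces $p'\neq p_i$), and being precise about quantifier complexity, since $\texttt{UniqSeq}_\phi$ is a $\Pi_1$ Presburger sentence and the construction therefore yields decidability together with a co-\textsc{NP} bound rather than membership in \textsc{NP}.
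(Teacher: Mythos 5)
Your proposal is correct and follows the same route as the paper, which simply exhibits $\texttt{UniqSeq}_\phi$ and reduces the question to (non-)satisfiability of an EP sentence via the already-established correctness of $\presburgmove^\phi_i$; you merely make explicit the details the paper leaves implicit (the guards $y\geq k$ and $p_\ell<y$ restricting to genuine $(k,n)$-configurations, the degenerate case $k=1$, and the observation that a nonzero move yields a witness $p'\neq p_i$ once $n\geq 2$). Your remark that the construction gives a co-\textsc{NP} bound rather than \textsc{NP} is also accurate and consistent with the theorem claiming only decidability.
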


\subsection{Applications}

We have considered the \emph{exclusive perpetual exploration} algorithms proposed by Blin \emph{et al.}~\cite{BlinMPT10}, and generated the formulae to check that no collision are encountered for different cases. We have used the SMT solver Z3 \cite{z3-solver} to verify whether the generated formulae were satisfiable or not. We have been able to prove that, in the synchronous case, the algorithm using a minimum of $3$ robots was safe for any ring of size greater than $10$  and changing a rule of the algorithm has allowed us to prove that we could effectively detect bugs in the Algorithm. In fact, in this buggy case, the SMT solver provides a configuration leading to a collision after one step. We have then looked for absence of collision for the algorithms using a maximum number of robots, always in the synchronous case. Here, the verification was not parametric as the size of the ring is fixed and depends on the number of robots (it is exactly $5$ plus the number of robots). The objective was to see whether our approach could be applied for a large number of robots. In that case, we have been able to prove that the algorithm for $6$ robots was safe, but we found some bugs for $5$, $7$, $8$, $9$, $10$ and $12$ robots. It was stated in~\cite{BlinMPT10} that the algorithm was not working for $5$ robots however the other cases are new bugs. Note however that for $11$ robots, the SMT solver Z3 was taking more than 10 minutes and we did not let him finish its computation. We observe that  when there is a bug, the SMT solver goes quite fast to generate a bad configuration but it takes much more time when the algorithm is correct, as for instance with $6$ robots. The files containing the SMT formulae are all available on the webpage \cite{experiments} in the SMTLIB format.

\section{Conclusion}

We have addressed two main problems concerning formal verification of protocols of mobile robots, and answered the open questions regarding decidability
of the verification of such protocols, when the size of the ring is given as a parameter of the problem. 
Note that in such algorithms, robots can start in any position on the ring. Simple modifications of the proofs in this paper allow to obtain undecidability of both
the reachability and the safety problem in any semantics, when the starting configuration of the robots is given. Hence we give a precise view of what can
be achieved in the automated verification of protocols for robots in the parameterized setting, and provide a means of partially verifying them. 
Of course, to fully demonstrate the correctness of a tentative protocol, more properties are required (like, all nodes are visited infinitely often) that are not handled
 with our approach. Nevertheless, as intermediate lemmas (for arbitrary $n$) are verified, the whole process of proof writing is both eased and strengthened.

An application of Corollary~\ref{cor:unique-sequentializable} and Theorem~\ref{th:unique_decidable}  deals with robot program synthesis as depicted in the approach of Bonnet \emph{et al.}~\cite{bonnet14wssr}. To simplify computations and save memory when synthesizing mobile robot protocols, their algorithm only generates uniquely-sequentializable protocols (for a given $k$ and $n$). Now, given a protocol description for a given $n$, it becomes possible to check whether this protocol remains uniquely-sequentializable for any $n$. Afterwards, regular safety properties can be devised for this tentative protocol, for all models of computation (that is, FSYNC, SSYNC, and ASYNC). Protocol design is then driven by the availability of a uniquely-serializable solution, a serious asset for writing handwritten proofs (for the properties that cannot be automated). 

Last, we would like to mention possible applications of our approach for problems whose core properties seem related to reachability only. One such problem is exploration with stop~\cite{berard16dc}: robots have to explore and visit every node in a network, then stop moving forever, assuming that all robots initial positions are distinct. All of the approaches published for this problem make use of \emph{towers}, that is, locations that are occupied by at least two robots, in order to distinguish the various phases of the exploration process (initially, as all occupied nodes are distinct, there are no towers). Our approach still makes it possible to check if the number of created towers remains acceptable (that is below some constant, typically 2 per block of robots that are equally spaced) from any given configuration in the algorithm, for any ring size $n$. As before, such automatically obtained lemmas are very useful when writing the full correctness proof.

\iflong
\begin{definition}
A protocol $\phi$ is \emph{pending-bounded} (? name ok?) if for all configuration $\config$ such that $|\actif{\config}|>1$\comment[ts]{rajouter la def : act($\config$) = tous
les robots tq $\mov(\phi,\ViewR{\config}{i})\neq{0}$}, for all configuration $\config'\in\post_{\ssync}(\config)\setminus\post_\sync(\config)$, let $J=\{i\in\robots\mid\config'(i)\neq
\config(i)\}$, then $\actif{\config'}\subseteq \actif{\config}\setminus J$ and for all $i\in\actif{\config'}$, $\mov(\phi,\ViewR{\config'}{i})=\mov(\phi,\ViewR{\config}{i})$.
\end{definition}

\begin{theorem}
When the protocol $\phi$ is pending-bounded, \safety{\async} is decidable.\comment[ts]{c'est mon intuition mais je n'ai pas encore de preuve qui me convainque. Si ca se trouve c'est faux}
\end{theorem}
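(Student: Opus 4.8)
The plan is to reduce \safety{\async} to \safety{\ssync}, which is already known to be decidable, by proving that for every pending-bounded protocol $\phi$ and every configuration $\config$ one has $\poststar{\async}(\phi,\config)=\poststar{\ssync}(\phi,\config)$. This mirrors the route taken for uniquely-sequentializable protocols: Theorem~\ref{th:runs-synch-asynch} collapses the three semantics and Corollary~\ref{cor:unique-sequentializable} then derives decidability; here pending-boundedness is the weaker hypothesis that should still force the asynchronous reachability set to coincide with the semi-synchronous one. Once the equality is established, \safety{\async} and \safety{\ssync} become the same problem, and decidability follows from the decidability (indeed \textsc{NP} membership) of \safety{\ssync}.

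The inclusion $\poststar{\ssync}(\phi,\config)\subseteq\poststar{\async}(\phi,\config)$ holds for every protocol (as already observed in the proof of Theorem~\ref{th:runs-synch-asynch}), so the work lies in $\poststar{\async}(\phi,\config)\subseteq\poststar{\ssync}(\phi,\config)$. The central idea is to simulate an asynchronous run by \emph{rounds}: I keep a \emph{checkpoint} configuration $\hat\config$, namely the configuration that the currently active robots looked at, and I match the moves performed during the round to a single semi-synchronous step taken from $\hat\config$, scheduling exactly the robots that actually move, each by the displacement dictated by its view at $\hat\config$. Formally, I would prove by induction on the length of the asynchronous run $\tuple{\config_0,\internalState_0,\Views_0}\asynchtrans\cdots$ the invariant $(P')$: there is a checkpoint $\hat\config\in\poststar{\ssync}(\phi,\config_0)$ and a set $M\subseteq\actif{\hat\config}$ such that (I1) the current configuration equals $\hat\config$ with every $j\in M$ displaced by a chosen $\mu_j\in\mov(\phi,\ViewR{\hat\config}{j})$, and (I2) every robot $i$ with internal state $\mo$ and $\mov(\phi,\Views(i))\neq\set{0}$ satisfies $i\in\actif{\hat\config}\setminus M$ and $\mov(\phi,\Views(i))=\mov(\phi,\ViewR{\hat\config}{i})$. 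Item (I1) immediately gives that the current configuration lies in $\post_\ssync(\phi,\hat\config)\subseteq\poststar{\ssync}(\phi,\config_0)$, so every configuration occurring along the asynchronous run is semi-synchronously reachable, which is exactly the desired inclusion.

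The maintenance of $(P')$ is where pending-boundedness does the work, and it is used through its two guarantees at once. Writing $\config$ for the current configuration, as long as the round is incomplete ($M\subsetneq\actif{\hat\config}$ and $|\actif{\hat\config}|>1$) we have $\config\in\post_\ssync(\phi,\hat\config)\setminus\post_\sync(\phi,\hat\config)$, so pending-boundedness applies with $J=M$ and yields $\actif{\config}\subseteq\actif{\hat\config}\setminus M$ together with $\mov(\phi,\ViewR{\config}{i})=\mov(\phi,\ViewR{\hat\config}{i})$ for every $i\in\actif{\config}$. The shrinkage part ensures that a robot which has already moved (hence lies in $M$) is inactive at $\config$, so when it re-looks it records a $\set{0}$-move and cannot move twice inside a round; it also ensures that any robot which looks and records a non-trivial move is a genuine member of $\actif{\hat\config}\setminus M$, re-establishing (I2) for it. The move-preservation part guarantees that the displacement it records equals its checkpoint displacement, so that when it finally moves --- possibly after it has become inactive at the current configuration --- it performs exactly its checkpoint move and merely enlarges $M$ by one element, preserving (I1). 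The checkpoint is advanced only when a round completes, i.e. when the current configuration reaches the synchronous successor $\post_\sync(\phi,\hat\config)$ and a robot then looks with a non-trivial move; at that instant (I2) forces $\actif{\hat\config}\setminus M=\emptyset$, so no active pending robot is left behind, the new configuration is synchronously (hence semi-synchronously) reachable, and the invariant is re-initialised cleanly with the new checkpoint and $M=\emptyset$.

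The main obstacle --- and the reason the naive argument fails --- is precisely the possibility, left open by the $\subseteq$ in the definition of pending-boundedness, that a robot active at the checkpoint becomes \emph{inactive} before it moves. Matching such a robot's (still non-trivial, stale) move against its \emph{current} fresh view is impossible, since that view now prescribes no move; the resolution is to match it against the \emph{checkpoint} view, where the robot was genuinely active, which is sound exactly because move-preservation keeps its recorded displacement equal to the checkpoint displacement throughout the round. Carefully discharging the remaining bookkeeping is the delicate part: one must verify the two preconditions of pending-boundedness ($|\actif{\hat\config}|>1$ and a strict-subset, i.e. non-synchronous, step) at every use, treat the degenerate case $|\actif{\hat\config}|\leq 1$ separately (where the round moves at most one robot and the argument reduces to that of Theorem~\ref{th:runs-synch-asynch}), and fix, once and for all, the direction chosen by a symmetric robot (one with $\mov(\phi,\ViewR{\hat\config}{i})=\set{-1,+1}$) so that the simulating semi-synchronous step reproduces the same displacement.
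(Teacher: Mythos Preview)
Your approach differs from the paper's, and in a noteworthy way: the paper's own argument for this statement is explicitly unfinished (it ends with the annotation ``\`a finir'') and the theorem itself carries the authors' comment that it is only an intuition and may well be false. The paper attempts a commutation argument via \emph{pseudo-runs}: it locates the first position where a robot looks while some other robot already holds a stale pending move, then tries to shift that look back to the earlier moment, invoking pending-boundedness to argue the shifted look records the same move; a termination measure for this rewriting is never supplied. You instead maintain a global invariant (checkpoint $\hat\config$ plus moved set $M$) and argue directly that every asynchronously reachable configuration lies in $\post_\ssync(\phi,\hat\config)$ for some $\hat\config\in\poststar{\ssync}(\phi,\config_0)$, collapsing $\poststar{\async}$ onto $\poststar{\ssync}$ without any rewriting. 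This is more structured and comes much closer to a complete proof than what the paper offers.

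The delicate points you identify are the right ones, and I see no obstruction to discharging them. Two small remarks that sharpen the bookkeeping: first, the checkpoint must be advanced \emph{lazily}, precisely at the first look recording a non-trivial move after $M=\actif{\hat\config}$; until then only trivial moves occur (by the old (I2), since $\actif{\hat\config}\setminus M=\emptyset$) so the configuration stays at the synchronous successor, and at the moment of advancement all robots in state $\mo$ carry trivial recorded moves, so the new (I2) is vacuous for them. Second, for the maintenance of (I2) when a robot $j$ moves and $M$ grows to $M\cup\{j\}$, note that any other robot $j'$ previously tracked by (I2) satisfies $j'\neq j$ (the mover is now in state $\lo$), hence $j'\in(\actif{\hat\config}\setminus M)\setminus\{j\}=\actif{\hat\config}\setminus(M\cup\{j\})$ automatically. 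With these in hand your induction goes through; the paper's commutation route would in addition need a well-founded measure on pseudo-runs, which your invariant-based argument sidesteps entirely.
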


\begin{proof}
Intuitively, when a protocol is pending-bounded, pending moves (i.e. moves that a robot has computed but not done yet) are not held for too long, and it is possible to bound
the length of a run leading to a bad configuration.

We first introduce the notion of pseudo-run. A sequence $\rho=\tuple{\config_0,\internalState_0,\Views_0}\tuple{\config_1,\internalState_1,\Views_1}\dots$
is a \emph{pseudo-run}, if there is an asyncrhonous run 
$\rho'=\tuple{\config'_0,\internalState'_0,\Views'_0}\tuple{\config'_1,\internalState'_1,\Views'_1}\dots$ such that 
for all $k\geq 0$, $\config'_k=\config_k$, $\internalState'_k=\internalState_k$, and, for all robot $i$, 
$\mov(\phi,\ViewR{\config'_k}{i})=\mov(\phi,\ViewR{\config_k}{i})$.

Let $\rho=\tuple{\config_0, \internalState_0,\Views_0}\tuple{\config_1,\internalState_1,\Views_1}\dots\tuple{\config_n,\internalState_n,\Views_n}$ a finite 
asynchronous run with pending moves, and assume without loss of generality that whenever a robot $i$ is scheduled to look, the protocol makes it actually move,
i.e. for all $0\leq k\leq n$, if $\internalState_k(i)=\mo$, then $\mov(\phi,\Views_k(i))\neq\{0\}$. Let $k$ be the first position such that there exists a robot $i_1$ such that $\internalState_k(i_1)=\lo$, 
$\mov(\phi,\ViewR{\config_k}{i_1})\neq\{0\}$, $\internalState_{k+1}(i_1)=\mo$ and there exists a robot $j\neq i_1$ such that $\internalState_k(j)=\mo$,
and $\Views_k(j)\neq\ViewR{\config_k}{j}$. Let $\tuple{\config_\ell,\internalState_\ell,\Views_\ell}$ be the moment where robot $j$ has taken its view (i.e.
$\internalState_\ell(j)=\lo$ and $\internalState_{\ell+1}=\mo$) and let $i_2\neq i_1$ be a robot such that $\config_\ell(i_2)\neq\config_k(i_2)$, and $\ell_2$ be the
position where it has moved: $\tuple{\config_{\ell_2},\internalState_{\ell_{2}},\Views_{\ell_2}}\tuple{\config_{\ell_2+1},\internalState_{\ell_{2}+1},\Views_{\ell_2+1}}$, 
and $\internalState_{\ell_2}(i_2)=\mo$ and $\internalState_{\ell_2+1}(i_2)=\lo$. If 
$\Views_{\ell_2}(i_2)\neq\ViewR{\config_\ell}{i_2}$, then the moment when $i_2$ took its view is the first position where there is a pending move. But we have 
assumed that $k$ was the first one. Hence, $\config_k\in\post_\ssync(\config_\ell)$. Moreover, since $\mov(\phi,\ViewR{\config_k}{j}\neq\{0\}$ but $\config_\ell(j)=
\config_k(j)$, $\config_k\notin\post_\sync(\config_\ell)$.  Let $J=\{i\mid \config_k(i)\neq\config_\ell(i)\}$. Since $\phi$ is pending-bounded, $\actif{\config_k}
\subseteq\actif{\config_\ell}\setminus J$. Hence, $i_1\in\actif{\config_\ell}\setminus J$ and $\mov(\phi,\ViewR{\config_\ell}{i_1})=\mov(\phi,\ViewR{\config_k}{i_1})$.

We define the pseudo-run $\rho'=\tuple{\config_0,\internalState_0,\Views_0}\dots\tuple{\config_\ell,\internalState_\ell,\Views_\ell}
\tuple{\config'_\ell,\internalState'_\ell,\Views'_\ell}\tuple{\config_{\ell+1},\internalState'_{\ell+1},\Views'_{\ell+1}}\dots\tuple{\config_k,\internalState'_k,\Views'_k}$
where, for all $0\leq m\leq k$, $\internalState'_m(i)=\begin{cases}\mo & \textrm{ if $i=i_1$}\\
\internalState_m(i) & \textrm{ otherwise}\end{cases}$ 
and $\Views'_m(i)=\begin{cases}\ViewR{\config'_\ell}{i} & \textrm{ if $i=i_1$}\\
\Views_m(i) & \textrm{ otherwise.}\end{cases}$ 

\comment[ts]{a finir}
\end{proof}
\fi



\iflong
\else
\newpage

\appendix

\section{Proof of Theorem~\ref{th:runs-synch-asynch}}

Let $\phi$ be a protocol and $\config$ a configuration. From the definitions, it is obvious that $\poststar{\sync}(\phi,\config)\subseteq\poststar{\ssync}(\phi,\config)
\subseteq\poststar{\async}(\phi,\config)$. 

We show that, in case $\phi$ is uniquely-sequentializable, we also have $\poststar{\async}(\phi,\config)\subseteq 
\poststar{\sync}(\phi,\config)$. We first prove the following property $(P)$ of uniquely-sequentializable runs: for all uniquely-sequentializable protocol $\phi$, for all configuration $\config_o$, for all runs
$\rho=\tuple{\config_0,\internalState_0,\Views_0}\tuple{\config_1,\internalState_1,\Views_1}\cdots\in \runs{\phi}{\async}$, for all $0\leq k<|\rho|$, for all robot $i$, if $\Views_k(i)\neq \ViewR{\config_k}{i}$ and $\internalState_k(i)=\mo$ then $\mov(\phi,\Views_k(i))=\{0\}$. 
To prove $(P)$, let $\rho\in \runs{\phi}{\async}$. We show $(P)$ by induction on $k$. For $k=0$, it is obvious since $\internalState_0(i)=\lo$ for all robot $i$.
Let $\rho= \tuple{\config_0,\internalState_0,\Views_0}\cdots\tuple{\config_k,\internalState_k,\Views_k}\tuple{\config_{k+1}, \internalState_{k+1},\Views_{k+1}}$
a prefix of a run in $\runs{\phi}{\async}$. Let $i$ be a robot such that $\internalState_{k+1}(i)=\mo$ and $\Views_{k+1}(i)\neq\ViewR{\config_{k+1}}{i}$. If 
$\internalState_k(i)=\lo$, then $\Views_{k+1}(i)=\ViewR{\config_k}{i}$ and $\config_{k}=\config_{k+1}$, which is impossible.
Hence, $\internalState_k(i)=\mo$ and $\Views_k(i)=\Views_{k+1}(i)$. Moreover, either $\Views_k(i)\neq\ViewR{\config_k}{i}$ and by induction hypothesis, 
$\mov(\phi,\Views_k(i))=\mov(\phi,\Views_{k+1}(i))=\{0\}$. Either $\Views_k(i)=\ViewR{\config_k}{i}\neq\ViewR{\config_{k+1}}{i}$. Hence there exists another
robot $j\neq i$ such that $\internalState_k(j)=\mo$ and $\internalState_{k+1}(j)=\lo$ and $\mov(\phi,\Views_k(j))\neq\{0\}$. Since $\phi$ is uniquely-sequentializable, 
$\mov(\phi,\Views_k(i))=\mov(\phi,\Views_{k+1}(i))=\{0\}$.

We show now by induction on $n$ that if $\tuple{\config,\internalState_0,\Views}\asynchtrans^n \tuple{\config',\internalState',\Views'}$ then 
$p'\in\poststar{\sync}(\phi)$. For $n=0$, it is obvious. Let now $n\in\nat$ and let $\config', \internalState',\Views', \config'', \internalState''$, and $\Views''$
such that $\tuple{\config,\internalState_0,\Views}\asynchtrans^n \tuple{\config',\internalState',\Views'}\asynchtrans\tuple{\config'',\internalState'',\Views''}$. By 
induction hypothesis, $\config'\in\poststar{\sync}(\config)$. Let $i$ be such that $\internalState''(i)\neq\internalState'(i)$. If $s'(i)=L$ then by definition, 
$\config''=\config'$ and $\config''\in\poststar{\sync}(\config)$. Otherwise, if $\ViewR{\config'}{i}\neq\Views'(i)$, then by $(P)$, $\mov(\phi,\Views'(i))=\{0\}$ and then
$\config'=\config''$. If $\ViewR{\config'}{i}=\Views'(i)$, either $\mov(\phi, \Views'(i))=\{0\}$ and $\config''=\config'$ or, $\mov(\phi,\Views'(i))\neq\{0\}$ and, since
$\phi$ is uniquely-sequentializable, for all $j\neq i$, $\mov(\phi,\ViewR{\config'}{j})=\{0\}$ and $\config'\synchtrans\config''$. Hence $\config''\in\poststar{\sync}(\config)$.


\section{Proof of Theorem~\ref{th:undecidable-safe}}
We show that $M$ halts if and only if there exits a size $n\in\nat$, a 
$(42,n)$-configuration $\config$ with $\config\notin\denote{\texttt{Bad}}$, such that 
$\poststar{\async}(\phi,\config)\cap\denote{\texttt{Bad}} \neq\emptyset$.

 For that matter, we use several claims about $\phi'$. First observe that $\phi'$ is uniquely-sequentializable.

 \begin{myclaim}\label{claim:machine-like}
Let $\config$ be a machine-like $\phi$-configuration. Then for all configuration $\config'\in\poststar{\async}(\phi',\config)=\poststar{\sync}(\phi',\config)$,
$\config'$ is machine-like. 
 \end{myclaim}

\begin{myclaim}\label{claim:if-successor-then-config}
Let $\tuple{\config,\internalState,\Views}$ be a \stable and machine-like asynchronous $\phi$-configuration corresponding to the following
word-configuration: $\block{3}$$\free^{n_1}$$\robot_{c_1}$$\free^{n'_1}$$\block{4}$$\free^{n_2}$$\robot_{c_2}$$\free^{n'_2}$$\block{5}$$\free^m$$\robot_c$$\free^n$$\block{6}$$
\free^i$$\robot_\ell$$\free^{i'}$$\block{7}$$\free^p$$\robot_{\ell'}$$
\free^r$$\block{8}$
$\robot_{tt}$$\free$$\robot_t$$\free$$\free$$\robot_g$$\free$$\robot_d$. Let $M(\config)=(\ell_i, n_1,n_2)$ a non-halting $M$-configuration. Assume that $\config$ has the following
properties.
$(i)$ if $n_1>m$ then $n\geq n_1 - m$ (if $\ell_i$ modifies $c_1$) or if $n_2>m$ then $n\geq n_2 - m$ (if $\ell_i$ modifies $c_2$),
$(ii)$ if $\ell_i$ increments $c_1$ (resp. $c_2$) then $n'_1>0$ (resp. $n'_2>0$), and 
$(iii)$ $i+i' = p+r = |L|$.
Then $\tuple{\config,\internalState,\Views}\asynchtrans^*_{\phi'} \tuple{\config',\internalState',\Views'}$ with $\config'$
\stable and machine-like, such that $M(\config)\vdash M(\config')$. 
\end{myclaim}

\begin{myclaim}\label{claim:if-config-then-successor}
Let $\tuple{\config,\internalState,\Views}$ and $\tuple{\config',\internalState',\Views'}$ be two asynchronous configurations, with $\config$ and $\config'$ \stable.
If there exists $k>0$ such that $\tuple{\config,\internalState,\Views} \asynchtrans^k_{\phi'} \tuple{\config',\internalState',\Views'}$ and for all $0<j<k$,
if $\tuple{\config,\internalState,\Views} \asynchtrans^j_{\phi'} \tuple{\config'',\internalState'',\Views''}$ we have $\config''$ not \stable, then 
$M(\config)\vdash M(\config')$.

\end{myclaim}

Assume that $M$ halts. There is then a finite bound $K\in\nat$ on the values of the counter during the run. We show hereafter an asynchronous 
$\phi$-run leading to a collision. Let $\tuple{\config_0,\internalState_0,\Views_0}$ be the initial configuration of the run, with $\config_0$ 
a $\phi$-configuration corresponding to the word-configuration 
\begin{equation*}
\block{3}\robot_{c_1}\free^{K}\block{4}\robot_{c_2}\free^{K}\block{5}\robot_c\free^K\block{6}\robot_\ell\free^{|L|}\block{7}\robot_{\ell'}\free^{|L|}\block{8}\robot_{tt}\free\robot_t\free\robot_g\free\robot_d,
\end{equation*} 
hence such that $M(\config_0)=(\ell_0,0,0)$, and $\internalState_0(i)=\lo$ 
for all $i\in\robots$. Then $\tuple{\config_0,\internalState_0,\Views_0}\asynchtrans \tuple{\config_1,\internalState_1,\Views_1}$ where
$\config_1=\config_0$, $\internalState_1(g)=\mo$, $\Views_1(g)=\ViewR{\config_0}{g}$, and $\internalState_1(i)=\internalState_0(i)$ and $\Views_1(i)=\Views_0(i)$
for all $i\in\robots\setminus\{g\}$. It is easy to check that,
at each step of a run starting from $\tuple{\config_0,\internalState_0,\Views_0}$, the conditions $(i)$, $(ii)$ and
$(iii)$ of Claim~\ref{claim:if-successor-then-config} are satisfied. Then, since $M$ halts, by applying iteratively Claim~\ref{claim:if-successor-then-config}, we have $\tuple{\config_1,\internalState_1,\Views_1}\asynchtrans^*_{\phi'} \tuple{\config_n,\internalState_n,\Views_n}$ with $M(\config_n)$ a halting configuration.
Now, the $\phi$-run continues with robot $d$ being scheduled to look, and then robots $g$
and $d$ moving, leading to a collision. Formally : 
$\tuple{\config_n,\internalState_n, \Views_n}\asynchtrans \tuple{\config,\internalState,\Views}\asynchtrans \tuple{\config',\internalState',\Views'}\asynchtrans
\tuple{\config'',\internalState'',\Views''}$, with $\config_n=\config$, $\internalState(d)=\mo$, and $\Views(d)=\ViewR{\config_n}{d}$, $\config'(d)=\config(d)-1$ since 
$\Views(d)$ memorizes the encoding of a halting configuration, with distance between $\robot_g$ and $\robot_d$ equal to 2. Last, $\config''(g)=\config(g)+1$
since $\Views(g)$ memorizes the encoding of the initial configuration, with distance between $\robot_g$ and $\robot_d$ equal to 1. Hence, $\config''(g)=\config''(d)$
and $\config''\in\poststar{as}(\phi,\config_0)\cap\denote{\texttt{Bad}}$.

Conversely assume there is an asynchronous $\phi$-run $\rho$ leading to a collision. By Claim~\ref{claim:machine-like}, if $\rho$ is in fact a $\phi'$ asynchronous
run, there is no collision (either $\rho$ starts in a non machine-like configuration and no robot moves, or it starts in a machine-like configuration and
it never reaches a collision, since machine-like configurations are collision-free by construction). Hence $\rho$ contains moves from $\robot_g$ and/or 
$\robot_d$. Assume $\robot_d$ moves in this run. Hence, there is a configuration $\tuple{\config,\internalState,\Views}$ in this run where $\robot_d$ has just
been scheduled to move and is hence such that $\internalState(\robot_d)=\mo$, $\Views(\robot_d)=\ViewR{\config}{\robot_d}$, with 
$\mov(\phi,\Views(\robot_d))\neq 0$. Hence, by definition of $\phi$, $\config$ is machine-like, \stable and $M(\config)$ is a halting $M$-configuration. Let $j\neq
\robot_g$ a robot such that $\internalState(j)=\mo$. If $\mov(\phi,V(j))\neq\{0\}$, by Proposition (P) from the proof of Theorem~\ref{th:runs-synch-asynch}
and since $\mov(\phi,V(j))=\mov(\phi',V(j))$, $V(j)=\ViewR{\config}{j}$, and it is impossible since $M(\config)$ is a halting configuration. Hence, from 
$\tuple{\config,\internalState,\Views}$ the only robots that can move are $\robot_d$ and $\robot_g$. If $\robot_g$ does not move, then all the following
configurations in the run is $\config'$ with $\config'(i)=\config(i)$ for all $i\neq\robot_d$ and $\config'(\robot_d)=\config(\robot_d)+1$. Hence, $\rho$ does not
lead to a collision.

Assume now that $\robot_g$ moves in $\rho$. Let $\rho'\cdot\tuple{\config_1,\internalState_1,\Views_1}\tuple{\config_2,\internalState_2,\Views_2}$ be a prefix of $\rho$ 
such that $\config_2(\robot_g)=\config_1(\robot_g)-1$ and $\config_2(i)=\config_1(i)$ for all $i\neq\robot_g$, and $\rho'\cdot\tuple{\config_1,\internalState_1,\Views_1}$ 
is a prefix of a $\phi'$-run. Then $\config_1$ is
a machine-like configuration and again by Proposition (P) of the proof of Theorem~\ref{th:runs-synch-asynch}, there is at most one robot $i\notin \{\robot_g,\robot_d\}$
such that $\mov(\phi',\Views_1(i))=\mov(\phi',\Views_2(i))\neq\{0\}$. Hence if $\robot_d$ does not move in $\rho$ then $\rho= \rho'\cdot\tuple{\config_1,\internalState_1,\Views_1}\tuple{\config_2,\internalState_2,\Views_2}\tuple{\config_3,\internalState_3,\Views_3}\cdot\rho''$ with $\config_3(i)=\config_2(i)$ for all
 the robots $i$ but possibly one, and then $\config_3$ being the only configuration appearing in $\rho''$. According to $\phi'$, $\config_3$ is collision free,
 so such a run could not yield a collision.
 
 Hence, we know that in $\rho$ both $\robot_g$ and $\robot_d$ moves. Moreover, from the above reasonings we deduce that in $\rho$, at some point $\robot_g$
 has been scheduled to look, then later on, $\robot_d$ has been scheduled to look, and just after, $\robot_g$ and $\robot_d$ have moved
 provoking a collision. Formally, $\rho$ is in the following form:
 $\rho=\rho'\cdot\tuple{\config_0,\internalState_0,\Views_0}\tuple{\config_1,\internalState_1,\Views_1}\cdot\rho''\cdot\tuple{\config_2,\internalState_2,\Views_2}
 \tuple{\config_3,\internalState_3,\Views_3}\tuple{\config_4,\internalState_4,\Views_4}\tuple{\config_5,\internalState_5,\Views_5}$ with $\rho',\rho''$ asynchronous $\phi'$-runs,
 $\internalState_0(\robot_g)=\lo$, $\internalState_1(\robot_g)=\mo$, and for all other robots $i$, $\internalState_0(i)=\internalState_1(i)$, and $\config_0=\config_1$, and 
 $\mov(\phi,\Views_1(\robot_g))\neq\{0\}$,
 $\config_2=\config_3$, $\internalState_2(\robot_d)=\lo$, $\internalState_3(\robot_d)=\mo$, $\mov(\phi,\Views_3(\robot_d))\neq\{0\}$, and either $\config_4(\robot_d)=\config_3(\robot_d)+1$, and for all $i\neq \robot_d$,
 $\config_3(i)=\config_4(i)$ and $\config_5(\robot_g)=\config_4(\robot_g)-1$ and for all $i\neq \robot_g$, $\config_4(i)=\config_5(i)$, or $\config_4(\robot_g)=\config_3(\robot_g)-1$,
  and for all $i\neq \robot_g$,
 $\config_3(i)=\config_4(i)$ and $\config_5(\robot_d)=\config_4(\robot_d)-1$ and for all $i\neq \robot_d$, $\config_4(i)=\config_5(i)$.
In both cases, we deduce that since $\mov(\phi,\Views_1(\robot_g))\neq\{0\}$, then $\config_0$ is a machine-like \stable configuration such that $M(\config_0)=C_0$,
and since $\mov(\phi,\Views_3(\robot_d))\neq\{0\}$, then $\config_2$ is a machine-like \stable configuration encoding a halting $M$-configuration $C_h$. Hence, from 
Claim~\ref{claim:if-config-then-successor}, since $\config_0\asynchtrans^*\config_2$ then $C_0\vdash\cdots \vdash C_h$ and $M$ halts.


\section{Proof of Theorem~\ref{th:universal-undecidable}}
The proof relies on a reduction from the repeated reachability problem of a deterministic three-counter zero-initializing bounded-strongly-cyclic machine $M$, 
which is undecidable \cite{Mayr03}. A counter machine is zero-initializing if from the initial instruction $\ell_0$ it first sets all the counters to 0. Moreover, an
infinite run is said to be \emph{space-bounded} if there is a value $K\in\nat$ such that all the values of all the counters stay below $K$ during the run. A counter
machine $M$ is
bounded-strongly-cyclic if every space-bounded infinite run starting from any configuration visits $\ell_0$ infinitely often.
The repeated reachability problem we consider is expressed as follows: given a 3-counter zero-initializing bounded-strongly-cyclic machine $M$, 
does there exist an infinite
space-bounded run of $M$?
A configuration of $M$ is encoded in the same fashion than in the proof of Theorem~\ref{th:undecidable-safe}, with 3 robots encoding the values
of the counters. 
\iflong A machine-like configuration in that case is of the form $\block{3}\free^{n_1}\robot_{c_1}\free^*\block{4}\free^{n_2}\robot_{c_2}\free^*\block{5}\free^{n_3}\robot_{c_3}\free^*\block{6}\free^m\robot_c\free^n\block{7}\free^i\robot_\ell\free^{i'}\block{8}\free^p\robot_{\ell'}
\free^r\block{9}\robot_{tt}\free\robot_t\free\free$.\fi
A transition of $M$ is simulated by the algorithm in the same way than above \emph{except that if a counter is to be
increased, the corresponding robot moves accordingly even if there is no room to do it, yielding a collision}. 
\iflong Hence, 
in any machine-like non \stable configuration, exactly one robot moves (hence the only finite runs are either ending in 
a halting configuration, or in a collision, which is not machine-like).
The algorithm that governs the robots in that case is called $\overline{\phi}$ and is a variant of $\phi'$. \comment[ts]{detail more the algorithm?}

Let $\texttt{Machine\_like}$, $\texttt{Halting}$ and $\texttt{Collision}$ be three QFP formulae, with 50 free variables, such that 
$\config\in\denote{\texttt{Machine\_like}}$ 
(respectively $\config\in\denote{\texttt{Halting}}$, $\config\in\denote{\texttt{Collision}}$) if
and only if $\config$ is machine-like (respectively iff $M(\config)$ -- the $M$-configuration encoded by $\config$ -- is a halting configuration, and iff $\config(i)=\config(j)$ for some $i,j\in\robots$).
We then let $\texttt{Goal}=\neg\texttt{Machine\_like}\vee \texttt{Halting}\vee \texttt{Collision}$ and $\texttt{Ring(y)}=y\geq 0$.

We \iflong now \else can \fi show that there is a size $n\in\denote{\texttt{Ring}}$, and a $(50,n)$-configuration $\config$ such that $\poststar{\sync}(\phi,\config)\cap\denote{\texttt{Goal}}=\emptyset$ if and only 
if there exists an infinite space-bounded run of $M$. From
Theorem~\ref{th:runs-synch-asynch}, this \iflong also provides \else provides \fi an undecidability proof for 
\reachability{\ssync} and
\reachability{\async}.
\fi

We use the following claims, reminiscent of the claims used in the proof of Theorem~\ref{th:undecidable-safe}:

\begin{myclaim}\label{cl:pos-machine-like}
Let $\config$ be a machine-like configuration. Then, for all configuration $\config'\in\poststar{\sync}(\overline{\phi},\config)$, $\config'\in
\denote{\texttt{Machine\_like}\vee\texttt{Collision}}$.
\end{myclaim}

\begin{myclaim}\label{cl:successor-then-config}
Let $\config$ be a \stable and machine-like synchronous $\overline{\phi}$-configuration corresponding to the following
word-configuration: $\block{3}\free^{n_1}\robot_{c_1}\free^{n'_1}\block{4}\free^{n_2}\robot_{c_2}\free^{n'_2}\block{5}\free^{n_3}\robot_{c_3}\free^{n'_3}\block{6}
\free^m\robot_c\free^n\block{7}\free^i\robot_\ell\free^{i'}\block{8}$\\
$\free^p\robot_{\ell'}
\free^r\block{9}\robot_{tt}\free\robot_t\free\free.$ Let $M(\config)=(\ell_i, n_1,n_2,n_3)$ be a non-halting $M$-configuration. Assume that $\config$ has the following
properties.
$(i)$ if $n_1>m$ (respectively $n-2>m$, $n-3>m$), then $n\geq n_1 - m$ (respectively $n\geq n_2-m$, $n\geq n_3-m$)(if $\ell_i$ modifies $c_1$ - respectively
$c_2$ or $c_3$), 
$(ii)$ if $\ell_i$ increments $c_1$ (resp. $c_2$ or $c_3$) then $n'_1>0$ (resp. $n'_2>0$, or $n'_3>0$), and 
$(iii)$ $i+i' = p+r = |L|$.
Then $\config\asynchtrans^*_{\overline{\phi}} \config'$ with $\config'$
\stable and machine-like, such that $M(\config)\vdash M(\config')$.
\end{myclaim}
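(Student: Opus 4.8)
The plan is to trace the algorithm $\overline{\phi}$ through one complete simulation cycle, starting from the given \stable machine-like configuration $\config$ and ending at a \stable machine-like configuration $\config'$, verifying along the way that every intermediate configuration is machine-like (so that $\overline{\phi}$ never triggers a spurious move or collision) and that the net effect on the encoded $M$-configuration is exactly one $\vdash$-step. Since $\overline{\phi}$ is uniquely-sequentializable (at every machine-like configuration at most one robot has a non-$\{0\}$ move, and a \stable non-halting machine-like configuration has the unique mover $\robot_{tt}$), by Theorem~\ref{th:runs-synch-asynch} it is enough to build a synchronous run in which exactly one robot actually moves at each step; such a run also witnesses $\asynchtrans^*$. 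Moreover, conditions $(i)$ and $(ii)$ guarantee that no increment performed during this cycle overflows, so along the run $\overline{\phi}$ behaves exactly like the \emph{faithful} simulator $\phi'$ of Theorem~\ref{th:undecidable-safe}; in fact this claim is the three-counter analogue of Claim~\ref{claim:if-successor-then-config} and is proved by the same method.

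First I would fix the instruction labelled $\ell_i$ and split on its two shapes, $\ell_i : c_j = c_j+1;\ \texttt{goto }\ell_{j'}$ and $\ell_i : \texttt{if }c_j>0 \texttt{ then }c_j=c_j-1;\ \texttt{goto }\ell_{j'};\ \texttt{else goto }\ell_{j''}$, for $j\in\{1,2,3\}$. For each shape I would spell out the canonical schedule prescribed by the bullet-point description of the algorithm together with Table~\ref{table:config}: $\robot_{tt}$ moves (\stable $\to$ \moving); then $\robot_c$ walks, one node at a time, from offset $m$ towards $\robot_{c_j}$'s block until it has re-memorized $n_j$ (this is exactly where condition $(i)$ is used, to ensure there are enough free nodes on the side towards which $\robot_c$ walks, and where one checks that $\robot_c$ never leaves its corridor between $\block{6}$ and $\block{7}$, so machine-likeness is preserved); then $\robot_t$ moves (\moving $\to$ \movingtwo); then, depending on whether $n_j=m\neq 0$, either $\robot_{c_j}$ takes one step towards its lower block (the decrement, needing $n_j>0$ to have a free node there) or it does not; then $\robot_{\ell'}$ walks to re-memorize $\ell$'s offset $i$, a walk bounded by condition $(iii)$ ($i+i'=p+r=|L|$); then $\robot_t$ moves (\movingtwo $\to$ \movingthree) and $\robot_\ell$ walks to the offset encoding the target label ($\ell_{j'}$, or $\ell_{j''}$ when the tested counter was $0$), again bounded by $(iii)$, and for an increment instruction $\robot_{c_j}$ additionally takes one step away from its block, which needs $n'_j>0$, i.e.\ condition $(ii)$; finally $\robot_{tt}$, then $\robot_t$, then $\robot_t$ again restore a \stable configuration (\movingthree $\to$ \stabili $\to$ \stabilitwo $\to$ \stable). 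At each sub-step I would check, by reading off the positions of $\robot_t$ and $\robot_{tt}$ against Table~\ref{table:config}, that the configuration is machine-like and, apart from the two endpoints, not \stable.

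It then remains to read off from the final \stable machine-like $\config'$ that $M(\config)\vdash M(\config')$: the label robot $\robot_\ell$ now sits at the offset encoding $\ell_{j'}$ (resp.\ $\ell_{j''}$ if the tested counter was zero), the counter robot $\robot_{c_j}$ has been displaced by $+1$, by $-1$, or not at all, exactly as $\ell_i$ prescribes, the other two counter robots are untouched, and $\robot_c$, $\robot_{\ell'}$ are left in positions consistent with a \stable machine-like configuration; this is precisely one $\vdash$-step of $M$ applied to $(\ell_i,n_1,n_2,n_3)$, and $\config'$ is \stable and machine-like.

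The main obstacle is the bookkeeping. One must check, sub-step by sub-step and for every instruction shape, that the walking robots $\robot_c$, $\robot_{\ell'}$, $\robot_\ell$ stay inside their designated corridors (so that machine-likeness, and hence the absence of a spurious move, is preserved), that the view-based guards of $\overline{\phi}$ fire precisely the intended robot at each sub-step --- here one uses that the distinct blocks $\block{3},\dots,\block{9}$ give every robot a pairwise-distinct view --- and that hypotheses $(i)$--$(iii)$ are exactly what prevents a walk from running out of free nodes. None of this is conceptually hard, but it is where essentially all the work lies; the zero-test (\texttt{else}) branch and the increment case --- the only place where $\overline{\phi}$ could a priori differ from $\phi'$, but does not, thanks to $(ii)$ --- deserve the most attention.
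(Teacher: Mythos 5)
Your proposal is correct and follows exactly the route the paper intends: the paper states this claim without proof (it is offered as "reminiscent of" Claim~\ref{claim:if-successor-then-config} and justified only by the informal bullet-point description of the algorithm), and your plan --- tracing one full \stable-to-\stable cycle through the phases of Table~\ref{table:config}, checking machine-likeness and the unique mover at each sub-step, and using $(i)$, $(ii)$, $(iii)$ precisely to bound the walks of $\robot_c$, $\robot_{c_j}$, $\robot_{\ell'}$, $\robot_\ell$ --- is the argument the authors are implicitly relying on. The only loose end is where exactly the increment step sits in the cycle, which the paper itself leaves unspecified ("the robots move similarly"), so your placement is as good as any.
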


\begin{myclaim}\label{cl:config-then-successor}
Let $\config, \config'$ be two \stable, machine-like configurations. If there exists some $k>0$ such that $\config\synchtrans_{\overline{\phi}}^k \config'$ and that for all $0<j<k$, 
if $\config\synchtrans_{\overline{\phi}}^j\config''$ then $\config''$ is not \stable, then $M(\config)\vdash M(\config')$.
\end{myclaim}

\begin{myclaim}\label{cl:stable}
Let $\config$ be a machine-like configuration, which is not \stable. Then, either $|\poststar{\sync}(\overline{\phi},\config)|$ is finite, or there exists 
$\config'\in\poststar{\sync}(\overline{\phi},\config)$ with $\config'$ \stable.
\end{myclaim}

If there is an infinite space-bounded run of $M$, we let $K\in\nat$ be the maximal values of all the counters during this run. Let $\config_0$ be the $\phi$-configuration
having the following word-representation:
\noindent$\block{3}\robot_{c_1}\free^{K}\block{4}\robot_{c_2}\free^{K}\block{5}\robot_{c_3}\free^K\block{6}\robot_c\free^K\block{7}\robot_\ell\free^{i'}\block{8}
\free^p\robot_{\ell'}\free^r\block{9}$\\
$\robot_{tt}\free\robot_t\free\robot_g\free\robot_d$. Hence $M(\config_0)$ is the initial configuration of $M$. It is easy to show that, for all 
$\config\in\poststar{\sync}(\overline{\phi}, \config_0)$, $\config$ satisfies conditions $(i)$, $(ii)$, and $(iii)$ of Claim~\ref{cl:successor-then-config}. Hence, by
applying iteratively Claim~\ref{cl:successor-then-config}, we can
build an infinite  $\overline{\phi}$-run $\rho=\config_0\config_1\cdots$
such that $\config_i\notin\denote{\texttt{Collision}\vee \texttt{Halting}}$ for all $i\geq 0$. Hence, by Claim~\ref{cl:pos-machine-like}, $\config_i\in\denote{\texttt{Machine\_like}}$
 for all $i\geq 0$, and $\config_i\notin\denote{\texttt{Goal}}$ for all $i\geq 0$. Moreover, for all machine-like configuration $\config$, for all $i\in\robots$, 
$\ViewR{\config}{i}\neq\ViewL{\config}{i}$, then it can have at most one successor, and $\rho$ is the only $\phi$-run starting from $\config_0$. Hence,
$\poststar{\sync}(\phi,\config_0)\cap\denote{\texttt{Goal}}=\emptyset$.

Conversely, assume that there is $n\in\nat$ and a $(k,n)$-$\phi$-configuration $\config_0$ such that 
$\poststar{\sync}(\overline{\phi},\config_0)\cap\denote{\texttt{Goal}}=\emptyset$. Hence, for all $\config\in\poststar{\sync}(\overline{\phi},\config_0)$,
$\config\in\denote{\texttt{Machine\_like}}$. Then, according to the definition
of the protocol, there is a unique synchronous $\overline{\phi}$-run starting from $\config_0$. Assume for the sake of contradiction that this run is finite. 
Let $\rho=\config_0\cdots\config_m$ be such a run. Then $\config_m\notin\denote{\texttt{Goal}}$ hence $M(\config_m)$ is not a halting configuration and $\config_m$
is collision-free. If it is not \stable, $\post_{\sync}(\overline{\phi},\config_m)\neq\emptyset$, from the definition of the protocol. If $\config_m$ is \stable, either 
$M(\config_m)=(\ell_h,n_1,n_2,n_3)$ for some $n_1,n_2,n_3 \in\nat$, but then $\config_m\in\denote{\texttt{Halting}}\subseteq\denote{\texttt{Goal}}$, which is not 
possible, or there exists a configuration $\config_{m+1}$ such that $\config_m\synchtrans\config_{m+1}$ and $\rho$ can be continued. Hence let $\rho=\config_0\config_1\cdots$ be the infinite synchronous $\overline{\phi}$-run
starting from $\config_0$. Assume that $\config_0$ is not \stable. Then, by Claim~\ref{cl:stable}, there exists $i\geq 0$, such that $\config_i$
is \stable. By definition of $\overline{\phi}$, $\config_{i+1}$ is not \stable, but by Claim~\ref{cl:stable} and Claim~\ref{cl:config-then-successor}, there exists $j>i+1$ such that
$\config_j$ is \stable and $M(\config_i)\vdash M(\config_j)$. By iterating this reasoning, we can in fact build an infinite run of $M$ starting in $M(\config_i)$. 
Let $K$ be the maximal number of positions between respectively $\block{3}$ and $\block{4}$, $\block{4}$ and $\block{5}$, $\block{5}$ and $\block{6}$
and $\block{6}$ and $\block{7}$ in $\config_0$. It is easy to see that this distance is an invariant of any $\overline{\phi}$-run. Hence, for any $k\geq 0$ such that $\config_k$
is \stable, $M(\config_k)=(\ell, n_1,n_2,n_3)$ with $n_i\leq K$ for $i\in\{1,2,3\}$, and the infinite run of $M$  is indeed space-bounded. Let $C_0\vdash C_1\cdots$
be such a run. Since $M$ is 
bounded-strongly-cyclic, there exists $i\geq 0$ such that $C_i=(\ell_0,n_1,n_2,n_3)$ with $n_i\in \nat$ for $i=\{1,2,3\}$, and since $M$ is zero-initializing, then 
there exists $j\geq i$ such that $C_j=(\ell_0,0,0,0)$. Hence,
$M$ has an infinite space-bounded run from $(\ell_0,0,0,0)$.

\section{Correction of the Presburger formulae}
\begin{lemma}\label{prop:configview}
 For all $i\in [1,k]$, we have  $n,\config,\mathbf{V} \models \configview{i}$ 
if and only if $\tuple{d_1,\dots, d_k} = \ViewR{\config}{i}$ on a ring of size $n$ .
\end{lemma}

\begin{proof}
Assume  $n,\config,\mathbf{V} \models \configview{i}$. Then, there exist $k-1$ variables, $d'_1,\dots, d'_{k-1}\in [1,n]$ such that 
$0<d'_1\leq d'_2\leq\dots\leq d'_{k-1}$. Moreover, there exists a bijection $f:[1,k-1]\rightarrow [1,k-1]$ such that, for all $j\neq i$,
$p_j = (p_i+d'_{f(j)}) \mod n$. Finally, $d_1=d'_1$ and for all $j\in [2,n-1]$, $d_j=d'_j-d'_{j-1}$ and $d_k=n-d'_{k-1}$. Hence, if we consider
the configuration $\config$ defined by $\config(j)=p_j$ for all $j\in[1,n]$, then $\tuple{d_1,\dots, d_k} = \ViewR{\config}{i}$.
Conversely, let $\config$ be a $(k,n)$-configuration and $\ViewR{\config}{i}=\tuple{d_1,\dots,d_k}$. Then, 
$n,\config,\mathbf{V} \models \configview{i}$. Indeed, by definition of the view, we let $d_i(j)\in [1,n]$ be such that 
$(\config(i) + d_i(j)) \modulo n = \config(j)$ for all $j\neq i$ and we let $i_1,\dots, i_k$ be a permutation of positions such that 
$0<d_i(i_1)\leq d_i(i_2)\leq\dots\leq d_i(i_{k-1})$. Then, for all $j\in [2,k-1]$, $d_j=d_i(i_j)-d_i(i_{j-1})$, $d_1=d_i(i_1)$ and $d_k=n-d_i(i_{k-1})$. By interpreting
the variables $d'_1,\dots, d'_k$ by respectively $d_i(i_1), \dots, d_i(i_k)$, it is easy to see that the formula is satisfied.
\end{proof}

\begin{lemma}\label{prop:viewsym}
For all $n\in\nat$, for all views $\mathbf{V},\mathbf{V}'\in [0,n]^k$,  we have $\mathbf{V},\mathbf{V'} \models \viewsym$ if and only if
 $\mathbf{V}'=\revert{\mathbf{V}}$.
\end{lemma}

\begin{proof}
Given $n\in\nat$, $d_1,\dots, d_k, d'_1,\dots, d'_k\in [0,n]$ such that $d_1\neq 0$ we have $\tuple{d'_1,\dots, d'_k}=\revert{\tuple{d_1,\dots,d_k}}$
if and only if there exists $1\leq j\leq k$ such that $d_\ell=0$ for all $j+1\leq \ell\leq k$ and $d'_1=d_j$, \dots, $d'_j=d_1$ and $d'\ell=0$ for all $j+1\leq \ell\leq k$
(by definition), if and only if $d_1,\ldots,d_k,d'_1,\ldots,d'_k\models \viewsym$.
\end{proof}

\begin{lemma}\label{prop:presburgmove}
For all $n\in\nat$ and a $(k,n)$-configurations $\config$ and $p' \in [0,n-1]$, we have $n,\config,\config' \models \presburgmove^\phi_i$ if and only if $p'=(\config(i)+m)\modulo n$ with $m\in\mov(\phi,\ViewR{\config}{i})$.
\end{lemma}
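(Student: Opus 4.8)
The plan is to unfold the definition of $\presburgmove^\phi_i$ and peel off its two auxiliary subformulas with Lemma~\ref{prop:configview} and Lemma~\ref{prop:viewsym}, reducing the statement to a purely semantic claim about the two views $\ViewR{\config}{i}$ and $\ViewL{\config}{i}$, which can then be compared term-by-term with the definition of $\mov$. First I would fix $n$, $\config$ and $R_i$, write $p_j=\config(j)$, and observe that, used as black boxes, the two lemmas pin down the witnesses of the existential quantifiers: the only tuple $\tuple{d_1,\dots,d_k}$ satisfying $\configview{i}(n,p_1,\dots,p_k,d_1,\dots,d_k)$ is $\ViewR{\config}{i}$ (Lemma~\ref{prop:configview}), and, for that tuple, the only $\tuple{d'_1,\dots,d'_k}$ satisfying $\viewsym(d_1,\dots,d_k,d'_1,\dots,d'_k)$ is $\revert{\ViewR{\config}{i}}=\ViewL{\config}{i}$ (Lemma~\ref{prop:viewsym}). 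Hence $\phi(d_1,\dots,d_k)$ holds iff $\ViewR{\config}{i}\models\phi$, and $\phi(d'_1,\dots,d'_k)$ holds iff $\ViewL{\config}{i}\models\phi$.

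Second, I would dispatch the modular arithmetic in the three disjuncts. Since $p_i=\config(i)\in[0,n-1]$, the subformula $(p_i<y-1\wedge p'=p_i+1)\vee(p_i=y-1\wedge p'=0)$ is equivalent to $p'=(\config(i)+1)\modulo n$, the subformula $(p_i>0\wedge p'=p_i-1)\vee(p_i=0\wedge p'=y-1)$ is equivalent to $p'=(\config(i)-1)\modulo n$, and $p'=p_i$ is $p'=(\config(i)+0)\modulo n$. Substituting this back, $n,\config,p'\models\presburgmove^\phi_i$ holds iff one of: (a) $\ViewR{\config}{i}\models\phi$ and $p'=(\config(i)+1)\modulo n$; (b) $\ViewL{\config}{i}\models\phi$ and $p'=(\config(i)-1)\modulo n$; (c) $\ViewR{\config}{i}\not\models\phi$, $\ViewL{\config}{i}\not\models\phi$, and $p'=\config(i)$.

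Third, I would match (a)--(c) against the definition of $\mov$. Using the four-way definition of $\mov(\phi,\mathbf{V})$ together with the constraint of Definition~\ref{def:protocol} (when $\ViewR{\config}{i}\neq\ViewL{\config}{i}$ at most one of the two views satisfies $\phi$, so the four cases are well defined), a short case analysis on whether $\ViewR{\config}{i}$ equals $\ViewL{\config}{i}$ and which of them satisfies $\phi$ yields the equivalences $+1\in\mov(\phi,\ViewR{\config}{i})\iff\ViewR{\config}{i}\models\phi$, $-1\in\mov(\phi,\ViewR{\config}{i})\iff\ViewL{\config}{i}=\revert{\ViewR{\config}{i}}\models\phi$, and $0\in\mov(\phi,\ViewR{\config}{i})\iff\ViewR{\config}{i}\not\models\phi\text{ and }\ViewL{\config}{i}\not\models\phi$. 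Since $\mov(\phi,\ViewR{\config}{i})$ is always one of $\set{+1}$, $\set{-1}$, $\set{-1,+1}$, $\set{0}$, these three equivalences show that ``there is $m\in\mov(\phi,\ViewR{\config}{i})$ with $p'=(\config(i)+m)\modulo n$'' is exactly the disjunction (a)$\vee$(b)$\vee$(c); combining with the second step closes the argument.

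The proof is essentially bookkeeping, and no step is genuinely hard; the only delicate point is the three/four-way matching between $\presburgmove^\phi_i$ and $\mov$ — in particular the disoriented case $\ViewR{\config}{i}=\ViewL{\config}{i}$, where $\viewsym$ forces $\tuple{d'_1,\dots,d'_k}=\tuple{d_1,\dots,d_k}$, so disjuncts (a) and (b) become \emph{both} satisfiable, one for each admissible target position, faithfully mirroring the nondeterministic value $\mov=\set{-1,+1}$ — together with checking the boundary cases of the modular successor and predecessor at positions $0$ and $n-1$.
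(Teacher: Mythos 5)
Your proof is correct and follows essentially the same route as the paper's: unfold $\presburgmove^\phi_i$, use Lemmas~\ref{prop:configview} and~\ref{prop:viewsym} to identify the existential witnesses with $\ViewR{\config}{i}$ and $\ViewL{\config}{i}$, and match the three disjuncts against the definition of $\mov$. You are in fact somewhat more explicit than the paper about the modular-arithmetic boundary cases and the disoriented case $\ViewR{\config}{i}=\ViewL{\config}{i}$, which the paper's one-line chain of equivalences leaves implicit.
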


\begin{proof}
We have $n,\config,\config' \models \presburgmove^\phi_i$ if and only if there exist $d_1,\dots, d_k, d'_1, \dots, d'_k\in [0,n]$ such that
$\tuple{d_1,\dots, d_k} = \ViewR{\config}{i}$ (by Lemma~\ref{prop:configview}) and $\tuple{d'_1,\dots,d'_k}=\revert{\tuple{d_1,\dots,d_k}}=\ViewL{\config}{i}$
(by Lemma~\ref{prop:viewsym}) and either $(i)$ $\ViewR{\config}{i}\models\phi$ and $p'=(p_i+1) \modulo n$, or $(ii)$ $\ViewL{\config}{i}\models\phi$ and
$p'=(p_i-1)\modulo n$, or $(iii)$ $\ViewR{\config}{i}\not\models\phi$, $\ViewL{\config}{i}\not\models\phi$ and $p'=p_i$, if and only if 
either $(i)$ $1\in \mov(\phi, \ViewR{\config}{i})$ and $p'=(p_i+1)\modulo n$ or $(ii)$ $-1\in\mov(\phi,\ViewR{\config}{i})$ and $p'=(p_i-1)\modulo n$ or 
$(iii)$ $\mov(\phi,\ViewR{\config}{i})=\{0\}$ and $p'=p_i$ if and only if $p'=(p_i+m)\modulo n$ with $m\in\mov(\phi,\ViewR{\config}{i})$.
\end{proof}


\end{document}